%

\documentclass[10pt]{article}

\usepackage{amsmath,amssymb,amsthm,bbm,color,enumerate,mathrsfs,setspace}

\newcommand{\bbN}{\mathbb{N}}
\newcommand{\bbE}{\mathbb{E}}
\newcommand{\bbP}{\mathbb{P}}

\newcommand{\bbD}{\mathbb{D}}

\newcommand{\abs}[1]{\left| #1 \right|}
\newcommand{\norm}[1]{\left\| #1 \right\|}
\newcommand{\BigO}[1]{\mathcal{O}{\textstyle\left( #1\right)}}

\newcommand{\Th}[1]{$#1$\textsuperscript{th}}

\newcommand{\defn}{:=}

\newcommand{\bx}{\mathbf{x}}
\newcommand{\by}{\mathbf{y}}

\newcommand{\bG}{\mathbf{G}}
\newcommand{\bH}{\mathbf{H}}

\newcommand{\bX}{\mathbf{X}}

\newcommand{\bmu}{\boldsymbol \mu}

\newcommand{\cT}{\mathcal{T}}

\newcommand{\cG}{\mathcal{G}}


\newtheorem{lem}{Lemma}
\newtheorem{prop}{Proposition}
\newtheorem{cor}{Corollary}
\theoremstyle{remark} 
	\newtheorem{rem}{Remark}
	
\theoremstyle{definition} 
	 
	\newtheorem{exmp}{Example}

\newcommand{\ie}{\textit{i.e.,}\,}
\newcommand{\eg}{\textit{e.g.,}\,}

\newcommand{\nb}{\textit{n.b.,}\, }

\usepackage{amsmath, amssymb,amsthm,fullpage}
\usepackage{mathrsfs}
\usepackage{hyperref}

\pagestyle{plain}

\usepackage{multirow}
\usepackage{graphicx}

\usepackage{cite}

\usepackage{color} 


\topmargin 0.0cm
\oddsidemargin 0.5cm
\evensidemargin 0.5cm
\textwidth 16cm 
\textheight 21cm

\usepackage[labelfont=bf,labelsep=period,justification=raggedright]{caption}

\bibliographystyle{plos2009}

\makeatletter
\renewcommand{\@biblabel}[1]{\quad#1.}
\makeatother

\date{}

\pagestyle{myheadings}


\newcommand{\eqnref}[1]{Equation~\ref{eqn:#1}}
\newcommand{\eqnlabel}[1]{\label{eqn:#1}}

\newcommand{\figref}[1]{Figure~\ref{fig:#1}}

\newcommand{\tabnum}[1]{\ref{tab:#1}}
\newcommand{\tabref}[1]{Table~\tabnum{#1}}

\usepackage{array}










\begin{document}

\begin{flushleft}
{\Large
\textbf{Linking statistical and ecological theory: Hubbell's unified neutral theory of biodiversity as a hierarchical Dirichlet process}
}
\\
Keith Harris$^{1}$,
Todd L Parsons$^{2}$,
Umer Z Ijaz$^{3}$,
Leo Lahti$^{4}$,
Ian Holmes$^{5}$,
Christopher Quince$^{6,*}$
\\
\bf{1} School of Mathematics and Statistics, University of Sheffield, Sheffield, UK
\\
\bf{2} Laboratoire de Probabilit\'{e}s et Mod\`{e}les Al\'{e}atoires, CNRS UMR 7599, UPMC Univ Paris 06, Paris, France
\\
\bf{3} Infrastructure and Environment Research Division, School of Engineering, University of Glasgow, Glasgow, G12 8LT, UK
\\
\bf{4} Department of Veterinary Biosciences, University of Helsinki, Helsinki, Finland \& Laboratory of Microbiology, Wageningen University, Wageningen, Netherlands
\\
\bf{5} Department of Bioengineering, University of California, Berkeley, California, USA
\\
\bf{6} Warwick Medical School, University of Warwick, Coventry, CV4 7AL, UK
\\
$\ast$ E-mail: c.quince@warwick.ac.uk
\end{flushleft}

\section*{Abstract}

Neutral models which assume ecological equivalence between species provide null models for community assembly. 
In Hubbell's Unified Neutral Theory of Biodiversity (UNTB), 
many local communities are connected to a single metacommunity through differing immigration rates. 
Our ability to fit the full multi-site UNTB has hitherto been limited by the lack of a computationally tractable 
and accurate algorithm. We show that a large class of neutral models with this mainland-island structure 
but differing local community dynamics converge in the large population limit to the hierarchical Dirichlet 
process. Using this approximation we developed an efficient Bayesian 
fitting strategy for the multi-site UNTB. 
We can also use this approach to distinguish between neutral local community assembly given a non-neutral metacommunity distribution 
and the full UNTB where the metacommunity too assembles neutrally.
We applied this fitting strategy to both tropical trees and a data set comprising 
570,851 sequences from 278 human gut microbiomes. The tropical tree data set was consistent with the UNTB but for the human gut neutrality was rejected at the whole community level. 
However, when we applied the algorithm to gut microbial species within the same taxon at different levels of taxonomic resolution, we found that species abundances within some genera were almost consistent with local community assembly. This was not true at higher taxonomic ranks. 
This suggests that the gut microbiota is more strongly niche constrained than macroscopic organisms, with different groups adopting different functional roles, but within those groups diversity may at least partially be maintained by neutrality.
We also observed a negative correlation between body mass index and immigration rates within the family Ruminococcaceae.
This provides a novel interpretation of the impact of obesity on the human microbiome as a relative increase in the 
importance of local growth versus external immigration within this key group of carbohydrate degrading organisms.

\section*{Introduction}

A key question in ecology is what maintains species diversity in communities. The classical view is that every species 
occupies a distinct niche and the species observed in a community are then determined by the niches present. 
The niche itself is viewed as an $n$-dimensional hyper-volume in a space of abiotic and biotic environmental variables 
\cite{hutchinson57}. If two species 
occupy the same niche then one will outcompete the other \cite{hardin60}. This viewpoint has been challenged by neutral theory.
Neutral models of species abundance combine stochastic population dynamics with the assumption of ecological equivalence between 
species, formally defined as equivalent forms for all {\it per capita} demographic rates, e.g., birth and death.
Ecological equivalence is assumed to operate between species with a similar functional role deriving from the same broad functional 
 group or guild of species \cite{simberloff91}.
The result of the neutrality assumption is that rather than one species always outcompeting another the abundances within 
the neutral guild fluctuate.
The diversity at a single site is then generated as a balance between the immigration of new species and local extinction \cite{caswell76}.
 In Hubbell's Unified Neutral Theory of Biodiversity (UNTB) these ideas were extended to multiple sites \cite{hubbell01} 
using a mainland-island structure \cite{macarthur67}. The local communities experiencing neutral dynamics are coupled through migration 
to a metacommunity where neutral dynamics are again assumed but diversity is generated through speciation on a longer time-scale. 

The relative importance of niche versus neutral processes in macroscopic organisms is controversial. The first attempts to address this 
question fitted the UNTB to species abundance distributions (SADs) from a single site and compared model fit to non-neutral alternatives, 
e.g., log-normal or log-series \cite{mcgill03}. 
The development of Etienne's genealogical approach, which allowed the calculation of an exact sampling formula or likelihood for a single-site UNTB model \cite{etienne04},   
was key in allowing the UNTB to be fit efficiently to abundance data \cite{etienne04,etienne05}. Maximising this likelihood 
with respect to the model parameters generates a model fit. 
However, single samples do not provide enough information to reliably fit the UNTB \cite{rosindell12} and it has been demonstrated that niche models can generate identical SADs to a single-site neutral model \cite{chisholm10}. A more powerful test of the UNTB is to fit a data set from multiple sites simultaneously assuming the same metacommunity but different immigration rates. The genealogical approach has been generalised to multiple sites with identical migration rates \cite{etienne07} but for the fully general case of multiple sites with different immigration rates the resulting sampling formula is computationally intractable for more than a few sites \cite{etienne09}. Instead, an approximate two-stage method has to be used \cite{munoz07,jabot08,Etienne09a}. 

If the importance of neutrality is still an open question for macroscopic organisms then it is even more pertinent for microbes. It is only 
the recent coupling of molecular methods for characterising species identity with next generation sequencing 
that has allowed the efficient determination of microbial community structure {\it in situ} \cite{hamady08}. 
However, we are now regularly generating data sets comprising hundreds of sites and tens of thousands of sampled individuals per site \cite{turnbaugh09}. In order to accurately fit the  multi-site UNTB to these data we developed an alternative to the likelihood based genealogical approach. We are able to show that the UNTB is, in the limit of large population sizes, equivalent to a model from machine learning, the hierarchical Dirichlet process (HDP) \cite{Teh2006}. Moreover, our result is more general than the UNTB, as this limit applies irrespective of the exact local community dynamics, provided species are neutral and the total community size is fixed. We can use this result to adapt the existing Bayesian fitting strategy for the HDP to the problem of fitting the UNTB \cite{jabot08}.

Using this strategy it is possible to efficiently fit even the largest data sets in a reasonable amount of time with the added advantage of generating full posterior distributions over the parameters rather than just a maximum likelihood prediction. 
This method also reconstructs the metacommunity distribution enabling us to separate the key question of whether a community appears neutral into two parts. We can generate samples from the full neutral model with our fitted parameters and, as in \cite{etienne07}, compare their likelihood with that of the observed samples to test for neutrality, but we can also generate samples given the observed metacommunity and, hence, test for neutral local community assembly alone.

We will validate this method by applying it to twenty-nine tropical tree plots from Panama \cite{condit2002}. We will then use it to determine the extent to which gut microbial communities are neutrally assembled \cite{turnbaugh09}. The human gut is not a closed system, being constantly subjected to immigration events mainly through the diet, hence a metacommunity description is appropriate. However, it is not obvious for microbes at what level we would expect neutrality to operate, as different types of microorganisms perform very different roles. Indeed, there is evidence of clustering of gut microbiota into different enterotypes \cite{arumugam11,holmes12,ding14}, which implies non-neutral structuring at the whole 
community level. We will address this issue by subdividing the species according to their taxa at multiple taxonomic levels. There is increasing evidence of ecologial coherence at higher taxonomic levels for bacteria, with particular taxonomic groupings correlating with broad traits and metabolic functions \cite{fierer07,philippot09,philippot10}. Thus, even though within a species there may be variability in gene content and the precise niche occupied by strains, e.g. commensal and pathogenic {\it Escherichia coli} \cite{rasko08}, at higher levels an ecological signal is preserved \cite{fierer07}. We will test whether this signal leads to species within taxa being distributed neutrally in the human gut. 

This is the first time that the full multi-site neutral model has been fit to microbial community data. Earlier studies fitted the proportion of sites that a given species was observed in as a function of its abundance in the metacommunity \cite{sloan2006}. However, this approach models local neutral community assembly only, cannot allow for different immigration rates between sites and does not utilise the actual abundances of species, only their presence or absence. Similarly, although \cite{woodcock2006} showed that the bacterial taxa-abundance distributions in tree-holes scaled across sites in a way that was consistent with the neutral model, they were not fitting to the actual species abundances directly, but rather the shapes of those distributions in individual sites. Recently, an attempt was made to determine the degree of neutrality in human gut microbiota but again by fitting the single-site distribution only \cite{jeraldo2012}. By testing for neutrality at both the local and metacommunity level, and by resolving to different taxonomic groups, we will address the question of what is structuring the newly revealed microbial diversity of the human gut.

\section*{Methods}

\subsection*{Hubbell's Unified Neutral Theory of Biodiversity (UNTB)}

The UNTB separates the dynamics in the metacommunity from that in the local communities but both are neutral. Assume that there are 
$M$ local communities indexed $i = 1,\ldots,M$ each with a fixed number of $N_i$ individuals. Each iteration 
of the local community dynamics for site $i$ comprises two steps: choose an individual at random and remove it; with probability 
$m_i$ migration occurs and this individual is replaced by a randomly chosen member of the metacommunity or with probability $1 - m_i$ 
it is replaced by a randomly chosen member of local community $i$.
A generation in the model consists of replacing each 
individual on average once which will require $N_i$ iterations of these two steps. 
These dynamics will generate a stochastic Markov chain for the abundance of each species \cite{mckane04}, 
which given a sufficiently long time will 
converge to a stationary, or time-invariant, distribution. In the UNTB it is assumed that the local communities are at this 
stationary state which we will denote as a vector for each site $\bar \pi_i$, 
with elements $(\pi_{i,1},\ldots, \pi_{i,S})$ giving the probability of observing a particular species at site $i$. 
The two parameters $m_i$ and $N_i$ can be conveniently replaced by a single immigration rate $I_i = \frac{m_i}{1 - m_i}(N_i - 1)$ \cite{etienne05}. 
The parameter $I_i$ controls the coupling of the local community to the metacommunity. As $I_i \rightarrow \infty $, 
the local community stationary distribution will approach the metacommunity distribution and the number of species at that site will increase, while as $I_i \rightarrow 0$, the local community will become dominated by a single species.

In the metacommunity equivalent neutral dynamics operate but with new species generated through speciation with a probability $\nu$. 
This occurs on a longer time-scale than the local community dynamics so that the metacommunity can be assumed fixed relative to the local communities. 
Just as in the local communities where $I_i$ is preferred to $m_i$, it is more convenient to use the speciation rate (or fundamental biodiversity 
number) to parameterise the metacommunity distribution, $\theta = \frac{\nu}{1 - \nu}(N - 1)$ \cite{etienne05}, where $N$ is the fixed number of 
individuals in the metacommunity. The parameter $\theta$ can be viewed as the rate at which new individuals are appearing in the metacommunity as a result of speciation. 
As it increases, the total number of species in the metacommunity also increases and the species abundance distribution becomes increasingly skewed to rare individuals.
The final component of the UNTB is to realise that the observed data, the $M \times S$ frequency matrix $\mathbf{X}$ with elements $x_{ij}$ giving the number of times 
species $j$ is observed at site $i$, is a sample from the local community \cite{etienne05}. The simplest approach is to assume sampling with replacement so that the multinomial distribution 
describes the vector of observations at a given site:
\begin{equation}
\bar X_i \sim MN(J_i, \bar \pi_i),
\end{equation}
where $J_i = \sum_{j=1}^S x_{ij}$ is the sample size.

\subsection*{The HDP limit to neutral metacommunities}

In the SI Appendix we show that a wide class of neutral models including the UNTB converge in the large population limit to the same 
hierarchical Dirichlet process (HDP) approximation. This approximation captures the essential hypothesis of the UNTB -- 
namely neutrality, finite populations, and multiple panmictic geographically isolated populations linked by rare migration -- 
whilst being robust to the specific details of the local community dynamics.  Analogous to the relationship between Kingman's coalescent, 
Kimura's diffusion, and the Wright-Fisher model and its many generalisations (\textit{e.g.}, Cannings' models), we find that under suitable conditions on the higher moments of the individual reproductive output (namely, that when one considers the corresponding
 genealogical process, the coalescent, mergers of three or more ancestral lines happen with vanishingly small probability as the 
population size tends to infinity), it is sufficient to introduce local effective population sizes for each deme to accurately 
approximate many disparate models. 

For example, just as Hubbell's UNTB has population dynamics analogous to the Moran model of population genetics, we could equally well consider a ``Wright-Fisher'' neutral model, in which all individuals perish at the end of each time step, but each leaves behind a Poisson distributed number of offspring (conditioned on the total population size).  Whilst qualitatively different, this model retains the notion of neutrality: each individual is equally likely to be the parent of a randomly chosen individual in the next generation.  With an appropriate choice of time rescaling (see Example 2 in the SI), this model also gives rise to the HDP in the large population limit, much as both the Moran and Wright-Fisher models give rise to the same diffusive limits for appropriate choices of effective population size.  By contrast, if we consider the highly-skewed reproduction model in which the offspring of one randomly chosen individual replaces all other individuals, we do not obtain the HDP, even though we preserve the neutral hypothesis - as we discuss in the SI (Section 1.2), we require that the offspring distribution is not so fat-tailed that one individual is reasonably likely to be parent to a significant portion of the next generation.  In this latter case, there is still a well-defined limit, but it is poorly understood; in particular, there is no known analogue to the Antoniak equation (Equation 6) upon which our approach rests. 

It has been shown previously that for large local population sizes, and assuming a fixed 
finite-dimensional metacommunity distribution with $S$ species present then 
the local community distribution, $\bar \pi_i$, can be approximated by a Dirichlet distribution \cite{sloan2006,sloan2007}. 
The parameters of this Dirichlet distribution are proportional to the 
immigration rate multiplied by the metacommunity distribution:
\begin{equation}
\bar \pi_i | I_i,\bar{\beta} \sim Dir(I_i\bar \beta), 
\end{equation}
where $\bar \beta = (\beta_1,\ldots,\beta_S)$ is the relative frequency of each species in the metacommunity.
In the SI Appendix (see Section 1.4: Corollary 1), we generalise this to the case where as for the UNTB,  
there is a potentially infinite number of species that can be observed in the local community. Then 
the stationary distribution is a Dirichlet process (DP) \cite{ferguson73}:
\begin{equation}
\bar{\pi}_i | I_i,\bar{\beta} \sim \mbox{DP} (I_i,\bar{\beta}).  
\end{equation}
The DP can be viewed as an infinite dimensional generalisation of the Dirichlet. It generates an infinite set of samples from the base distribution, which in this case is the metacommunity $\bar{\beta}$, while the concentration parameter, which is $I_i$ here, controls the distribution of weights of those samples. Indeed, these weights are generated by a stick-breaking process (see below) with parameter $I_i$.

In the metacommunity, a Dirichlet process also applies (SI Appendix: Section 1.5), but now the base distribution is simply a 
uniform distribution over arbitrary species labels, and the concentration parameter is the biodiversity parameter, $\theta$. 
This is not a new observation, as it is implicit in the use of 
Ewens's sampling formula \cite{ewens72} for the metacommunity in Etienne's approach \cite{etienne05}. In this case the 
metacommunity distribution is purely the stick-breaking process. Define an infinite set 
of random variables drawn from a beta distribution $\{\beta'_{k}\}_{k=1}^{\infty}$:
\begin{equation}
\beta'_{k} \sim Beta(1,\theta).
\end{equation}
Then we can define the $k^\mathrm{th}$ element of the metacommunity vector as:
\begin{equation}
\beta_k = \beta'_k.\prod_{l=1}^{k-1}(1 - \beta'_{l}).
\end{equation}
We will denote this process $\bar \beta \sim Stick(\theta)$. Since the local communities are also DPs the model becomes a hierarchical Dirichlet process (HDP) in the parlance of machine learning \cite{Teh2006}. The stick-breaking process is one way to view the DP but an alternative perspective can be obtained by considering successive draws from a DP, which yields the Chinese restaurant process, where each new draw has a probability proportional to the number of individuals already assigned to an existing type (which in our case would be species) of deriving from that type and a probability proportional to $\theta$ of deriving from a previously unseen type (or species). From this process the Antoniak equation for the number of types or species $S$ observed following $N$ draws from a DP with concentration parameter $\theta$ can be derived: 
\begin{equation}
P(S | \theta, N) = s(N,S) \theta^S \frac{\Gamma(\theta)}{\Gamma(\theta + N)}
\eqnlabel{Antoniak}
\end{equation}
where $s(N,S)$ is the unsigned Stirling number of the first kind \cite{Antoniak1974} and $\Gamma(x)$ denotes the gamma function.

\subsection*{Gibbs sampler for the Neutral-HDP model}

Combining the model elements described above, we obtain the complete Neutral-HDP model as:
\begin{align*}
\bar{\beta} | \theta &\sim \mbox{Stick} (\theta),\\
\bar{\pi}_i | I_i,\bar{\beta} &\sim \mbox{DP} (I_i,\bar{\beta}),\\
\bar{X}_i | \bar{\pi}_i, J_i & \sim \mbox{MN} (J_i,\bar{\pi}_i).
\end{align*}
To this we add gamma hyper-priors for the biodiversity parameter, $\theta$, and the immigration rates, $I_i$: 
\begin{align}
\theta | \alpha,\zeta &\sim \mbox{Gamma} (\alpha,\zeta), \eqnlabel{theta_prior} \\
I_i | \eta, \kappa\  &\sim \mbox{Gamma} (\eta,\kappa),
\end{align}
where $\alpha,\zeta, \eta \text{ and } \kappa$ are all constants. 

In any given sample although the potential number of species is infinite we only observe $S$ different types. It is convenient therefore to represent the model in terms of these finite dimensional number of types and one further class corresponding to all unobserved species. We will represent the proportions of the $S$ observed species explicitly as $\beta_k$ with $k = 1,\ldots,S$ and the unrepresented component as $\beta_u = \sum_{k = S + 1}^L \beta_k$, in the limit as $L \rightarrow \infty$.
In this finite dimensional representation we can determine the species distributions in the local communities:
\begin{equation}
\bar{\pi}_i ~ \sim \mbox{Dir}(I_i\beta_1,\ldots,I_i\beta_S,I_i\beta_u).
\end{equation}
We can then marginalise the local community distributions and derive the probability of the observed frequencies given the metacommunity distribution $\bar{\beta}$ and the immigration rates $I_i$, $i = 1, \dots, M$:
\begin{equation}
P(\mathbf{X}|\bar{\beta},I_1,\ldots,I_M) = \prod_{i = 1}^M \frac{J_i!}{X_{i1}! \cdots X_{iS}!}\frac{\Gamma(I_i)}{\Gamma(J_i + I_i)} \prod_{j = 1}^S \frac{\Gamma(x_{ij} + I_i\beta_j)}{\Gamma(I_i\beta_j)}.
\eqnlabel{marginallikelihood}
\end{equation} 

The observation that the UNTB is actually a hierarchical Dirichlet process allows us to utilise an efficient Gibbs sampling method to fit it. 
A Gibbs sampler is a type of Bayesian Markov chain Monte Carlo (MCMC) algorithm. An MCMC algorithm generates samples from the posterior distribution of the parameters given the data \cite{mackay92}, which in this case is $P(\theta, I_1,\ldots,I_M| \mathbf{X})$. In general, the posterior is too complex to sample from directly and, in Gibbs sampling, 
samples are instead generated from the conditional distribution of one parameter given all the others. These full conditionals are often much simpler than the joint posterior distribution, and, crucially, if repeated samples are taken in this way, then they will converge onto the posterior after sufficient iterations. By introducing extra auxiliary variables, it is possible to devise an efficient Gibbs sampler for the UNTB-HDP approximation. One of these auxiliary variables is the metacommunity distribution itself $\bar \beta$ and the other is the number of ancestors in site $i$ that gave rise to species $j$, denoted $T_{ij}$, i.e., the number of independent immigration events from the metacommunity. Using these variables a Gibbs sampling iteration proceeds as follows:
\begin{enumerate}

\item Sample the biodiversity parameter $\theta$ from the conditional:
\begin{equation}
P(\theta | S, T) \propto s(T,S) \theta^S \frac{\Gamma(\theta)}{\Gamma(\theta + T)} \mbox{Gamma} (\theta|\alpha,\zeta),
\eqnlabel{eqn:species_dist}
\end{equation}
where $T = \sum_{i=1}^M \sum_{j=1}^S T_{ij}$. 
The first part of the above expression derives from the Antoniak equation (\eqnref{Antoniak}) for the number of unique species observed, $S$, when we sample $T$ ancestors from the metacommunity Dirichlet process with concentration parameter, $\theta$, the second part is simply the prior on $\theta$ \cite{Antoniak1974}. To sample from this we use the auxiliary variable approach of~\cite{EscobarWest95}.

\item Sample the metacommunity distribution:
\begin{equation}
\bar{\beta} = (\beta_1, \beta_2, \dots, \beta_S, \beta_u) \sim \mbox{Dir} (T_{\cdot 1}, T_{\cdot 2}, \dots, T_{\cdot S}, \theta),
\eqnlabel{DirichletDistBeta}
\end{equation}
where $T_{\cdot j} = \sum_{i=1}^M T_{ij}$. 
This exploits the conjugacy between the stick breaking prior for the metacommunity, $\bar{\beta}$, and the likelihood of the ancestor numbers $T_{ij}$ \cite{Teh2006}.  

\item Sample the immigration rates:
\begin{equation}
P(I_i | T_{ij}) \propto \frac{\Gamma(I_i)}{\Gamma(J_i + I_i)}I_i^{T_{i \cdot}}\mbox{Gamma}(I_i|\eta,\nu).
\end{equation}
This is again just Antoniak's equation multiplied by the prior but here the number of unique types observed, are the ancestors from the metacommunity, $T_{i \cdot} = \sum_{j=1}^S T_{ij}$, in $J_i$ samples from the local community DP with concentration parameter, $I_i$. 

\item Sample the ancestral states:
\begin{equation}
\eqnlabel{eqn:Tables2}
P(T_{ij} | x_{ij}, I_i, \beta_j) = \frac{\Gamma(I_i\beta_j)}{\Gamma(x_{ij} + I_i\beta_j)} s(x_{ij}, T_{ij}) (I_i \beta_j)^{T_{ij}},
\end{equation}
where again we recognise the Antoniak equation. This summarises the Gibbs sampling but in SI Appendix 2 we rigorously derive the above conditional distributions.
\end{enumerate}
In general we found that this MCMC procedure quickly converges but to ensure that we were sampling from the stationary distribution we generated either 50,000 Gibbs samples for each fitted data set and discarded the first 25,000 iterations as burn-in or for the human gut microbiota when testing multiple taxa we used 10,000 Gibbs sample and discarded 5,000 iterations as burn-in. The results below are quoted as the median values over these last 25,000 or 5,000 samples with upper and lower  credible (Bayesian confidence) limits given by the 2.5\% and 97.5\% quantiles of these samples. 

An MCMC approach was used in an early method to fit the single-site model \cite{etienne04}, but it required the use of the more complicated Metropolis-Hastings algorithm, not Gibbs sampling, which is central to the efficiency of our method. In SI Appendix Section 2 we present detailed results demonstrating that on samples generated from the UNTB with known parameters that our method outperforms the two-stage approximate method of \cite{Etienne09a}, providing accurate and reliable estimates of both $\theta$ and $I_i$ except when $I_{i} \gg \theta$. In this case there is a consistent bias towards under-estimating $I_i$, which, as we explain in SI Appendix Section 2, is preferable to the large variation in the parameter estimates exhibited by the two-stage approximation. The HDP method also has two further advantages: it generates a full posterior distribution of the model parameters, which provides a realistic estimate of the uncertainty around their point estimates, and it also recovers the metacommunity distribution. 

To determine whether an observed data set appears neutral we used a similar Monte Carlo significance test to that in \cite{etienne07}. Given the $k^{\mathrm{th}}$ posterior sample of fitted UNTB parameters, $\theta^k, I_1^k,\ldots,I_M^k$, an artificial data matrix with the same number of samples $M$ and the same sample sizes $J_i$ as the original data matrix is generated by sampling from the full neutral-HDP, which we will denote by $\mathbf{X}^{k}_0$. Given this sample we can also generate a neutral metacommunity distribution, ${\bar \beta}^{k}_0$, using \eqnref{DirichletDistBeta}, since the ancestral frequencies $T_{\cdot j} = \sum_{i=1}^M T_{ij}$ are known. This will be a true neutral metacommunity since the distribution will correspond to stick-breaking with parameter $\theta$. Note that the number of species observed can differ from $S$. We then calculate the likelihood $P(\mathbf{X}^{k}_0|{\bar \beta}^{k}_0,I_1^k,\ldots,I_M^k)$ using \eqnref{marginallikelihood}. These likelihoods were then compared to the actual likelihood of the observed sample, $P(\mathbf{X}|{\bar \beta}^{k},I_1^k,\ldots,I_M^k)$, and the proportion that were smaller than that value calculated to give a pseudo p-value, denoted $p_N$, to test the null hypothesis of neutrality, such that a small $p_N$ indicates that the data is not consistent with the neutral model. In addition, we generated data sets, $\mathbf{X}^{k}_1$, with the metacommunity fixed at the model fitted values, ${\bar \beta}^k$. Due to the hierarchical nature of the model, the metacommunity DP only gives a prior on the metacommunity distributions, the observed meta-community can deviate from the neutral expectation. This enables us to test for local neutral community assembly but with a fitted potentially non-neutral metacommunity. We do this in the same way calculating the likelihood for each of the samples, $P(\mathbf{X}^{k}_1|{\bar \beta}^{k},I_1^k,\ldots,I_M^k)$, and comparing to $P(\mathbf{X}|{\bar \beta}^{k},I_1^k,\ldots,I_M^k)$, the proportion of samples with likelihood smaller than this forms our pseudo p-value for local neutral community assembly, which we denoteby $p_L$. 
For both tests, samples were generated either from 2,500 sets of fitted parameters taken from every tenth iteration of the last 25,000 Gibbs samples or from 500 sets of fitted parameters taken from every tenth iteration of the last 5,000 Gibbs samples for the human gut microbiota when testing multiple taxa.

There are many ways in which a distribution could appear non-neutral. A clear example is provided by the situation where communities fall into a finite number of distinct types such that community configurations cluster together. It has been suggested that the human gut microbiome can be clustered into three distinct enterotypes \cite{arumugam11,holmes12,ding14}. This will appear non-neutral since a single metapopulation distribution will be unable to desribe all the community configurations observed. In addition, communities can also appear non-neutral at the level of the observed taxa abundances, if the abundances within individual samples are more or less skewed to rare species than expected for a Dirichlet process then this will appear non-neutral at the local community level. If this occurs for the metacommunity then neutrality will be rejected there too.

\subsection*{Identifying neutral subsets of species}

For the microbial community data, we will separate species by their taxa and fit the model to taxa separately in an attempt to identify neutral subsets. The validity of this 
approach rests on two observations. Firstly, that if there are multiple neutral guilds of species in a community, where the abundance of a guild varies from site to site in a 
non-neutral fashion, then the community as a whole will appear non-neutral but if we just sample species from one guild then the neutral patterns will be recovered \cite{walker07}. 
This is self-evident. The second observation is that if only a subset of the species in a neutral guild are sampled, then that subset will still fluctuate neutrally but 
with renormalised probabilities. This derives from the following property of the Dirichlet distribution, that if only a subset of the $S$ dimensions are observed, say $U$, 
then that subset is still distributed as a Dirichlet on the reduced space with the same parameters. For the neutral model the result is that the biodiversity parameter 
is unchanged but that the immigration rate at each site is reduced, $I_i^U = I_i(1 - \sum_{i \notin  U}\beta_i)$, according to the weight of the missing species in the metacommunity. The result is that if at 
some level of taxonomic resolution all species are from the same neutral guild, if not necessarily representing all that guild, then they will still be identified as neutral.

The key ideas used in the above derivations are summarised in \tabref{Ideas}.
\subsection*{Data}

\subsubsection*{Neutral simulation} 

In SI Appendix Section 2 we show that the UNTB-HDP fitting method accurately determines the parameters of data sets generated from the UNTB. To provide a further test of the model fitting 
from a sample that relaxes the mainland-island structure of the UNTB but maintains the assumption of neutrality we performed a neutral model simulation. This comprised 
50 sites indexed $i = 1,\ldots, 50,$ with a fixed population number of $N_i = 20,000$ individuals per site. Discrete dynamics were used with a probability that an individual was removed at each iteration of 5\%. 
Deleted individuals were then replaced, with speciation probability $\nu = 10^{-5}$ by an entirely new species, by an individual chosen at random from the local community in the previous iteration with probability $(1 - \nu)( 1- m_i)$, or by an individual chosen at random from all the other sites with probability $(1 - \nu)m_i$. The migration probability was varied across sites according to the rule $m_i = i\times10^{-4}$, so that the immigration rate, $I_i = m_i N_i = 2i$, varied from 2 to 100. The model was run for 2,000 generations, i.e., 40,000 iterations, at which point the 
species number appeared stationary, then 1,000 individuals were sampled with replacement from each site. The UNTB-HDP model was fit by Gibbs sampling to this data set as was the two-stage approximate method of \cite{Etienne09a}. 
This simulation although it has strictly neutral dynamics does not correspond exactly to Hubbell's UNTB because rather than an explicit mainland-island structure with diversity only generated in the metapopulation, it has speciation occuring in the local populations themselves, with a metapopulation which is an implicit aggregate of the local populations rather than an explicit distribution.

\subsubsection*{Tropical trees from Panama}

To provide a well-distributed sample of tropical trees at a regional level we took twenty-nine of the one hectare forest plots considered in \cite{condit2002}. These comprised all the one hectare samples from the Panama region with an elevation of less than 200 metres. This restriction ensured that all samples were from the same environment of lowland tropical forest. We also did not use data from the three larger Panama plots in order to maintain an even sampling at the regional level. Within each plot all trees $\ge$ 10cm in diameter were censused and their morpho-species recorded. The network of sample sites was spread across a $15 \times 50$ km region along the Panama canal, see \cite{pyke01} for details. A total of 13,263 trees were sampled from 367 species. The number of individuals observed in each plot ranged from a minimum of 302 to 647 with a median of 450. The UNTB-HDP model was fit to this data as described above. 

\subsubsection*{Human gut microbiota}

To compare with the tropical tree analysis we also fitted the UNTB-HDP model to a study of the gut microbiomes of twins and their mothers \cite{turnbaugh09}. 
These comprised fecal samples from 154 different individuals characterised by family and body mass index (BMI). Each individual 
was sampled at two time points approximately two months apart. The V2 hypervariable region of the 16S rRNA gene was amplified by PCR and then sequenced 
using 454. We reprocessed this data set filtering the reads, denoising and removing chimeras using the AmpliconNoise pipeline \cite{quince09,quince11}. 
This gave a total of 570,851 reads split over 278 samples, 
since out of the 308 collected samples thirty failed to possess any reads following filtering. The size of individual samples varied from just 53 to 10,580 with a median of 1,598. 
The number of unique sequences remaining following noise removal was 19,647. These were then taxonomically classified using the RDP stand-alone classifier of \cite{wang07}.
We constructed Operational Taxonomic Units (OTUs) at 3\% sequence difference using average linkage clustering to approximate species \cite{youssef09}. This was done for the entire data set generating 7,238 OTUs. We fitted the UNTB-HDP model to this data set.

To explore the impact of sample size and number on the ability of our pseudo p-values to correctly identify a community as non-neutral at the local and metacommunity levels we generated a series of subsampled data sets from this study. First, we selected at random without replacement either 20, 50, 100 or 200 samples from all those that had 1,000 reads or greater (247 in total). Then we generated a series of data sets where we sampled increasing numbers of individuals or reads from these selected samples, from 20 individuals per sample to 400 inclusive in increments of 20. We used sampling with replacement i.e. multinomial sampling so that expected OTU proportions were equal to those in the observed communities. For each number of samples and number of reads we generated ten replicate communities. We then fitted the UNTB-HDP model to these communities and tested for neutrality at the local and metapopulation level.

Starting with the full data set, we split the unique sequences according to the phylum to which they were classified, using a cut-off of 70\% bootstrap confidence. OTUs were then reconstructed 
at 3\% for each phylum and the UNTB-HDP fit to each phylum separately. We repeated this process for family and genus too. Only samples that had more than 150 representatives from a taxa were included in the analysis and the model was only fit to taxa that had at least 50 samples satisfying this criterion. This ensured a sufficiently large data set for parameters to be inferred and if a taxa dominates a neutral guild occupying a particular role we would expect it to appear in a large proportion of samples. We also generated ten replicate data sets from the full data set with the same number of samples and same number of reads per sample as the data sets split by taxa at each level. Applying the UNTB-HDP to these then gives us an equivalent bench-mark for the effect of subsampling on our ability to detect non-neutrality. We also did this for the tropical tree data. 

\section*{Results}

\subsection*{Neutral simulation}

In \figref{Simulation}, we give the immigration rates estimated by the UNTB-HDP fitting algorithm for the neutral simulation. From this single sample we are able to accurately predict the immigration rates across all the sites. The uncertainty in our predictions increases for higher $I_i$ but there is no consistent bias. In contrast, the two-stage approximation substantially underestimates the immigration rate as $I_i$ increases. This is most likely because although the simulation appears locally neutral ($p_L = 0.57$) as we would expect, the hypothesis that the neutral model applies at the metacommunity level too is rejected, $p_N = 0.0096$. The deviation from the mainland-island structure and the occurrence of speciation within the islands themselves results in a metacommunity distribution that deviates from the neutral stick-breaking process. This illustrates that in contrast to the two-stage approximation the UNTB-HDP model can still correctly 
predict immigration rates when neutral community assembly operates only at the local community level.

\subsection*{Tropical Trees from Panama}

By fitting the UNTB-HDP model to the twenty-nine tropical tree communities we found that they have a distribution of abundances across sites that is consistent with the neutral model at both the metacommunity and local community levels, $p_N = 0.81$ and $p_L = 0.23$. The median fitted $\theta$ obtained was 109.3. The median fitted immigration rates varied across sites from 20.69 to 76.93 with a median of 41.7. In \figref{Trees}, we use non-metric multi-dimensional scaling (NMDS) to position each community in two-dimensions in such a way as to preserve Bray-Curtis distances between communities. This was done using the metaMDS function of the vegan package in R \cite{R}. The fitted metacommunity distribution is also shown in this plot. The sites are represented as bubbles with size proportional to their fitted immigration rates and contours calculated using the ordisurf function. From this it is apparent that the communities with higher $I_i$ are in general more similar to the metacommunity. The fitted immigration rates are also related to the spatial location of the sites. Although there is no spatial location associated with the metacommunity, if we assign it to the location of the site with the highest $I_i$, site 14, and calculate the distance from this site to each of the others, then we find a significant negative correlation (p = 0.03) between distance and immigration rate.

\subsection*{Human gut microbiota}

In contrast to the tropical trees, the human gut samples do not appear neutral at the whole community level, $p_N = 0$ and $p_L = 0$. This was not purely an effect of the tropical trees comprising a data set of fewer samples and fewer individuals. Reducing the gut data set to an equivalent number of samples (29) with the same sizes we would still always reject neutrality at the metacommunity level, at the local level we observed a median $p_L$ of $0.062$ across the ten replicates. We would falsely fail to reject neutrality therefore but not as strongly as for the real tree data ($p_L = 0.23$). Therefore, we can conclude that the human gut is convincingly less neutral than tropical trees even accounting for the different sample numbers and sizes.

In \figref{Power} we show the impact of sample number and sample size on the pseudo p-values for the test of neutrality for whole community and local community assembly. With sufficient samples (i.e. at least 200) we have power to reject neutrality at both levels provided the sample size exceeds 150 but as sample number decreases our power to correctly reject neutrality particularly for local community assembly decreases. 

The results of subdividing the OTUs at different taxonomic levels and fitting the UNTB-HDP model are given in a nested format in \tabref{Twins}. The families associated with each phylum are indented below as are the genera in each family. 
We see some evidence that as we move down the taxonomic hierarchy from phyla, through families to genera, the subdivided communities 
appear more consistent with neutral local community assembly. We would reject local neutrality 
for both major phyla found in the human gut, the Bacteroides and Firmicutes, but there are two families out of four for which we cannot 
confidently reject neutral local community assembly at the 1\% level, the Bacteroidaceae and Incertae Sedis XIV, with $p_L = 0.03$ and $0.05$, respectively. 
At the level of genera, two out of three appear close to neutral at the local level, the exception being the Faecalibacterium.
This is not the case when we do not use the fitted metacommunities and instead test for both neutral local community assembly and a neutral metacommunity. Then for all data sets we would completely reject neutrality. The figures in parantheses give pseudo p-values for the equivalent complete data set randomly sampled down to the same size as the taxa. This gives us a benchmark to verify that these affects are not purely due to small sample sizes. From these we see that in all cases the probability of incorrectly concluding that the subsampled data set is neutral is less than 1\%.

To quantify how the metacommunity deviates from the neutral assumption for those data sets that appear locally neutral 
we compared the fitted metacommunities averaged over 500 Gibbs samples with the metacommunity observed in samples 
from the full neutral model with the equivalent parameters. These two distributions are shown in \figref{Metapop} for 
the three genera, Bacteroides, Blautia and Faecalibacterium. 
These distributions are shown as rank-abundance plots with the OTUs ordered in terms of the relative frequency 
with that frequency given on the y-axis, which is log-scaled. It is clear that the fitted metacommunities from the three genera 
all have a small number of highly abundant OTUs and then a long tail of rare OTUs. The neutral model cannot fit a metacommunity of 
this shape.

We also looked for correlations between the fitted immigration rates for the different taxa and the body mass index of subjects. 
No significant relationships were found at the genus level but for the family Ruminococcaceae a significant negative relationship 
was observed (p-value = 0.014 see \figref{BMII}). The same negative correlation was also observed for their
parent phylum the Firmicutes but it was slightly stronger (p-value = 0.007). 

\section*{Discussion}

The results clearly demonstrate the usefulness of the UNTB-HDP Gibbs sampler, its ability to fit large multi-sample data sets, and 
its robustness to deviations of the metacommunity from neutrality and the ability to detect those deviations whilst still correctly 
inferring immigration rates. The resulting significance tests and fitted parameters reveal a great deal 
about the ecology of the human gut microbiota in comparison to macroscopic organisms such as the tropical trees. 
The human gut is clearly much more strongly structured by functional niches. Only at the genus level 
do we see some evidence of neutral local community assembly in the gut, whilst tropical trees were well described 
by the neutral model without any subdivision of species. In some ways, 
this is to be expected, given the multiplicity of metabolic roles performed by the human microbiota we would not expect ecological 
equivalence at the whole community level. 
However, the borderline neutral patterns we did observe suggest the possibility that neutral local community 
assembly may be operating within the species occupying those roles, and that neutral processes may be responsible for 
maintaining some of the vast diversity that is observed in the human gut. This has to be a tentative conclusion as 
pattern does not imply process \cite{rosindell12}, but, regardless, the fact the observed abundances are 
consistent with the neutral model means that its importance for explaining fine-scale gut microbial diversity cannot be ruled out. 

It is important to address the question of whether the tests have the power necessary to detect non-neutrality. 
It is clear from \figref{Power} that as the number of samples in particular decreases it becomes hard to detect non-neutral distributions --- this is actually a strong motivation for the use of the UNTB-HDP which can be efficiently fit in the multi-site case. 
However, our benchmarking against the full gut data set allows us to conclude that some genera and the tropical trees appear more neutral than the equivalent sized complete gut microbiome. It is also important to note that the model was unable to detect the spatial signature in the tropical tree data as a deviation from neutrality. In the absence of that spatial information we would have included that a spatially inhomogenous metapopulation was sufficient to explain these patterns. That certainly motivates inference strategies for spatially explicit neutral models\cite{rosindell08}. 

It is highly significant that the metacommunity distributions could not be explained by the neutral process for any taxa. 
Instead, the metacommunity was dominated by a small number of very abundant OTUs, with in all cases the most abundant OTU possessing a 
relative abundance exceeding 10\% of the metacommunity. This may be a signature of non-neutral processes. The dominant OTUs may have a 
competitive advantage, or interactions with bacteriophages \cite{minot11} or the host immune system may be structuring these 
distributions \cite{quince13}, and that is skewing their apparent metacommunity abundance, 
or it may genuinely reflect the abundance of these organisms in the metacommunity perhaps coupled with an improved dispersal 
ability over their competitors.   

The parameters of the fitted models, in particular, the immigration rates, are also highly informative. For the Panamanian tree data set we 
showed that these correlated with spatial location of the sites. A strong effect of distance on community similarity was found in 
the original study  and a spatially explicit version of the neutral model was fit to the data \cite{condit2002}, but we have shown 
that even in the UNTB where space is only implicit, this signal can be recovered from the fitted immigration rates. For the gut microbiota 
samples, we have no spatial position, but here, remarkably, the immigration rates for the family Ruminococcaceae and phylum 
Firmicutes correlated negatively with body mass index. This provides an unique interpretation of the impact of obesity on the 
human gut microbiota: an increase in the rate of input of nutrients to the gut effectively results in an increase in microbial growth rates in 
the key carbohydrate metabolising group the Ruminococcaceae \cite{ze12} and these equate to a decrease in immigration rate relative to local birth. 

It is also instructive to compare immigration rates between fitted models. There has been debate as to the importance of dispersal 
on microbial community structure, the theory that ``everything is everywhere, but the environment selects'' \cite{finlay04}. However, 
comparing the tropical tree fits with the gut microbiota at the phylum level we find that the predicted immigration rates are comparable, 
implying that dispersal limitation may be just as important between human guts as it is between tropical forests. Interesting patterns 
also appear comparing immigration rates between gut taxa. They are much lower, for example, for the Bacteroides than the Firmicutes, 
probably reflecting the much higher tendency for the latter to be spore-forming.

Finally, whilst these results are of great interest in themselves, perhaps our most significant achievement is 
formally linking a model from ecology, the Unified Neutral Theory of Biodiversity, with a model from machine learning, 
the hierarchical Dirichlet process. In addition, by showing that the details of the local community dynamics are irrelevant for the 
HDP approximation to hold, provided the neutrality assumption is met, we may explain why we were able to fit communities as 
different as tropical trees and the gut microbiota. This strongly motivates the HDP as an ecological null model. What is more the 
mathematical structure of the HDP is easily extendable to for example, niche-neutral models or further hierachical levels.  Therefore, 
we believe that the connection we have made here will lead to an explosion of hierarchical Bayesian modelling in community ecology.

Software for fitting the UNTB-HDP can be downloaded from:\\
 \noindent https://github.com/microbiome/NMGS.

\section*{Acknowledgments}

CQ is funded through an EPSRC Career Acceleration Fellowship EP/H003851/1. KH was funded through a Unilever research grant whilst conducting this research. LL is funded by the Academy of Finland (decision 256950). TLP is funded by the Fondation Sciences Math\'ematiques de Paris. We thank three anonymous reviewers for constructive comments.



\section*{Figures}

\begin{figure}[ht]
\begin{center}
	\includegraphics[width=3in]{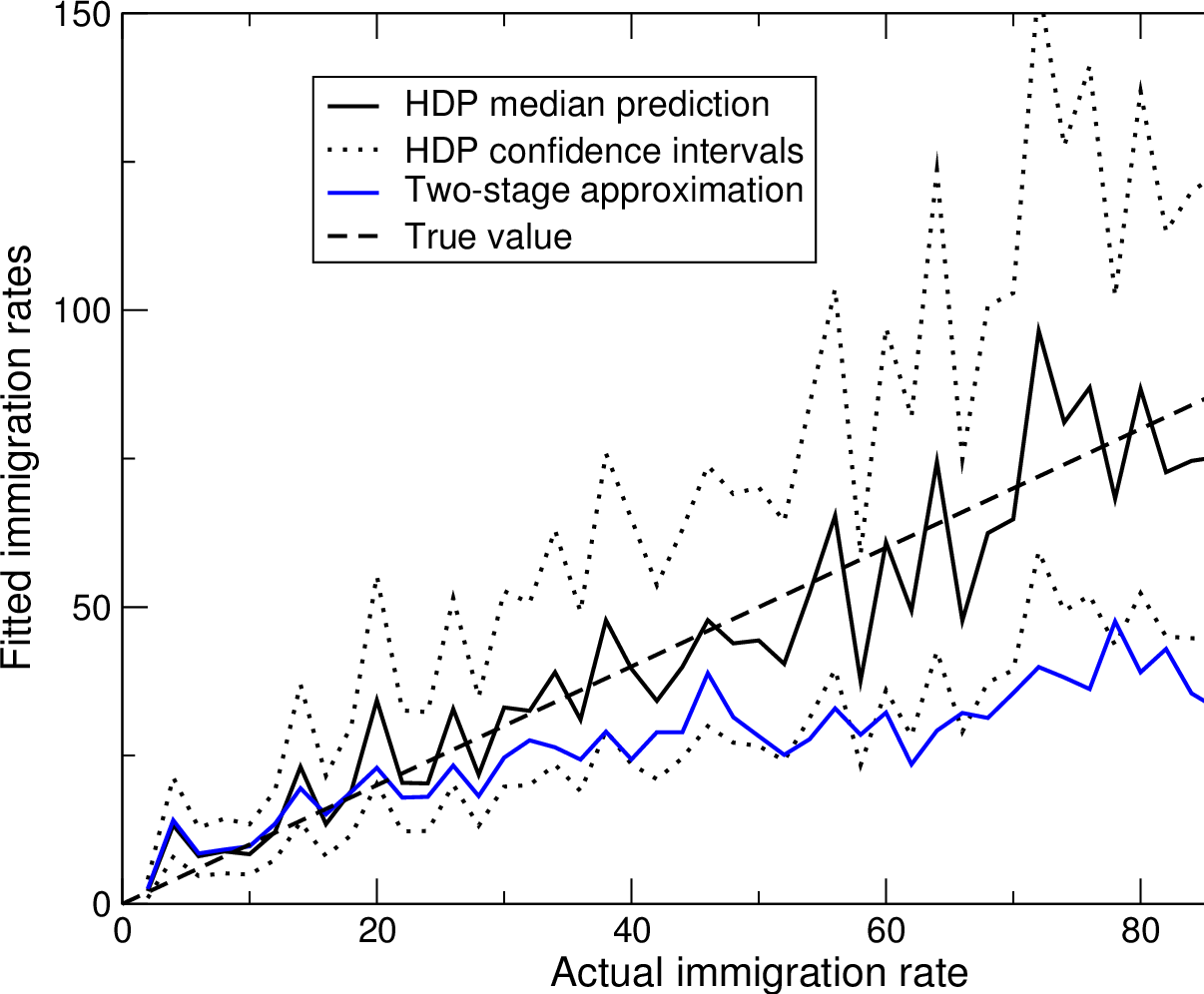}
\end{center}
\caption{{\bf Estimated immigration rates vs. true values for the UNTB-HDP model fit to a neutral model simulation.} Predictions are medians (solid line) from 25,000 posterior samples together with lower (2.5\%) and upper (97.5\%) Bayesian confidence intervals (dotted lines). The predictions from the two-stage approximation are also given (blue line).}
\label{fig:Simulation}
\end{figure}

\begin{figure}[ht]
\begin{center}
	\includegraphics[width=4in]{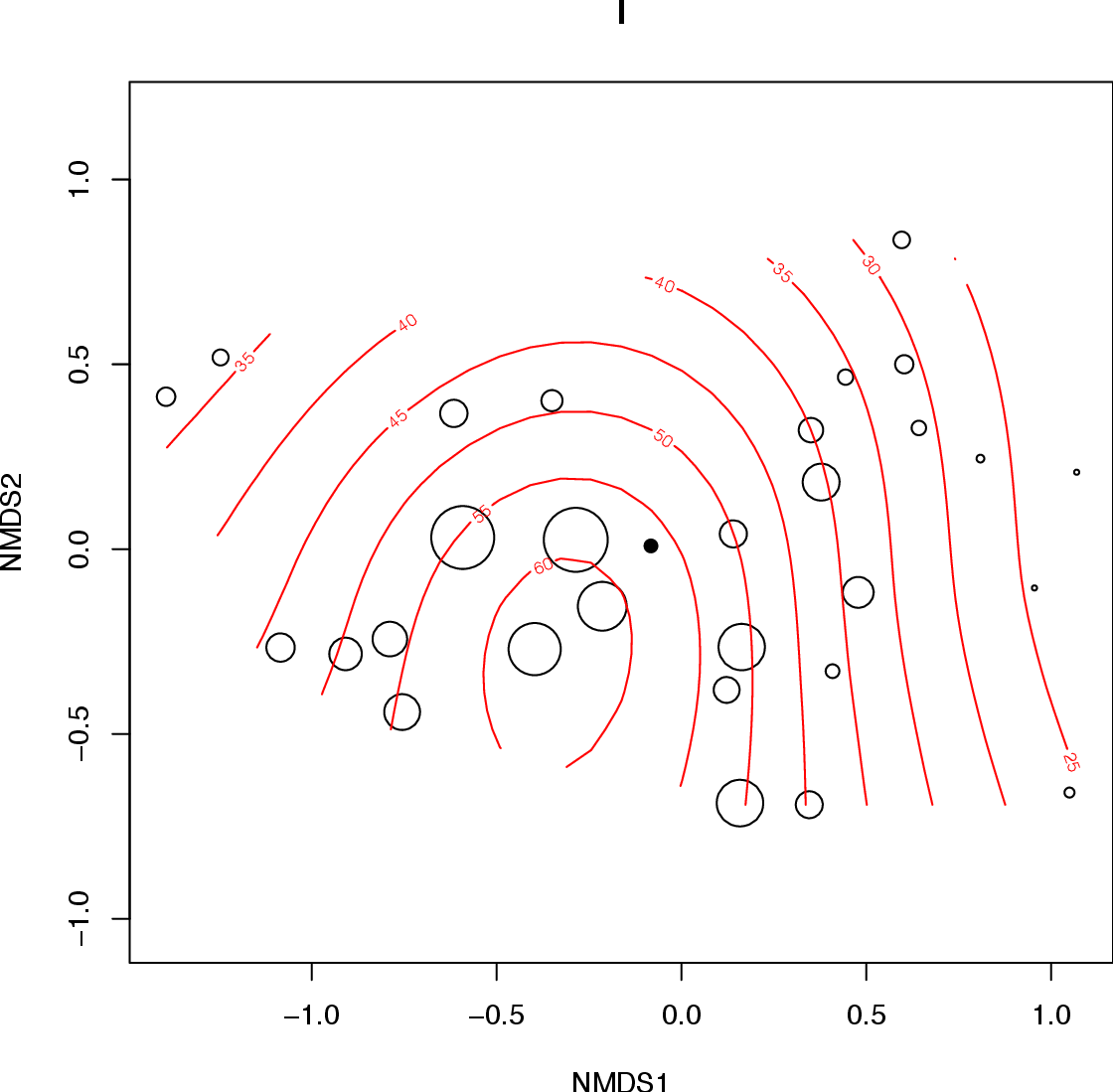}
\end{center}
\caption{{\bf An NMDS plot of the twenty-nine Panama tropical tree communities.} Communities are visualised as bubbles with size proportional to the median $I_i$ values obtained from the UNTB-HDP Gibbs sampler. Contours calculated using the ordisurf function of the R vegan package are also shown. The metacommunity distribution is denoted by a solid black point.}
\label{fig:Trees}
\end{figure}

\begin{figure}[ht]
\begin{center}
	\includegraphics[width=4in]{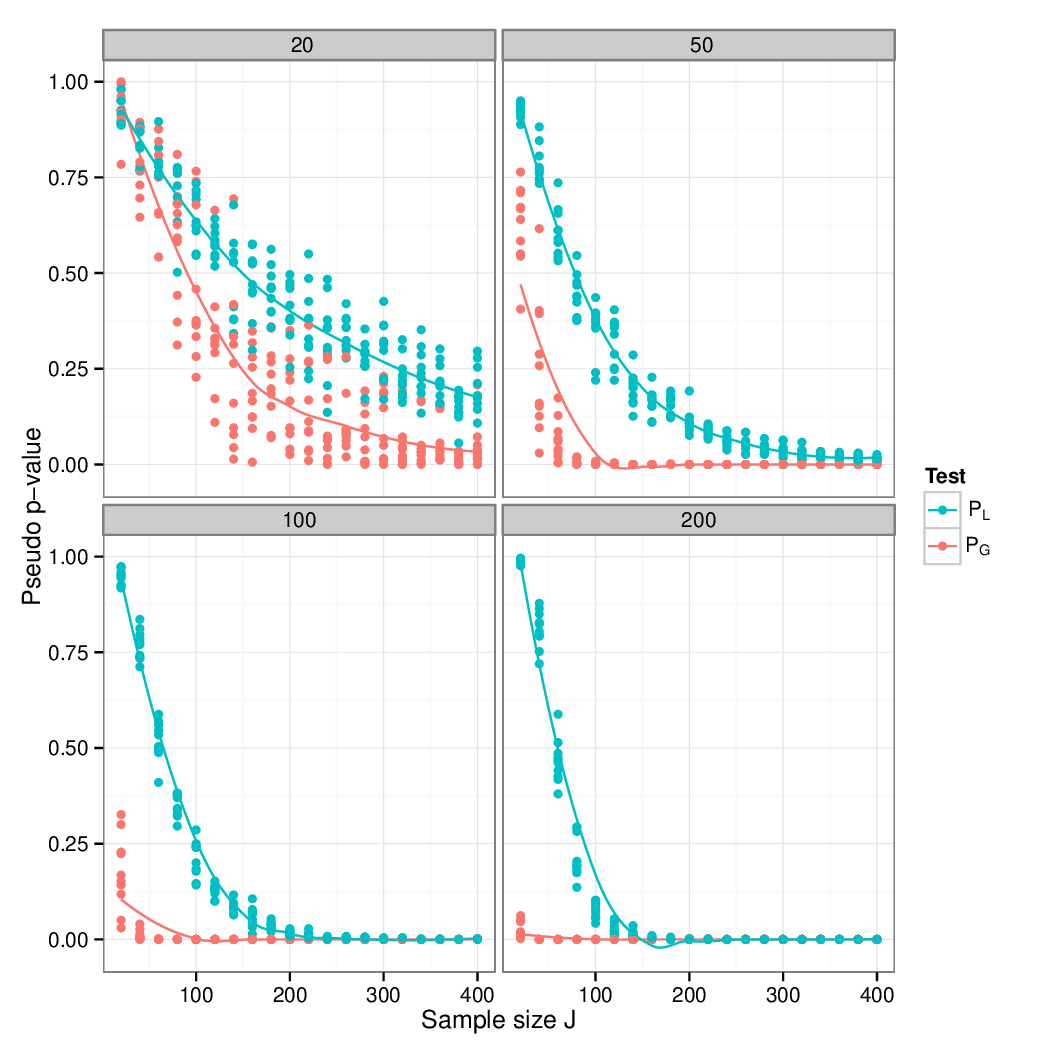}
\end{center}
\caption{{\bf Impact of sample number and size on detection of non-neutrality in the human gut data}. The figures show the pseudo p-values for neutrality for both the complete neutral model ($P_G$) and local community assembly ($P_L$). We generated ten replicate communities by sampling without replacement either 20, 50, 100 or 200 samples from those that had 1,000 reads or greater (247 in total) and from the selected samples we generated a fixed number of reads sampling with replacement. We increased read numbers from 20 individuals per sample to 400 inclusive in increments of 20. We then tested the subsampled communities for neutrality.}
\label{fig:Power}
\end{figure}

\begin{figure}[h]
\begin{center}
	\includegraphics[width=4in]{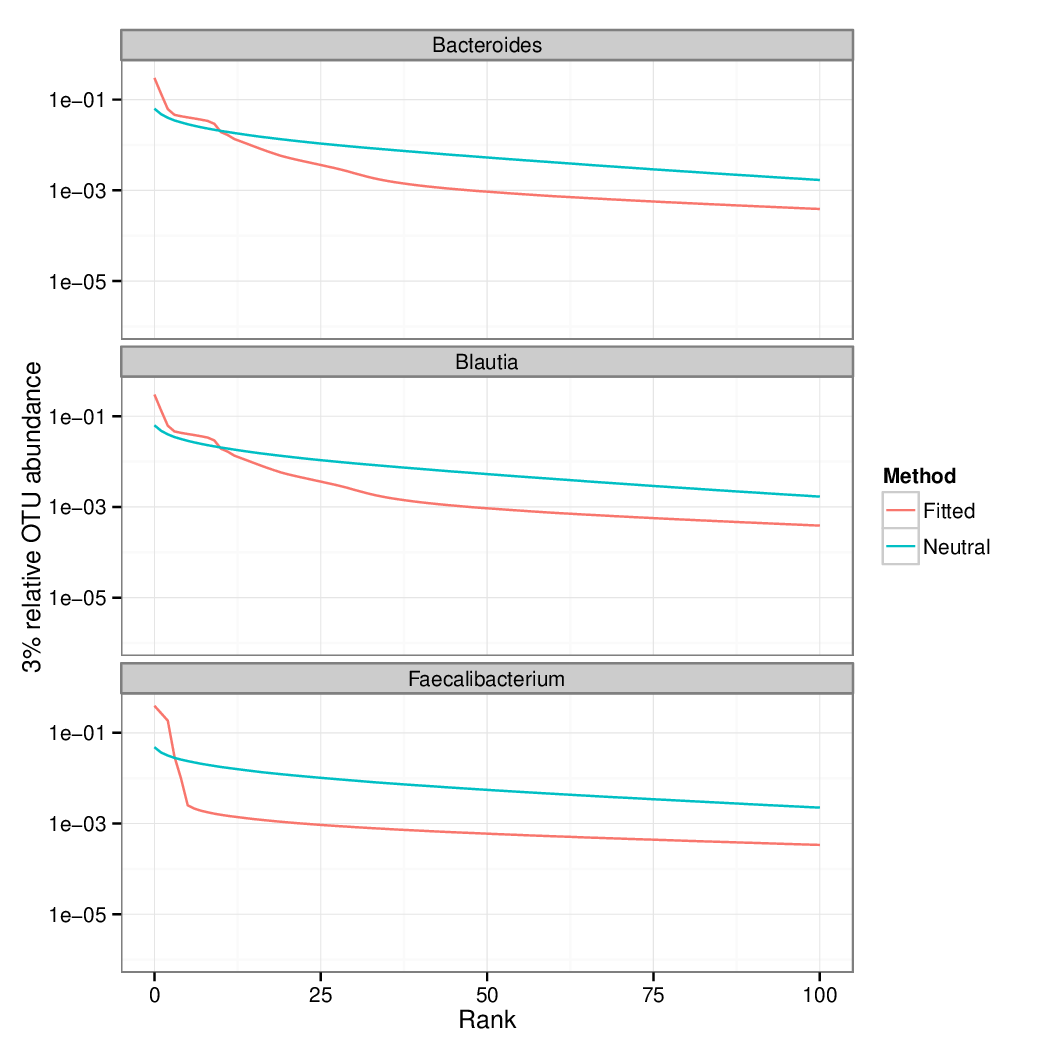}
\end{center}
\caption{{\bf Human gut metacommunity distributions}. The fitted metacommunity distributions (red line) and neutral metacommunity predictions (blue line)
as rank-abundance curves for three genera: Bacteroides, Blautia, and Faecalibacterium.}
\label{fig:Metapop}
\end{figure}

\begin{figure}[h]
\begin{center}
	\includegraphics[width=4in]{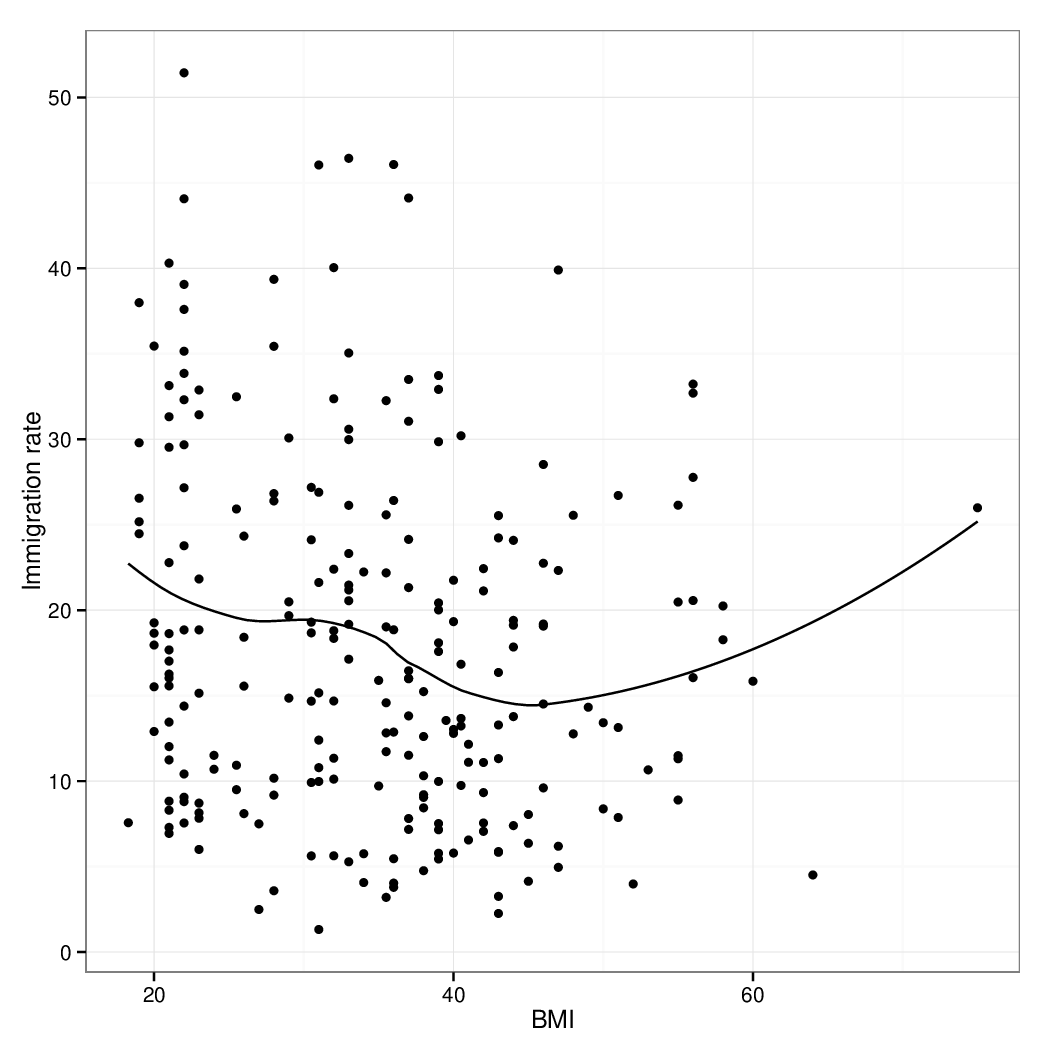}
\end{center}
\caption{{\bf Immigration rate vs. BMI}.  Median immigration rate for the family Ruminococcaceae determined by the UNTB-HDP model plotted against body mass index. A significant negative 
correlation is observed (p-value = 0.014 - Pearson's correlation).}
\label{fig:BMII}
\end{figure}


\clearpage
\newpage
\section*{Tables}

\begin{table}
\caption{{\bf Key ideas used in this paper.}}
\begin{tabular}{| p{.3\linewidth} | p{.7\linewidth} |}
\hline
Neutral model & A populaton model in which all types are \textit{functionally} equivalent\\
\hline
Unified Neutral Theory of Biodiversity (UNTB) & A discrete time stochastic model of an island-mainland metacommunity  proposed by Stephen Hubbell \cite{hubbell01}.   At each time step, one individual on the island dies, and is either replaced by the offspring of a randomly chosen individual on the island, or, with fixed probability, by the offspring of an individual chosen at random from the mainland.\\
\hline
Chinese Restaurant Process (CRP) & A discrete time stochastic model proposed by Davis Aldous \cite{aldous85} in which he imagines a Chinese restaurant with an unlimited number of tables.  At each time step, a new customer arrives, who will either choose a new table with a fixed probability $\theta$, or sit at an already occupied table with probability proportional to the number of individuals already seated at that table.  It is mathematically equivalent to Hoppe's urn \cite{Hoppe1984}, which generates samples from a Kingman coalescent with neutral mutations that occur at a fixed rate, and which always give rise to a new allelic type.\\
\hline
Dirichlet Process (DP) & A random variable taking value in the set of discrete probability distributions on a set $\cal{X}$, obtained by drawing random points in $\cal{X}$ according to a given probability measure $\mu$, and assigning these to the tables in a stationary Chinese Restaurant Process (thus, there are infinitely many customers seated at infinitely many tables), so that the probability of drawing a given point is equal to the proportion of customers seated at the corresponding table.\\
\hline
Hierarchical Dirichlet Process (HDP) & A Dirichlet Process for which the underlying measure $\mu$ is itself an instance of a Dirichlet Process.\\
\hline
\end{tabular}
\label{tab:Ideas} 
\end{table}

\begin{table}[h]
\caption{{\bf Fitting the UNTB-HDP model to human gut microbiota.}}
\begin{tabular}{|l|c|c|c|c|c|c|c|c|c|}
\hline
\multirow{2}{*}{Taxa} &  \multirow{2}{*}{$N$} & \multirow{2}{*}{$S$} & \multirow{2}{*}{$\tilde{J}$} & \multirow{2}{*}{$\theta$} & \multicolumn{3}{|c|}{$I_i$} & \multirow{2}{*}{$p_N$} & \multirow{2}{*}{$p_L$}\\
\cline{6-8}
& & & & & l & m & u & & \\
\hline
Bacteroidetes	& 231 & 569 & 596 & 148.6 & 1.5 & 5.5 & 13.7 & 0.0 (0.0) & 0.0 (0.0)\\	
\hline	
\hspace{1em}Bacteroidaceae & 208 & 224 & 506 & 51.4 & 0.7 & 3.3 & 7.6 & 0.0 (0.0) & 0.03 (0.0)\\			
\hspace{2em}Bacteroides & 208 & 224 & 506 & 51.4 &  0.7 & 3.3 & 7.6 & 0.0 (0.0) & 0.03 (0.0)\\
\hline
\hline
Firmicutes & 277 & 4770 & 1009 & 1382.3 & 21.4 & 44.8 & 81.0 & 0.0 (0.0) & 0.0 (0.0)\\
\hline			
\hspace{1em}Incertae Sedis XIV & 87 & 176 & 264 & 39.2 & 1.7 & 9.8 & 27.5 & 0.0 (0.0) & 0.05 (0.004)\\			
\hspace{2em}Blautia& 87 & 175 & 264 & 38.9 & 1.6 & 10.1 &  27.1 & 0.0 (0.0) & 0.06 (0.003)\\
\hspace{1em}Lachnospiraceae & 164 & 873 & 248 & 262.9 & 6.5 & 13.0 & 21.2 & 0.0 (0.0) & 0.0 (0.0)\\			
\hspace{1em}Ruminococcaceae & 239 & 1471 & 409 & 411.0 & 4.5 & 16.1 & 38.1 & 0.0 (0.0)& 0.0 (0.0)\\			
\hspace{2em}Faecalibacterium & 141 & 301 & 297 & 71.7 & 1.0 & 7.5 & 21.4 & 0.0 (0.0) & 0.004 (0.0)\\		
\hline\hline
\end{tabular}
\begin{flushleft}Results are given for 3\% OTUs at different levels, quantities given in the table are: $N$ -
the no. of samples with $> 150$ reads; $S$ - the number of 3\% OTUs; $\tilde{J}$ - the median sample size; $\theta$ -
the fitted biodiversity parameter; $I_i$ - the fitted immigration rates where l, m and u are the lower 2.5\%, median and upper 97.5\% quantiles respectively; $p_N$ - the proportion 
of simulated neutral samples exceeding the observed data likelihood; and $p_L$ - the proportion of 
simulated locally neutral samples exceeding the observed data likelihood. The figures in parantheses give pseudo p-values for the equivalent complete gut microbiome data set randomly sampled down to the same size as the individual taxa.
\end{flushleft}
\label{tab:Twins} 
\end{table}

\newpage
\clearpage

\begin{flushleft}
{\bf SI Appendix:  1) Large Population Limits for a Neutral Metacommunity and 2) Gibbs Sampling for the UNTB-HDP}
\end{flushleft}
\section{Large Population Limits for a Neutral Metacommunity}
\subsection{Summary and Outline}

Given the length and technical nature of this supplement, we will begin with a summary that outlines the results herein.   Our intent is to formulate a class of models that generalize Hubbell's formulation of the Unified Neutral Theory of Biodiversity and Biogeography (UNTB) and a number of variants that have appeared in the community ecology literature, whilst retaining the essential feature of neutrality.  Our inspiration in this are Cannings' models \cite{Cannings74}, which have become the standard in theoretical population genetics.  We discuss coalescent theory and these models in detail below, but in brief, a Cannings' model allows any reproduction law with discrete generations that keeps the total population size fixed, provided that relabeling the parents leaves the distribution of offspring unchanged.  More generally, we could consider models replacing fixed population sizes with density dependent population dynamics, as in \cite{Parsons+Quince07b}, \cite{Parsons2008,Parsons2010} and \cite{Parsons2012}, but this would have further lengthened and complicated this supplement. 

We formulate a mainland-island Cannings' model, in which the mainland has size $N_{0} = N$ and the islands have size $N_{i}$ that grow with $N$, but are approximately equal. 
We allow migration between any pair of island and mainland, and further allows mutations to give rise to new types on both island and mainland.  After collecting a few results regarding the reproduction law for a Cannings' model, we show in Section \ref{INTERMEDIATE}, provided that:
\begin{itemize}
\item the islands are asymptotically smaller than the mainland (in both census and effective population size; see the discussion below), 
\item migration between demes is rare (we assume that the probability that a migrant arrives in a local community is inversely proportional to the size of that community), and
\item the probability of multiple mergers is asymptotically smaller in $N$ than the rate of pairwise coalescence,
\end{itemize}
then Proposition \ref{INTERMEDIATEPROP} shows that if we rescale time proportionally to the effective population size of the islands (\ie we measure time so that one time step corresponds to $N_{e}$ generations) for large values of $N$, the population dynamics on the islands converge to the dynamics of Moran's infinitely many alleles model, with the migration rate from the mainland taking the place of the mutation rate in the population genetic model, and such that the type of all new mutants/migrants is drawn from the initial type distribution for the mainland (\ie the probability of migration between islands or novel mutations appearing on an island becomes vanishingly small as $N$ grows large, and can be completely ignored in the limit), and moreover, the composition of the mainland remains constant on this timescale - the dynamics are sufficiently slow that one cannot see changes when time is scaled according to the effective population size of the islands.  Moreover, this limit is independent of the specific reproduction law for the islands, provided it satisfies Cannings' conditions - indeed, we don't even need to assume the same law between islands. As a consequence of the identification of the islands' dynamics as a variation of the infinitely many alleles model, we can use previous results from theoretical population genetics to conclude that the stationary distribution for the islands is a Dirichlet Process, and that the composition of a sample is distributed according to Ewens' sampling formula.

In Section \ref{SLOW} we turn our attention to the mainland.  We first observe that for large values of time, the species distribution on the islands converge onto stationary processes governed by the Dirichlet process above.   We can then apply this with results from \cite{Ethier+Kurtz86} to obtain Proposition \ref{SLOWPROP}, which tells us that we need to rescale time according to the effective population size of the mainland (again, so that one time step corresponds to $N_{e}$ generations, but now $N_{e}$ for the mainland, which is substantially larger).  On this slow scale, the islands will essentially instantaneously arrive at their stationary state (an instant in this ``slow'' time scale is in fact an extremely long time in the natural ``intermediate'' time scale for the islands), whilst now the population on mainland follows the ``real'' infinitely many alleles model (with the actual mutation rate), and again, migrations from an island to the mainland become vanishingly rare as $N$ becomes large, and, as before, the stationary distribution is again a Dirichlet process, where each newly appearing genotype is assigned a label chosen uniformly at random from $[0,1]$ (thus the probability of two distinct mutations giving rise to the same type is 0).  In particular,  the islands have the Hierarchical Dirichlet Process for their stationary distribution: they are Dirichlet Processes in which the types are drawn from the underlying Dirichlet Process that describes the mainland.

\subsection{A Mainland-Island ``Cannings' Model''}

We begin by formulating a broad class of haploid models that includes Hubbell's Unified Neutral Theory of Biodiversity and Biogeography (UNTB) \cite{hubbell01}.  Our inspiration are Cannings' population genetic models \cite{Cannings74}, which use exchangeability as a general mathematical formulation of neutrality: random variables $\nu_{1},\ldots,\nu_{N}$ are \textit{exchangeable} if the random vectors $(\nu_{\pi(1)},\ldots,\nu_{\pi(N)})$ are equal in distribution for all permutations $\pi$ of $\{1,\ldots,N\}$.  Informally, the labels $1,\ldots,N$ are arbitrary, and can be changed without essentially changing the process.   In a Cannings' model, one assumes a fixed population of size $N$ and discrete generations; $\nu_{i}(n)$ is the number of offspring in the $n+1$\textsuperscript{st} generation of the $i$\textsuperscript{th} individual of the $n$\textsuperscript{th} generation.   $(\nu_{1},\ldots,\nu_{N})$  is assumed to be exchangeable and must satisfy
\[
	\sum_{i=1}^{N} \nu_{i} = N.
\]
Under suitable conditions on the higher moments (\nb as a consequence of exchangeability, we must have $\bbE\left[\nu_{i}\right] = 1$ for all $i$), one can show \cite{Mohle2001} that as $N \to \infty$ the frequency of types (here, the type of an individual is inherited from its ancestor in the initial population) and the genealogical process converge to the Wright-Fisher diffusion and Kingman's coalescent, respectively (relaxing the moment conditions leads to a $\Lambda$-coalescent limit for the genealogical process).  In particular, if $X^{(N)}_{i}(n)$ is the number of descendants alive in the \Th{n} generation of the \Th{i} ancestral individual in the \Th{0} generation, and $c_{N_{i}}$ is the coalescence probability, \ie the probability two individuals sampled without replacement from deme $i$ have the same parent,
\[
	c_{N} \defn \frac{\bbE\left[(\nu_{1})_{2}\right]}{N-1},
\]
where
\[ 
	(x)_{k} \defn x(x-1)\cdots(x-k+1)
\] 
is the \textit{falling factorial} or \textit{Pochhammer symbol}.  Then, \cite{Mohle2001} shows that 
 \[
 	\lim_{N \to \infty}  \frac{\bbE\left[(\nu_{1})_{3}\right]}{N\bbE\left[(\nu_{1})_{2}\right]} = 0
\]
is a necessary and sufficient condition for $X^{(N)}_{i}(\lfloor c_{N}^{-1} t \rfloor)$ to converge weakly\footnote{A family of random variables $\{X^{(N)}\}$ taking values in a space $S$ is said to \textit{converge weakly} to $X$ if 
\[
	\lim_{N \to \infty} \bbE[f(X^{(N)})] = \bbE[f(X)]
\]
for all $f \in C(S)$; the values $\bbE[f(X)]$ completely characterize the distribution of $X$.  Weak convergence is denoted by
\[
	X^{(N)} \Rightarrow X.
\]} as $N \to \infty$ to a Wright-Fisher diffusion, \ie to a diffusion process with probability density 
\[
	p(\by, t | \bx) \defn \bbP\left\{ \bX(t) \in \by+d\by \middle\vert \bX(0) = \bx\right\}
\]
satisfying the Kolmogorov backward equation
\[
	\frac{\partial p}{\partial t} = \frac{1}{2} \sum_{i=1}^{N} \sum_{j=1}^{N} x_{i}(\delta_{ij} - x_{j}) \frac{\partial p^{2}}{\partial x_{i} \partial x_{j}}.
 \]
 The quantity $c_{N}^{-1}$ has been referred to as the \text{coalescent effective population size}, and can be shown to generalize previously defined notions of an effective population size \cite{Sjodin2005}.
 
Here, we take our cues from the discussion of infinte-alleles models in \cite{Ethier+Kurtz86}, which we will closely follow, in formulating a ``Cannings' UNTB'' with migration and mutation.  As in previous models, we will assume a mainland, which supports a population of size $N_{0} = N$, together with a collection of islands labelled $i = 1,\dots,M$ which support populations of size $N_{i}$.  We will assume that the islands are all approximately the same size, and substantially smaller than the mainland; for Section \ref{INTERMEDIATE}, we will require $N_{i} \ll N_{0}$\footnote{We will write $a_{N} = o(b_{N})$, or $ a_{N} \ll b_{N}$, if
\[
	\lim_{N \to \infty} \frac{a_{N}}{b_{N}} = 0,
\]	
and use $a_{N} \asymp_{N} b_{N}$ to indicate that
\[
	\lim_{N \to \infty} \frac{a_{N}}{b_{N}} = 1.
\]
We will also write $a_{N} = \BigO{b_{N}}$ if there exists a constant $C$ such that
\[
	 a_{N} \leq C b_{N},
\]
for all $N$. 
}, whereas we will need to impose sharper estimates of the relative sizes in Section \ref{SLOW}.  In what follows, we will refer to the mainland and each of the islands as having $N_{0}$ or $N_{i}$ niches respectively, we will use the term deme when we are referring to a local community that can be either an island or the mainland, and will refer to \eg the individual in the \Th{j} niche in the \Th{i} deme. 

We will assume discrete generations, and that at each time step the current residents reproduce and are replaced by their offspring.  The \Th{j} individual has $\nu^{(N)}_{ij}$ offspring so that
\[
	\sum_{j=1}^{N_{i}} \nu_{ij} = N_{i},
\]
and model neutrality by assuming that each random vector $(\nu^{(N)}_{i1}(n),\ldots,\nu^{(N)}_{iN_{i}}(n))$ is exchangeable.  We further assume that $(\nu^{(N)}_{i1}(n),\ldots,\nu^{(N)}_{iN_i}(n))$ is independent of $(\nu^{(N)}_{j1}(m),\ldots,\nu^{(N)}_{jN_{j}}(m))$ unless $i = j$ and all $m = n$.  Following \cite{Mohle2001}, we define
\[
	c_{N_{i}} \defn \frac{\bbE\left[(\nu_{i1})_{2}\right]}{N_{i}-1},
\]	
for $i = 0,\ldots,M$, and assume the analogue of M\"ohle's condition:
\begin{equation}\label{MOHLECOND}
	\lim_{N_{i} \to \infty}  \frac{\bbE\left[(\nu_{i1})_{3}\right]}{N_{i}\bbE\left[(\nu_{i1})_{2}\right]} = 0,
\end{equation}
which has the following consequence \cite{Mohle+Sagitov2003}:

\begin{lem}
Assume \eqref{MOHLECOND}.   Then,
\[
	\lim_{N_{i} \to \infty} c_{N_{i}} = 0,
\]
and
\[
 	\lim_{N_{i} \to \infty}  \frac{\bbE\left[(\nu_{1})_{2}(\nu_{2})_{2}\right]}{c_{N_{i}}} = 0.
\]
\end{lem}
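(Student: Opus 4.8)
The plan is to derive both conclusions from a single pointwise Cauchy--Schwarz inequality that exploits the census constraint $\sum_j \nu_j = N_i$. Fix the deme, write $N \defn N_i$ and let $\nu_1,\dots,\nu_N$ be its exchangeable offspring vector, and abbreviate $\Phi_2\defn\bbE[(\nu_1)_2]$, $\Phi_3\defn\bbE[(\nu_1)_3]$, so that $c_N=\Phi_2/(N-1)$ and \eqref{MOHLECOND} reads $\Phi_3/(N\Phi_2)\to 0$. Put $S_2\defn\sum_j (\nu_j)_2$ and $S_3\defn\sum_j(\nu_j)_3$; by exchangeability $\bbE[S_2]=N\Phi_2$ and $\bbE[S_3]=N\Phi_3$. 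Using the identity $\nu(\nu-1)^2=(\nu)_3+(\nu)_2$ and applying Cauchy--Schwarz to $S_2=\sum_{j:\nu_j\ge1}\frac{(\nu_j)_2}{\sqrt{\nu_j}}\sqrt{\nu_j}$,
\[
S_2 \le \Big(\sum_j \nu_j(\nu_j-1)^2\Big)^{1/2}\Big(\sum_j \nu_j\Big)^{1/2}=\big(N(S_3+S_2)\big)^{1/2},
\]
so that, on squaring, the master inequality $S_2^2 \le N(S_3+S_2)$ holds for every realisation.

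For the first limit I would take expectations and use Jensen's inequality $\bbE[S_2^2]\ge(\bbE[S_2])^2$, which turns the master inequality into $N\Phi_3+N\Phi_2=\bbE[S_3+S_2]\ge\bbE[S_2^2]/N\ge N\Phi_2^2$, i.e.\ $\Phi_3\ge\Phi_2(\Phi_2-1)$. Dividing by $N\Phi_2$ gives $(\Phi_2-1)/N\le\Phi_3/(N\Phi_2)$, whose right-hand side vanishes by \eqref{MOHLECOND}; hence $\Phi_2/N\to0$ and therefore $c_N=\Phi_2/(N-1)\to0$. (The degenerate case $\Phi_2=0$ forces all $\nu_j=1$ and $c_N=0$ trivially.)

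The same master inequality controls the off-diagonal sum. Since $\sum_{j\ne k}(\nu_j)_2(\nu_k)_2=S_2^2-\sum_j(\nu_j)_2^2\le S_2^2\le N(S_3+S_2)$ and, by exchangeability, $\bbE[\sum_{j\ne k}(\nu_j)_2(\nu_k)_2]=N(N-1)\,\bbE[(\nu_1)_2(\nu_2)_2]$, taking expectations yields $\bbE[(\nu_1)_2(\nu_2)_2]\le\frac{N}{N-1}(\Phi_2+\Phi_3)$. The delicate point, and the main obstacle, is the normalisation in the lemma's second assertion: the ratio $\bbE[(\nu_1)_2(\nu_2)_2]/c_N$ as literally displayed is not the coalescent-relevant comparison and need not vanish (if a uniformly chosen half of the individuals each leave exactly two offspring, then $\Phi_3=0$, so \eqref{MOHLECOND} holds, yet $\bbE[(\nu_1)_2(\nu_2)_2]\to1$ while $c_N\to0$). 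What the statement encodes is that a \emph{simultaneous} pair of binary mergers is negligible against a single pairwise merger: the probability of a designated double merger is $\bbE[(\nu_1)_2(\nu_2)_2]/\big((N-2)(N-3)\big)$, the census factor $(N-2)(N-3)\sim N^2$ being forced by counting the ordered distinct parent-pairs that turn the factorial moment into a merger probability comparable to $c_N$, rather than inserted by hand. The relevant ratio is then $\bbE[(\nu_1)_2(\nu_2)_2]/\big((N-2)(N-3)c_N\big)$, and substituting $c_N=\Phi_2/(N-1)$ together with the bound above makes it at most
\[
\frac{N}{(N-2)(N-3)}+\frac{N^2}{(N-2)(N-3)}\cdot\frac{\Phi_3}{N\Phi_2},
\]
whose first term is $\BigO{1/N}$ and whose second term vanishes by \eqref{MOHLECOND}; this is the content of the second assertion.
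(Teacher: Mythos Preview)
The paper does not actually prove this lemma; it is quoted as a consequence of \cite{Mohle+Sagitov2003}, so there is no in-paper proof to compare against. What can be assessed is whether your argument establishes the cited result.

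Your first limit is handled correctly. The pointwise inequality $S_2^2\le N(S_3+S_2)$ via Cauchy--Schwarz, followed by Jensen, gives $\Phi_3\ge\Phi_2(\Phi_2-1)$, and hence $(\Phi_2-1)/N\le\Phi_3/(N\Phi_2)\to 0$. A minor omission: this only controls $(\Phi_2-1)/N$ when $\Phi_2\ge 1$; you should remark that on any subsequence with $\Phi_2\le 1$ one has $c_N\le 1/(N-1)\to 0$ directly. With that caveat the argument is clean.

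Your handling of the second assertion is the more substantive contribution. You are right that the displayed ratio $\bbE[(\nu_1)_2(\nu_2)_2]/c_{N_i}$ as literally written need not vanish, and your counterexample (half the population producing exactly two offspring, the rest none) is valid: there $\Phi_3=0$ so \eqref{MOHLECOND} holds, $c_N=1/(N-1)\to 0$, yet $\bbE[(\nu_1)_2(\nu_2)_2]\to 1$. The intended (and cited) statement is that the \emph{probability} of a simultaneous pair of binary mergers is $o(c_N)$, which carries the extra factor $(N_i)_2/(N_i)_4\sim N^{-2}$; this is exactly how the paper uses the lemma in the remark following Lemma~\ref{MOHLELEM} and in \eqref{MOHLECONS2}. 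Your bound $\bbE[(\nu_1)_2(\nu_2)_2]\le\frac{N}{N-1}(\Phi_2+\Phi_3)$, obtained from the same master inequality, then gives
\[
\frac{(N-1)\,\bbE[(\nu_1)_2(\nu_2)_2]}{(N-2)(N-3)\,\Phi_2}\le \frac{N}{(N-2)(N-3)}+\frac{N^2}{(N-2)(N-3)}\cdot\frac{\Phi_3}{N\Phi_2}\longrightarrow 0,
\]
which is the correct conclusion and matches what the paper actually invokes downstream. In short: your proof is sound, and you have in addition identified (and repaired) a normalisation slip in the displayed statement.
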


We will further assume that there exists $a_{N}$ such that
\begin{equation}\eqnlabel{GAMMADEF}
	\lim_{N \to \infty} \frac{c_{N_{i}}}{a_{N}} = \begin{cases}	
		\gamma_{i} & \text{if $i > 0$, and}\\
		0 & \text{otherwise.}
	\end{cases} 
\end{equation}
which formalises the notion that the populations on the islands are all of the same order of magnitude (their effective population sizes are asymptotically proportional $c_{N_{i}} \sim \gamma_{i} a_{N}$) and asymptotically smaller than the mainland ($a_{N} \ll c_{N}$).

We will further assume that each individual has a type, which is a label in $[0,1]$, which we think of as a probability space with the uniform (Lebesgue) measure $\lambda$.  The labels are more of a mathematical convenience for tracking ancestries, and have no effect on fitness, so we could equally well take labels in any compact Polish space $\mathfrak{X}$ that is equipped with a probability measure $\gamma(dx)$.   We write $X_{ij}(n) \in [0,1]$ for the type of the individual in the \Th{j} niche of the \Th{i} deme in generation $n$ -- the labels are inherited from the parent, except when an individual is subject to mutation at birth. We discuss the processes of reproduction and mutation below.  The state of the \Th{i} deme in the \Th{n} generation is conveniently represented by an atomic probability measure on $[0,1]$, 
\[
	G^{(N)}_{i}(n) = \frac{1}{N_{i}}\sum_{j=1}^{N_{i}} \delta_{X_{ij}(n)},
\]
where $\delta_{X_{ij}(t)}$ is the Dirac point mass at $X_{ij}(t)$, and the superscript $(N)$ emphasizes the dependence on the ``system size'' $N$, \ie for any subset $A \subseteq [0,1]$, $G^{(N)}_{i}(n)(A)$ is the number of individuals in the \Th{i} deme with a type in the set $A$.  We write $\bG^{(N)}(n) = G^{(N)}_{0}(n) \otimes \cdots \otimes G^{(N)}_{M}(n)$\
for the product measure,
\[
	\bG^{(N)}(n)(A) = G^{(N)}_{0}(n)(A) \cdots G^{(N)}_{M}(n)(A).
\]

Given a measure $\mu$ and a continuous function $f$ on $[0,1]$,   we will use the shorthand
\[
	\langle f, \mu \rangle \defn \int f(x)\, \mu(dx)
\]
for the integral.   More generally, if $f \in C([0,1]^{M+1})$, then
\[
	\langle f, \mu_{0} \otimes \cdots \otimes \mu_{M} \rangle \defn \int f(x_{0},\ldots,x_{M})\, \mu_{0}(dx_{0}) \cdots \mu_{M}(dx_{M}).
\]	
By definition, we have
\[
	\langle f, G^{(N)}_{i}(n) \rangle = \frac{1}{N_{i}}\sum_{j=1}^{N_{i}} f(X_{ij}(n)).
\]

We model migration by assuming that with probability $c_{N_{i}} \frac{\varpi_{i}}{2}$ (the factor of $\frac{1}{2}$ is to maintain consistency of notation with the cited population genetics literature), a given individual in the $n+1$\textsuperscript{st} generation is replaced by the migrant offspring of a parent chosen uniformly at random from the entire metapopulation, \ie we assume a parent of type $X_{pq}(n)$, where the $p$ and $q$ are drawn uniformly from $\{0,\ldots,M\}$ and $\{1,\ldots,N_{p}\}$, respectively.  Thus, the average number of migrants to a given island is asymptotically independent of $N$; this is a weak migration limit.   Equivalently, the parent is drawn from the \textit{metapopulation} measure,
\begin{equation}\label{METAPOP}
	G^{(N)}(n) \defn \frac{1}{\sum_{k=0}^{M} N_{k}} \sum_{i=0}^{M} N_{i} G^{(N)}_{i}(n).
\end{equation}
 
Finally, we allow for the possibility that individuals mutate after birth; we assume that there is a probability measure $P^{(N)}$ such that the offspring of a parent with type $x \in [0,1]$ mutates to a type in $A \subseteq [0,1]$ with probability $P^{(N)}(x,A)$.   Define an operator $Q^{(N)}$ on  $C([0,1])$ by 
\[
	(Q^{(N)} f)(x) = \int_{0}^{1} f(y) P^{(N)}(x,dy).
\]
Then, for all $f \in C([0,1])$, we define
\begin{multline}\label{QI}
	(Q^{(N)}_{i} f)(x) \defn \bbE\left[f(X_{ij}) \middle\vert \bG^{(N)}(n), \text{parent of type $x$}\right] \\ 
		= (1-c_{N_{i}} \frac{\varpi_{i}}{2}) (Q^{(N)}f)(x) + c_{N_{i}} \frac{\varpi_{i}}{2} \int (Q^{(N)}f)(y) G^{(N)}(n)(dy)
\end{multline}
and 
\begin{equation}\label{BI}
	(B^{(N)}_{i}f)(x) \defn \frac{\varpi_{i}}{2} \left(\int f(y) G^{(N)}(n)(dy) - f(x)\right).
\end{equation}
While it may at first appear unusual, this notation will greatly simplify subsequent calculations.

We will assume mutation is weak:  
\[
	B \defn \lim_{N \to \infty} c_{N}^{-1} (I - Q^{(N)})
\]
exists and $B$ is a bounded operator.  Thus, for any set $A \subseteq [0,1]$, the probability that the offspring has a type in $A$ approaches 1 as $N \to \infty$, if the parent has a type in $A$, and approaches 0 otherwise.   Here, $c_{N}$ is the coalescent effective population size for the mainland, and we are making the standard assumption that mutation rates scale like the reciprocal of the effective population size.  For the sake of clarity in the arguments that follow, we emphasize that our assumptions entail that
\[
	Q^{(N)}_{i} =  I + c_{N_{i}} B^{(N)}_{i} + c_{N} B + o(c_{N}).
\]

One can consider many forms for the operator $B$; the operator
\[
	(B^{(L)} f)\left(\frac{i}{L}\right)= \frac{\theta}{L-1} \sum_{j = 1}^{L} \left(f\left(\frac{j}{L}\right) - f\left(\frac{i}{L}\right)\right)
\]
corresponds to the classical population genetic models, in which the number of possible types is discrete and finite (here, there are $L$) and mutation is symmetric (\ie the offspring of an individual have the same type as their parent with probability $1-\frac{\theta}{N}$, and mutate to any other type with probability $\frac{\theta}{N(L-1)}$).  Since the labels are arbitrary, they can be assumed to be chosen from the set $\left\{\frac{1}{L},\frac{2}{L},\ldots,1\right\}$.  Now, as $L \to \infty$, $B^{(L)}$ converges to the operator 
\[
	(B f)(x) = \frac{\theta}{2} \int_{0}^{1} f(y)\, dy - f(x) = \theta (\langle f, \lambda\rangle - f(x)),
\]
which corresponds to the infinitely many alleles model; the probability that two mutations give rise to the same type is 0.  We will henceforth assume $B$ is of this form.

\begin{rem}
Although we have formulated the community dynamics in discrete time, we could equally well consider a continuous time Markov process $\tilde{G}^{(N)}_{i}(t)$ in which disturbances happen at some rate $D$; in the latter case, we consider the embedded Markov chain: if disturbances happen at random times $\tau_{1},\tau_{2},\ldots$, then the embedded chain is the process $G^{(N)}_{i}(n) \defn \tilde{G}^{(N)}_{i}(\tau_{n})$.  The limiting (continuous time) process as $N \to \infty$ is the same for both $G^{(N)}$ and $\tilde{G}^{(N)}_{i}$.
\end{rem}

In the next  section, we will consider the limiting behaviour as first $N$ and then $L$ are taken to infinity.  We will see that under moment assumptions corresponding to those in \cite{Mohle2001}, all of these models converge to the same limiting process.  First, however, we illustrate how Hubbell's original UNTB is an example of our class of models.

\begin{exmp}[Hubbell's UNTB]
In Hubbell's original model \cite{hubbell01}, only a single individual is replaced in each deme at at each time step.  We then have $\nu_{0i}$ takes values in $\{0,1\}$, with 
\[
	\bbP\{\nu_{0i} = 1\} = m.
\]
We then have that the remaining offspring numbers are either 
\[
	(\nu_{i1},\ldots,\nu_{iN_{i}})  = (1,\ldots,1,0,1,\ldots,1)
\]
(the vector with \Th{i} entry 0, and all others 1), if $\nu_{0i} = 1$, and is 
\[
	(\nu_{i1},\ldots,\nu_{iN_{i}})  = (1,\ldots,1,0,1,\ldots,1,2,1,\ldots,1)
\]
(the vector with \Th{i} entry 0 and \Th{j} entry 2 for some $i \neq j$), if $\nu_{0i} = 0$, with conditional probabilities equal to $\frac{1}{ N_{i}}$ and $\frac{1}{ N_{i}(N_{i}-1)}$, respectively (and thus the $\nu_{ij}$ are exchangeable, given $\nu_{i0}$).  

For this model, we have 
\[
	c_{N_{i}} = \frac{2}{N_{i}(N_{i}-1)},
\]
whereas by definition, $(\nu_{i1})_{3} = 0$, so \eqref{MOHLECOND} holds.

In Hubbell's model, immigrants are always from the mainland, which is assumed to have a fixed, stationary distribution (usually taken so that samples from the mainland are distributed according to Ewens' sampling formula \cite{ewens72}) , and no mutations are assumed to occur on the islands.  We will not need to make these assumptions, but will instead derive them (in the limit as $N \to \infty$) as a consequence of the relative size of the mainland and the islands.  
\end{exmp}

\begin{exmp}[``Wright-Fisher'' UNTB]
We can regard Hubbell's UNTB as a community analogue of the discrete Moran model.  We could similarly define a community analogue to the Wright-Fisher model by assuming that the vector $(\nu_{i1},\ldots,\nu_{iN_{i}})$ follows a multinomial distribution with parameters $N_{i}$ and $\left(\frac{1}{N_{i}},\ldots,\frac{1}{N_{i}}\right)$, \ie for each $i$:
\[
	\bbP\left\{(\nu_{i1},\ldots,\nu_{iN_{i}}) = (k_{1},\ldots,k_{N_{i}}) \right\} 
		= \frac{N_{i}!}{k_{1}! \cdots k_{N_{i}}!} \left(\frac{1}{N_{i}}\right)^{k_{1}} \cdots \left(\frac{1}{N_{i}}\right)^{k_{N_{i}}}.
\] 
Here, $c_{N_{i}} = \frac{1}{N_{i}}$, whereas $\bbE\left[(\nu_{i1})_{3}\right] = \frac{1}{N_{i}}^{2}$.
\end{exmp}

\begin{exmp}
We briefly note that it is possible to have $c_{N_{i}} \equiv 0$, by assuming that $(\nu_{i1},\ldots,\nu_{iN_{i}}) = (1,\ldots,1)$ with probability 1 (a trivial case that we will ignore), whereas it need not be the case that
\[
	\lim_{N_{i} \to \infty} c_{N_{i}} = 0
\]
if \eqref{MOHLECOND} is violated:  if we assume that with probability $\frac{1}{N_{i}}$, $\nu_{ij}  = N_{i}$ and $\nu_{ik}  = 0$ for all $k \neq j$, then $c_{N_{i}} \equiv 1$.
\end{exmp}


\subsection{Preliminaries Considering Exchangeable Variables}

It is well known \cite{Kingman82b} that 
\[
	\frac{(N_{i})_{j}}{(N_{i})_{k}}\bbE\left[(\nu_{i1})_{k_{1}}\cdots(\nu_{ij})_{k_{j}}\right],
\]
where $j,k_{1},\ldots,k_{j} \in \bbN$ and $k \defn k_{1} + \cdots + k_{j}$, is the probability that $k$ individuals, sampled uniformly at random without replacement from the \Th{i} deme have exactly $j$ parents in the previous generation, \nb exchangeability implies that 
\[
	\frac{(N_{i})_{j}}{(N_{i})_{k}}\bbE\left[(\nu_{i1})_{k_{1}}\cdots(\nu_{ij})_{k_{j}}\right] = \frac{(N_{i})_{j}}{(N_{i})_{k}}\bbE\left[(\nu_{i\pi(1)})_{k_{1}}\cdots(\nu_{i\pi(j)})_{k_{j}}\right]
\]
for any permutation $\pi$ of $\{1,\ldots,N_{i}\}$, so that these probabilities only depend on $j$, $k$, and the unordered list of values $k_{1},\ldots,k_{j}$.  In \cite{Mohle2001}, we find the following monotonicity result for these probabilities:

\begin{lem}\label{MOHLELEM}
Let $j \geq l$, $k_{1} \geq m_{1},\dots,k_{l} \geq m_{l}$, and $m \defn m_{1} + \cdots + m_{l}$.  Then,
\[
	\frac{(N_{i})_{j}}{(N_{i})_{k}}\bbE\left[(\nu_{i1})_{k_{1}}\cdots(\nu_{ij})_{k_{j}}\right] 
	\leq \frac{(N_{i})_{l}}{(N_{i})_{m}}\bbE\left[(\nu_{i1})_{m_{1}}\cdots(\nu_{ij})_{m_{l}}\right].
\]
\end{lem}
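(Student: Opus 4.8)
The plan is to read the quantity on each side of the inequality as the probability of an event concerning the ancestral partition of a sample, and then obtain the estimate from a set inclusion between those events.

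\emph{Step 1: probabilistic interpretation.} As recalled just before the lemma, $\frac{(N_{i})_{j}}{(N_{i})_{k}}\bbE[(\nu_{i1})_{k_{1}}\cdots(\nu_{ij})_{k_{j}}]$ is a probability; I would make precise that it equals $\bbP(\Xi^{(i)}_{k}=\xi_{0})$ for \emph{one fixed} set partition $\xi_{0}$ of $\{1,\dots,k\}$ with blocks of sizes $k_{1},\dots,k_{j}$, where $\Xi^{(i)}_{k}$ is the partition of $\{1,\dots,k\}$ obtained by taking $k$ ordered draws without replacement from the $N_{i}$ offspring forming the next generation of deme $i$ and putting two sample indices in the same block when the corresponding individuals share a parent. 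Conditioning on $(\nu_{i1},\dots,\nu_{iN_{i}})$: the number of ordered ways to fill the $k_{r}$ sample slots of block $r$ with distinct offspring of a designated parent is $(\nu_{ir})_{k_{r}}$, there are $(N_{i})_{k}$ ordered samples in all, and summing over the $(N_{i})_{j}$ choices of $j$ \emph{distinct} ordered parents (each term equal by exchangeability of $(\nu_{i1},\dots,\nu_{iN_{i}})$) recovers the stated formula. Making the $j$ parents distinct is what forces the blocks to be maximal, so the event is exactly $\{\Xi^{(i)}_{k}=\xi_{0}\}$; and since sampling without replacement makes the vector of sampled individuals (hence of their parents) exchangeable in the sample index, $\bbP(\Xi^{(i)}_{k}=\xi)$ depends on $\xi$ only through its block-size multiset, so $\xi_{0}$ may be chosen freely among partitions with those sizes.

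\emph{Step 2: subsampling and inclusion.} The first $m$ of $k$ ordered draws without replacement are themselves $m$ ordered draws without replacement, so $\Xi^{(i)}_{m}$ lives on the same space as $\Xi^{(i)}_{k}$ restricted to $\{1,\dots,m\}$, and the parentship partition of any subset of the draws is exactly the restriction of the parentship partition of all of them. Using $l\le j$, $m_{r}\le k_{r}$, and the identity $\sum_{r\le l}(k_{r}-m_{r})+\sum_{r>l}k_{r}=k-m$, I can choose $\xi_{0}$ (block sizes $k_{1},\dots,k_{j}$) and $\eta_{0}$ (block sizes $m_{1},\dots,m_{l}$) compatibly, with $\eta_{0}=\xi_{0}|_{\{1,\dots,m\}}$: distribute $m_{r}$ elements of $\{1,\dots,m\}$ into block $r$ of $\xi_{0}$ for $r\le l$ and keep the remainder of each block, and all of blocks $l+1,\dots,j$, inside $\{m+1,\dots,k\}$ (which has exactly enough room, the required sizes totalling $k-m$). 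Then $\{\Xi^{(i)}_{k}=\xi_{0}\}\subseteq\{\Xi^{(i)}_{m}=\eta_{0}\}$, and taking probabilities and applying Step 1 in both directions yields
\[
	\frac{(N_{i})_{j}}{(N_{i})_{k}}\bbE\left[(\nu_{i1})_{k_{1}}\cdots(\nu_{ij})_{k_{j}}\right]\ \le\ \frac{(N_{i})_{l}}{(N_{i})_{m}}\bbE\left[(\nu_{i1})_{m_{1}}\cdots(\nu_{il})_{m_{l}}\right].
\]

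\emph{Main obstacle and alternative.} The only real work is in Step 1 --- verifying that ``$j$ distinct designated parents'' gives maximal blocks and that $(N_{i})_{j}/(N_{i})_{k}$ is exactly the right normalising constant; once the identity is in place the inequality is pure set containment. If one prefers to bypass the interpretation, there is a self-contained route by induction on $k+j$ through two elementary moves: (a) lowering a single $k_{r}\ge 2$ to $k_{r}-1$, for which $(\nu_{ir})_{k_{r}}=(\nu_{ir})_{k_{r}-1}(\nu_{ir}-k_{r}+1)\le(N_{i}-k+1)(\nu_{ir})_{k_{r}-1}$ holds pointwise on the support of $\prod_{s}(\nu_{is})_{k_{s}}$, since there $\nu_{is}\ge k_{s}$ forces $\nu_{ir}\le N_{i}-(k-k_{r})$; and (b) deleting a block of size $1$, where one symmetrises its lone factor $\nu_{ir}$ over the $N_{i}-(j-1)$ indices outside the other blocks and uses $\sum_{r}\nu_{ir}\le N_{i}-(k-1)$ on that support. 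In each move the ratio of $(N_{i})_{\bullet}$ prefactors is exactly the constant produced, so the moves telescope, and chaining them from $(j;k_{1},\dots,k_{j})$ down to $(l;m_{1},\dots,m_{l})$ gives the lemma. I would present the probabilistic argument as the proof and record the elementary chain as a remark.
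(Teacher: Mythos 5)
Your argument is correct, but there is nothing in the paper to compare it against: the paper does not prove this lemma at all — it is quoted verbatim from M\"ohle's work (the citation \cite{Mohle2001} immediately preceding the statement), with only the probabilistic reading of $\frac{(N_{i})_{j}}{(N_{i})_{k}}\bbE\left[(\nu_{i1})_{k_{1}}\cdots(\nu_{ij})_{k_{j}}\right]$ set up beforehand (attributed to Kingman). Your Step 1 makes that reading precise in exactly the way the paper intends (a \emph{fixed} partition of the $k$-sample into $j$ blocks being the exact parental partition, with the $(N_{i})_{j}$ ordered choices of distinct parents giving disjoint events and exchangeability making all terms equal), and your Step 2 is the standard sampling-consistency argument underlying the cited result: the restriction of the exact parental partition of $k$ draws without replacement to the first $m$ draws is the exact parental partition of those $m$ draws, the compatible choice of $\xi_{0}$ and $\eta_{0}$ exists because $\sum_{r\le l}(k_{r}-m_{r})+\sum_{r>l}k_{r}=k-m$, and the event inclusion gives the inequality. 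The only implicit assumption is that all block sizes $m_{1},\ldots,m_{l}\ge 1$, which is the setting in which the lemma is used (the consequences \eqref{MOHLECONS1} and \eqref{MOHLECONS2} all involve positive parts), so this is not a gap. Your alternative elementary chain is also sound: move (a) uses $\nu_{ir}\le N_{i}-(k-k_{r})$ on the support of the product, move (b) uses exchangeability to symmetrise the lone factor together with $\sum_{s<j}\nu_{is}\ge k-1$ on the support, and in each case the falling-factorial prefactors telescope exactly; this gives a self-contained algebraic proof that avoids the coalescent interpretation entirely, at the cost of a case-by-case induction. Either version would serve as a proof of the cited lemma.
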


\begin{rem}
In particular, in conjunction with Lemma \ref{MOHLELEM}, we have
 \begin{equation}\label{MOHLECONS1}
	\frac{(N_{i})_{j-1}}{(N_{i})_{j}}\bbE\left[(\nu_{i1})_{2}\nu_{i2}\cdots\nu_{ij-1}\right] \leq c_{N_{i}},
\end{equation}
(and, by exchangeability, whenever at least one $k_{i} \geq 2$) and 
\begin{equation}\label{MOHLECONS2}
	\frac{(N_{i})_{j}}{(N_{i})_{k}}\bbE\left[(\nu_{i1})_{k_{1}}\cdots(\nu_{ij})_{k_{j}}\right]  = o(c_{N_{i}})
\end{equation}
whenever $k_{q},k_{r} \geq 2$ for at least two distinct indices $q$, $r$ or $k_{q} \geq 3$ for some index $q$.
\end{rem}

\begin{rem}
In particular, in \eqref{MOHLECONS2}, $\frac{(N_{i})_{j}}{(N_{i})_{k}}\bbE\left[(\nu_{i1})_{k_{1}}\cdots(\nu_{ij})_{k_{j}}\right]$ is always smaller than one of
\[
	\frac{(N_{i})_{1}}{(N_{i})_{3}}\bbE\left[(\nu_{i1})_{3}\right] 
\]
or 
\[
	\frac{(N_{i})_{2}}{(N_{i})_{4}}\bbE\left[(\nu_{i1})_{2}(\nu_{i2})_{2}\right]. 
\]
In what follows, all terms $o(c_{N_{i}})$ will be of order at most equal to one of these two quantities (which are the probability of three individuals sampled at random having the same parent in the previous generation, or a sample of four individuals consisting of two pairs of descendants of two distinct parents, respectively) or will be of order less than or equal to $\frac{c_{N_{i}}}{N_{i}}$.  This will be very important when we consider the long timescale.
\end{rem}

We will have use of some general relations between exchangeable random variables in the sequel:

\begin{lem}
For all $j > 1$
\[
	\bbE\left[\nu_{i1}\cdots\nu_{ij-1}\right] - \bbE\left[\nu_{i1}\cdots\nu_{ij}\right] = (j-1)\frac{(N_{i})_{j-1}}{(N_{i})_{j}} \bbE\left[(\nu_{i1})_{2}\cdots\nu_{ij-1}\right].
\]
\end{lem}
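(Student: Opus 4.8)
The plan is to use the single structural constraint of a Cannings' law, $\sum_{m=1}^{N_{i}} \nu_{im} = N_{i}$, together with exchangeability of $(\nu_{i1},\dots,\nu_{iN_{i}})$; no asymptotics or moment conditions are needed. First I would multiply the product $\nu_{i1}\cdots\nu_{i,j-1}$ by $N_{i} = \sum_{m=1}^{N_{i}}\nu_{im}$ and take expectations, giving
\[
	N_{i}\,\bbE\left[\nu_{i1}\cdots\nu_{i,j-1}\right] = \sum_{m=1}^{N_{i}} \bbE\left[\nu_{i1}\cdots\nu_{i,j-1}\,\nu_{im}\right].
\]

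Next I would split this sum according to whether the extra index $m$ lies in $\{1,\dots,j-1\}$ or in $\{j,\dots,N_{i}\}$. For each of the $j-1$ values $m \le j-1$, the factor $\nu_{im}$ is repeated, and exchangeability (move index $m$ into the first slot) gives $\bbE[\nu_{i1}\cdots\nu_{i,j-1}\,\nu_{im}] = \bbE[\nu_{i1}^{2}\,\nu_{i2}\cdots\nu_{i,j-1}]$; for each of the $N_{i}-j+1$ values $m \ge j$, the product involves $j$ distinct niches and exchangeability gives $\bbE[\nu_{i1}\cdots\nu_{i,j-1}\,\nu_{im}] = \bbE[\nu_{i1}\cdots\nu_{ij}]$. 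Hence
\[
	N_{i}\,\bbE\left[\nu_{i1}\cdots\nu_{i,j-1}\right] = (j-1)\,\bbE\left[\nu_{i1}^{2}\,\nu_{i2}\cdots\nu_{i,j-1}\right] + (N_{i}-j+1)\,\bbE\left[\nu_{i1}\cdots\nu_{ij}\right].
\]

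Finally I would pass from the ordinary square to the falling factorial via $\nu_{i1}^{2} = (\nu_{i1})_{2} + \nu_{i1}$, so that $\bbE[\nu_{i1}^{2}\nu_{i2}\cdots\nu_{i,j-1}] = \bbE[(\nu_{i1})_{2}\nu_{i2}\cdots\nu_{i,j-1}] + \bbE[\nu_{i1}\cdots\nu_{i,j-1}]$; substituting, bringing the $(j-1)\,\bbE[\nu_{i1}\cdots\nu_{i,j-1}]$ term to the left and simplifying the coefficient $N_{i}-(j-1)$ gives
\[
	(N_{i}-j+1)\left(\bbE\left[\nu_{i1}\cdots\nu_{i,j-1}\right] - \bbE\left[\nu_{i1}\cdots\nu_{ij}\right]\right) = (j-1)\,\bbE\left[(\nu_{i1})_{2}\,\nu_{i2}\cdots\nu_{i,j-1}\right],
\]
and dividing by $N_{i}-j+1$ and using $(N_{i})_{j} = (N_{i})_{j-1}(N_{i}-j+1)$ yields the stated identity.

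This argument is entirely elementary, so there is no real obstacle; the only points requiring care are that the split of the sum is exhaustive and disjoint, that the multiplicity of the ``$m \ge j$'' block is exactly $N_{i}-j+1$ (which tacitly uses $j \le N_{i}$, so that $j$ distinct niches exist, the right-hand side is well defined, and $N_{i}-j+1 \ge 1$ makes the division legitimate), and that each appeal to exchangeability is to an honest permutation of $\{1,\dots,N_{i}\}$. One may also note that, read through Kingman's interpretation of $\tfrac{(N_{i})_{j}}{(N_{i})_{k}}\bbE[(\nu_{i1})_{k_{1}}\cdots]$ as an ancestral sampling probability, the identity is just the decomposition obtained by conditioning a $(j-1)$-parent configuration on whether one further sampled individual shares a parent with the existing sample or not.
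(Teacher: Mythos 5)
Your proposal is correct and follows essentially the same route as the paper: multiply by $N_{i}=\sum_{m}\nu_{im}$, split the resulting sum according to whether the extra index lies in $\{1,\dots,j-1\}$ or not, invoke exchangeability, and convert $\nu_{i1}^{2}$ to the falling factorial $(\nu_{i1})_{2}$ before rearranging. The only difference is cosmetic bookkeeping (you substitute $\nu_{i1}^{2}=(\nu_{i1})_{2}+\nu_{i1}$ directly, whereas the paper writes $N_{i}\,\bbE[\nu_{i1}\cdots\nu_{ij-1}]$ a second way and subtracts), which yields the identical identity.
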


\begin{proof}
We begin by observing that 
\begin{align*}
	N_{i} \bbE\left[\nu_{i1}\cdots\nu_{ij-1}\right] &= \bbE\left[N_{i}\nu_{i1}\cdots\nu_{ij-1}\right]\\
	&= \bbE\left[(\nu_{i1} + \cdots + \nu_{iN_{i}})\nu_{i1}\cdots\nu_{ij-1}\right]\\
	&= \bbE\left[\sum_{k=1}^{N_{i}} \nu_{i1}\cdots\nu_{ij-1}\nu_{ik}\right]\\
	&= \bbE\left[\sum_{k=1}^{j-1} \nu_{i1}\cdots\nu_{ij-1}\nu_{ik}+ \sum_{k=j}^{N_{i}} \nu_{i1}\cdots\nu_{ij-1}\nu_{ik}\right]\\
	&= \sum_{k=1}^{j-1} \bbE\left[\nu_{i1}\cdots\nu_{ij-1}\nu_{ik}\right]+ \sum_{k=j}^{N_{i}} \bbE\left[\nu_{i1}\cdots\nu_{ij-1}\nu_{ik}\right]\\
	&= \sum_{k=1}^{j-1} \bbE\left[\nu_{ik}^{2} \prod_{{l=1} \atop {l \neq k}}^{j-1} \nu_{il}\right]+ \sum_{k=j}^{N_{i}} \bbE\left[\nu_{i1}\cdots\nu_{ij-1}\nu_{ik}\right]\\
\intertext{and thus, exploiting the exchangeability of the $\nu_{ij}$,}
	&= (j-1)\bbE\left[\nu_{i1}^{2}\cdots\nu_{ij-1}\right] + (N_{i}-j+1)\bbE\left[\nu_{i1}\cdots\nu_{ij}\right]. 
\end{align*}
On the other hand, 
\[
	N_{i} \bbE\left[\nu_{i1}\cdots\nu_{ij-1}\right] 
		= (j-1)\bbE\left[\nu_{i1}\cdots\nu_{ij-1}\right] + (N_{i}-j+1)\bbE\left[\nu_{i1}\cdots\nu_{ij-1}\right].
\]
Equating the two sides and subtracting, we get 
\[
	(N_{i}-j+1)\left(\bbE\left[\nu_{i1}\cdots\nu_{ij-1}\right] - \bbE\left[\nu_{i1}\cdots\nu_{ij}\right]\right) = 
		(j-1)\left(\bbE\left[\nu_{i1}^{2}\cdots\nu_{ij-1}\right] - \bbE\left[\nu_{i1}\cdots\nu_{ij-1}\right]\right).
\]	
The result follows.
\end{proof}

\begin{rem}\label{DIFFERENCE}
In conjunction with \eqref{MOHLECONS1}, the lemma tells us that for all $j > 1$, 
\[
	\bbE\left[\nu_{i1}\cdots\nu_{ij-1}\right] - \bbE\left[\nu_{i1}\cdots\nu_{ij}\right] = \BigO{c_{N_{i}}},
\]
and thus, 
\[
	\bbE\left[\nu_{i1}\cdots\nu_{iq}\right] - \bbE\left[\nu_{i1}\cdots\nu_{ir}\right] = \BigO{c_{N_{i}}},
\]
for any $q < r$.
\end{rem}

Next, we observe that

\begin{lem}\label{ONE}
For all $j$,
\[
	\bbE\left[\nu_{i1}\cdots\nu_{ij}\right] = 1- {j \choose 2} \frac{(N_{i})_{j-1}}{(N_{i})_{j}} \bbE\left[(\nu_{i1})_{2}\cdots\nu_{ij-1}\right] - o(c_{N_{i}}).
\]
\end{lem}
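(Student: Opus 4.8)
The plan is to telescope back to the difference identity of the preceding Lemma. Exchangeability together with $\sum_{k=1}^{N_{i}}\nu_{ik} = N_{i}$ forces $\bbE[\nu_{i1}] = 1$ (this already settles $j=1$), so for $j \geq 2$ I would write
\[
	\bbE\left[\nu_{i1}\cdots\nu_{ij}\right] = \bbE\left[\nu_{i1}\right] - \sum_{r=2}^{j}\left(\bbE\left[\nu_{i1}\cdots\nu_{i,r-1}\right] - \bbE\left[\nu_{i1}\cdots\nu_{ir}\right]\right)
\]
and insert the preceding Lemma into each summand, obtaining
\[
	\bbE\left[\nu_{i1}\cdots\nu_{ij}\right] = 1 - \sum_{r=2}^{j}(r-1)\,A_{r},
	\qquad
	A_{r} \defn \frac{(N_{i})_{r-1}}{(N_{i})_{r}}\,\bbE\left[(\nu_{i1})_{2}\nu_{i2}\cdots\nu_{i,r-1}\right] = \frac{\bbE\left[(\nu_{i1})_{2}\nu_{i2}\cdots\nu_{i,r-1}\right]}{N_{i}-r+1}.
\]
Since $\sum_{r=2}^{j}(r-1) = \binom{j}{2}$, the claimed formula is exactly the statement that $\sum_{r=2}^{j}(r-1)(A_{r}-A_{j}) = o(c_{N_{i}})$; as this is a sum of finitely many terms, it suffices to show $A_{r} - A_{r+1} = o(c_{N_{i}})$ for each fixed $r$, and then telescope over $r = 2,\ldots,j-1$.

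For the one–step estimate I would run the same ``multiply by $N_{i} = \sum_{k}\nu_{ik}$ and isolate coincidences'' computation that proved the preceding Lemma, but applied to the variable $(\nu_{i1})_{2}\nu_{i2}\cdots\nu_{i,r-1}$ rather than $\nu_{i1}\cdots\nu_{i,r-1}$. Using $(\nu_{i1})_{2}\,\nu_{i1} = (\nu_{i1})_{3} + 2(\nu_{i1})_{2}$ for the $k=1$ contribution, $\nu_{ik}^{2} = (\nu_{ik})_{2} + \nu_{ik}$ for the contributions $k = 2,\ldots,r-1$, and exchangeability to collapse the contributions $k = r,\ldots,N_{i}$, one is led to
\[
	(N_{i}-r)\,\bbE\left[(\nu_{i1})_{2}\nu_{i2}\cdots\nu_{i,r-1}\right] = (N_{i}-r+1)\,\bbE\left[(\nu_{i1})_{2}\nu_{i2}\cdots\nu_{ir}\right] + E_{r},
\]
with $E_{r} = \bbE\left[(\nu_{i1})_{3}\nu_{i2}\cdots\nu_{i,r-1}\right] + (r-2)\,\bbE\left[(\nu_{i1})_{2}(\nu_{i2})_{2}\nu_{i3}\cdots\nu_{i,r-1}\right]$. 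Dividing by $(N_{i}-r)(N_{i}-r+1)$, this rearranges to $A_{r+1} = A_{r} - E_{r}\big/\big[(N_{i}-r)(N_{i}-r+1)\big]$.

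It then remains to control $E_{r}$. Each of its two summands has some index raised to the power $3$, or two distinct indices raised to the power $2$, so \eqref{MOHLECONS2} applies and tells us precisely that multiplying each by $(N_{i})_{r-1}/(N_{i})_{r+1} = 1/\big[(N_{i}-r)(N_{i}-r+1)\big]$ produces a quantity that is $o(c_{N_{i}})$; hence $E_{r}\big/\big[(N_{i}-r)(N_{i}-r+1)\big] = o(c_{N_{i}})$, so $A_{r}-A_{r+1} = o(c_{N_{i}})$. Summing these differences over the finitely many $r \leq j-1$ gives $A_{r}-A_{j}=o(c_{N_{i}})$, and substituting into the telescoped identity yields $\bbE[\nu_{i1}\cdots\nu_{ij}] = 1 - \binom{j}{2}A_{j} - o(c_{N_{i}})$, i.e. the assertion (recall $A_{j} = \tfrac{(N_{i})_{j-1}}{(N_{i})_{j}}\bbE[(\nu_{i1})_{2}\nu_{i2}\cdots\nu_{i,j-1}]$). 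The one genuinely delicate point — the part I would be most careful about — is this order bookkeeping in $E_{r}$: one must verify that the falling–factorial prefactor furnished by \eqref{MOHLECONS2} is of exactly the order $\Theta(N_{i}^{-2})$ that is cancelled in passing from $E_{r}$ back to $A_{r}-A_{r+1}$, so that no stray factor of $N_{i}$ survives. Equivalently, one can use the monotonicity Lemma~\ref{MOHLELEM} to bound $E_{r}$ by a constant multiple of $\big(\bbE[(\nu_{i1})_{3}] + \bbE[(\nu_{i1})_{2}(\nu_{i2})_{2}]\big)(N_{i}-r)(N_{i}-r+1)$, and then invoke \eqref{MOHLECOND} (which, with $\bbE[(\nu_{i1})_{2}] = (N_{i}-1)c_{N_{i}}$, gives $\bbE[(\nu_{i1})_{3}] = o(N_{i}^{2}c_{N_{i}})$) together with the opening Lemma of this section (which gives $\bbE[(\nu_{i1})_{2}(\nu_{i2})_{2}] = o(c_{N_{i}})$) to reach the same conclusion; everything else is routine algebra with falling factorials.
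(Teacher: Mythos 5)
Your proof is correct, but it reaches the identity by a genuinely different route than the paper. The paper proves this lemma directly: it expands $(N_{i})_{j} = (\nu_{i1}+\cdots+\nu_{iN_{i}})_{j}$ via the multinomial (Vandermonde-type) identity for falling factorials, takes expectations, uses exchangeability to collapse the sum over partition types, divides by $(N_{i})_{j}$, and truncates all terms with a triple coincidence or two pair coincidences using \eqref{MOHLECONS2} — the all-singleton term gives $\bbE[\nu_{i1}\cdots\nu_{ij}]$ and the single-pair terms give the $\binom{j}{2}$ correction in one stroke. You instead telescope the preceding difference lemma to get the exact identity $\bbE[\nu_{i1}\cdots\nu_{ij}] = 1 - \sum_{r=2}^{j}(r-1)A_{r}$, and then show all the $A_{r}$ agree up to $o(c_{N_{i}})$; your one-step estimate $A_{r+1} = A_{r} - E_{r}/[(N_{i}-r)(N_{i}-r+1)]$ with $E_{r}$ controlled by \eqref{MOHLECONS2} is precisely the content (and essentially the proof) of the paper's Lemma~\ref{TWO}, which in the paper appears only \emph{after} this lemma and is proved independently, so there is no circularity — you have simply swapped the logical order, deriving the present lemma as a corollary of the preceding lemma plus a re-derived Lemma~\ref{TWO}. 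What each buys: the paper's expansion is shorter once one accepts the falling-factorial multinomial identity, while your route is more elementary (only the ``multiply by $N_{i}=\sum_{k}\nu_{ik}$ and isolate coincidences'' manipulation already used for the preceding lemma) and renders Lemma~\ref{TWO} a byproduct rather than a separate result. One small caveat: your ``equivalently'' aside is mis-bookkept — Lemma~\ref{MOHLELEM} gives $E_{r} \leq \frac{(N_{i}-r)(N_{i}-r+1)}{(N_{i}-1)(N_{i}-2)}\,\bbE[(\nu_{i1})_{3}] + (r-2)\frac{(N_{i}-r)(N_{i}-r+1)}{(N_{i}-2)(N_{i}-3)}\,\bbE[(\nu_{i1})_{2}(\nu_{i2})_{2}]$, and the factors of order $N_{i}^{-2}$ you dropped are exactly what is needed, since $\bbE[(\nu_{i1})_{3}] = o(N_{i}^{2}c_{N_{i}})$ alone would only give $o(N_{i}^{2}c_{N_{i}})$ after dividing out $(N_{i}-r)(N_{i}-r+1)$; this does not affect your proof, because your primary argument, which invokes \eqref{MOHLECONS2} with the prefactor $(N_{i})_{r-1}/(N_{i})_{r+1} = 1/[(N_{i}-r)(N_{i}-r+1)]$, already closes the bound correctly.
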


\begin{proof}
This is a consequence of the identity
\[
	(N_{i})_{j} = (\nu_{i1} + \cdots + \nu_{iN_{i}})_{j} = \sum_{j_{1}+\cdots+j_{N_{i}}=j} \frac{j!}{j_{1}! \cdots j_{N_{i}}!} (\nu_{i1})_{j_{1}}\cdots(\nu_{iN_{i}})_{j_{N_{i}}},
\]
where we assume $0! = 1$ for ease of notation, and we assume that most of the $j_{i}$ are equal to zero.  Equivalently, if we only consider non-zero values,
\[
	(N_{i})_{j} = \sum_{m=1}^{j} \sum_{{n_{1},\ldots,n_{m}} \atop{\text{distinct}}} \sum_{k_{1} + \cdots + k_{m} = k}  \frac{j!}{k_{1}! \cdots k_{m}!} 
		(\nu_{in_{1}})_{k_{1}}\cdots(\nu_{in_{m}})_{k_{m}}.
\]
Taking expectations on both sides, and using the exchangeability of $(\nu_{i1},\ldots,\nu_{iN_{i}})$, we have
\[
	(N_{i})_{j} = \sum_{m=1}^{j} \sum_{{n_{1},\ldots,n_{m}} \atop{\text{distinct}}} \sum_{k_{1} + \cdots + k_{m} = k}  \frac{j!}{k_{1}! \cdots k_{m}!} 
		\bbE\left[(\nu_{i1})_{k_{1}}\cdots(\nu_{im})_{k_{m}}\right].
\]
Now, observe that the expected value of the summand is independent of the choice of the values $n_{1},\ldots,n_{m}$, that can be chosen in $N \choose m$ ways.  Moreover, the expectation $\bbE\left[(\nu_{i1})_{k_{1}}\cdots(\nu_{im})_{k_{m}}\right]$ remains unchanged under permutations, and thus are all equal to 
\[
	\bbE\left[(\nu_{i1})_{\tilde{k}_{1}}\cdots(\nu_{im})_{\tilde{k}_{m}}\right],
\]
where $\tilde{k}_{1} \geq \tilde{k}_{2} \geq \cdots \geq \tilde{k}_{m}$ are the values $k_{1},\ldots,k_{m}$ listed in decreasing order.  If we let $a_{p}$ be the number of indices $q$ such that $k_{q} = p$, 
\[
	a_{p} = \#\{ q : k_{q} = p \},
\]
then
\begin{multline*}
	\sum_{m=1}^{j} \sum_{{n_{1},\ldots,n_{m}} \atop{\text{distinct}}} \sum_{k_{1} + \cdots + k_{m} = k}  \frac{j!}{k_{1}! \cdots k_{m}!} 
		\bbE\left[(\nu_{i1})_{k_{1}}\cdots(\nu_{im})_{k_{m}}\right]\\
	= \sum_{m=1}^{j} \sum_{{\tilde{k}_{1} + \cdots + \tilde{k}_{m} = k} \atop {\tilde{k}_{1} \geq \tilde{k}_{2} \geq \cdots \geq \tilde{k}_{m}}}  \frac{j!}{\tilde{k}_{1}! \cdots \tilde{k}_{m}!} 
		\frac{m!}{a_{1}! \cdots a_{j}!}{N \choose m} \bbE\left[(\nu_{i1})_{\tilde{k}_{1}}\cdots(\nu_{im})_{\tilde{k}_{m}}\right],
\end{multline*}
so that, simplifying and dividing through by $(N)_{j}$, we have
\begin{align*}
	1 &= \sum_{m=1}^{j} \sum_{{\tilde{k}_{1} + \cdots + \tilde{k}_{m} = k} \atop {\tilde{k}_{1} \geq \tilde{k}_{2} \geq \cdots \geq \tilde{k}_{m}}}  \frac{j!}{\tilde{k}_{1}! \cdots \tilde{k}_{m}!} 
		\frac{1}{a_{1}! \cdots a_{j}!} \frac{(N)_{m}}{(N)_{j}} \bbE\left[(\nu_{i1})_{\tilde{k}_{1}}\cdots(\nu_{im})_{\tilde{k}_{m}}\right]\\
		&= \bbE\left[\nu_{i1}\cdots\nu_{ij}\right] + {j \choose 2} \frac{(N_{i})_{j-1}}{(N_{i})_{j}} \bbE\left[(\nu_{i1})_{2}\cdots\nu_{ij-1}\right] + o(c_{N_{i}}),
\end{align*}
where, using \eqref{MOHLECONS2}, we have truncated after the two highest order terms in the sum.
\end{proof}

We conclude this section with a final observation, 

\begin{lem}\label{TWO}
Let $j > 1$.  Then, 
\[
	\frac{(N_{i})_{j}}{(N_{i})_{j+1}}\bbE\left[(\nu_{i1})_{2}\nu_{i2}\cdots\nu_{ij}\right]  
		= \frac{(N_{i})_{j-1}}{(N_{i})_{j}}\bbE\left[(\nu_{i1})_{2}\nu_{i2}\cdots\nu_{ij-1}\right] + o(c_{N_{i}}). 
\]
\end{lem}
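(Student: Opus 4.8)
The plan is to reuse the device behind Lemma~\ref{ONE} and the preceding difference lemma: I would write $N_{i} = \sum_{k=1}^{N_{i}} \nu_{ik}$, multiply a lower-order expectation by it, and read off a recursion for the target quantity, throwing away every term that \eqref{MOHLECONS2} or \eqref{MOHLECOND} certifies to be $o(c_{N_{i}})$. Concretely, I would start from
\[
 N_{i}\,\bbE\big[(\nu_{i1})_{2}\nu_{i2}\cdots\nu_{ij-1}\big] = \sum_{k=1}^{N_{i}} \bbE\big[\nu_{ik}\,(\nu_{i1})_{2}\nu_{i2}\cdots\nu_{ij-1}\big]
\]
and split the sum according to whether $k=1$, $k\in\{2,\ldots,j-1\}$, or $k\in\{j,\ldots,N_{i}\}$.

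For $k=1$ I use the Pochhammer identity $\nu_{i1}(\nu_{i1})_{2} = (\nu_{i1})_{3} + 2(\nu_{i1})_{2}$; for each of the $j-2$ indices $k\in\{2,\ldots,j-1\}$ I use $\nu_{ik}^{2} = (\nu_{ik})_{2} + \nu_{ik}$ and exchangeability to rewrite the term as $\bbE[(\nu_{i1})_{2}(\nu_{i2})_{2}\prod_{l=3}^{j-1}\nu_{il}] + \bbE[(\nu_{i1})_{2}\nu_{i2}\cdots\nu_{ij-1}]$; and each of the $N_{i}-j+1$ terms with $k\geq j$ equals $\bbE[(\nu_{i1})_{2}\nu_{i2}\cdots\nu_{ij}]$ by exchangeability. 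Collecting, the coefficient of $\bbE[(\nu_{i1})_{2}\nu_{i2}\cdots\nu_{ij-1}]$ on the right becomes $2+(j-2)=j$, and rearranging yields
\[
 (N_{i}-j+1)\,\bbE\big[(\nu_{i1})_{2}\nu_{i2}\cdots\nu_{ij}\big] = (N_{i}-j)\,\bbE\big[(\nu_{i1})_{2}\nu_{i2}\cdots\nu_{ij-1}\big] - E,
\]
with $E \defn \bbE[(\nu_{i1})_{3}\nu_{i2}\cdots\nu_{ij-1}] + (j-2)\,\bbE[(\nu_{i1})_{2}(\nu_{i2})_{2}\prod_{l=3}^{j-1}\nu_{il}]$. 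Dividing through by $(N_{i}-j)(N_{i}-j+1)$ and using $\tfrac{(N_{i})_{j}}{(N_{i})_{j+1}} = \tfrac{1}{N_{i}-j}$, $\tfrac{(N_{i})_{j-1}}{(N_{i})_{j}} = \tfrac{1}{N_{i}-j+1}$ and $\tfrac{(N_{i})_{j-1}}{(N_{i})_{j+1}} = \tfrac{1}{(N_{i}-j)(N_{i}-j+1)}$ produces exactly the claimed identity with remainder $-\tfrac{(N_{i})_{j-1}}{(N_{i})_{j+1}}E$.

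It then remains to check $\tfrac{(N_{i})_{j-1}}{(N_{i})_{j+1}}E = o(c_{N_{i}})$. Both summands of $E$ involve $j-1$ distinct labels whose exponent multiset sums to $j+1$: in the first one exponent equals $3$, in the second two exponents equal $2$, so \eqref{MOHLECONS2} (with its ``$j$'' read as $j-1$ and its ``$k$'' read as $j+1$) applies to each, and the fixed constant $j-2$ is harmless. I expect the only real care to be the bookkeeping of the falling-factorial normalisations, so that the remainder is displayed in precisely the form $\tfrac{(N)_{j}}{(N)_{k}}\bbE[\cdots]$ to which \eqref{MOHLECONS2} is tailored, together with the small-$j$ boundary cases: when $j=2$ the index range $\{2,\ldots,j-1\}$ is empty and $j-2=0$, so $E$ collapses to $\bbE[(\nu_{i1})_{3}]$ and one invokes \eqref{MOHLECOND} directly (this being one of the two reference quantities in the Remark after Lemma~\ref{MOHLELEM}); when $j=3$ the second term of $E$ is literally the other reference quantity $\tfrac{(N_{i})_{2}}{(N_{i})_{4}}\bbE[(\nu_{i1})_{2}(\nu_{i2})_{2}]$.
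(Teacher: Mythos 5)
Your proposal is correct and is essentially the paper's own argument run in the opposite direction: the paper expands $(N_{i}-j+1)\,\bbE\left[(\nu_{i1})_{2}\nu_{i2}\cdots\nu_{ij}\right]$ over the last factor and substitutes $\nu_{ij}+\cdots+\nu_{iN_{i}}=N_{i}-\nu_{i1}-\cdots-\nu_{ij-1}$, which produces exactly your recursion with the error term $E$ consisting of the $(\nu_{i1})_{3}$- and $(\nu_{i1})_{2}(\nu_{i2})_{2}$-type expectations, discarded via \eqref{MOHLECONS2} just as you do. Your constant $(N_{i}-j)$ is in fact the correct one (the paper's displayed $N_{i}-j+2$ is a minor slip that is absorbed into the $o(c_{N_{i}})$ term), and your treatment of the boundary cases $j=2,3$ is consistent with the remark following Lemma~\ref{MOHLELEM}, so nothing further is needed.
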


\begin{proof}
Again exploiting exchangeability, we see that
\begin{align*}
	(N_{i}-j+&1)\bbE\left[(\nu_{i1})_{2}\nu_{i2}\cdots\nu_{ij}\right] \\
		&= \bbE\left[(\nu_{i1})_{2}\nu_{i2}\cdots\nu_{ij-1}\nu_{ij}\right] + \bbE\left[(\nu_{i1})_{2}\nu_{i2}\cdots\nu_{ij-1}\nu_{ij+1}\right] + \cdots 
			+ \bbE\left[(\nu_{i1})_{2}\nu_{i2}\cdots\nu_{ij-1}\nu_{iN_{i}}\right]\\
		&= \bbE\left[(\nu_{i1})_{2}\nu_{i2}\cdots\nu_{ij-1}(\nu_{ij}+\cdots+\nu_{iN_{i}})\right]\\
		&= \bbE\left[(\nu_{i1})_{2}\nu_{i2}\cdots\nu_{ij-1}(N_{i}-\nu_{i1}-\cdots-\nu_{ij-1})\right]\\
		&= \bbE\left[(\nu_{i1})_{2}\nu_{i2}\cdots\nu_{ij-1}(N_{i}-j+2-(\nu_{i1}-2) - (\nu_{i2}-1)-\cdots-(\nu_{ij-1}-1))\right]\\
		&= (N_{i}-j+2)\bbE\left[(\nu_{i1})_{2}\nu_{i2}\cdots\nu_{ij-1}\right] - \bbE\left[(\nu_{i1})_{3}\nu_{i2}\cdots\nu_{ij-1}\right] \\
		&\qquad - \bbE\left[(\nu_{i1})_{2}(\nu_{i2})_{2}\nu_{i3}\cdots\nu_{ij-1}\right] - \cdots - \bbE\left[(\nu_{i1})_{2}\nu_{i2}\cdots(\nu_{ij-1})_{2}\right].
\end{align*}
In particular, dividing both sides by $(N_{i}-j+1)(N_{i}-j+2)$, we have 
\begin{multline*}
	\frac{(N_{i})_{j}}{(N_{i})_{j+1}}\bbE\left[(\nu_{i1})_{2}\nu_{i2}\cdots\nu_{ij}\right] = \frac{(N_{i})_{j-1}}{(N_{i})_{j}}\bbE\left[(\nu_{i1})_{2}\nu_{i2}\cdots\nu_{ij-1}\right] 
	- \frac{(N_{i})_{j-1}}{(N_{i})_{j+1}}\bbE\left[(\nu_{i1})_{3}\nu_{i2}\cdots\nu_{ij-1}\right] \\
	- \frac{(N_{i})_{j-1}}{(N_{i})_{j+1}}\bbE\left[(\nu_{i1})_{2}(\nu_{i2})_{2}\nu_{i3}\cdots\nu_{ij-1}\right] - \cdots 
	- \frac{(N_{i})_{j-1}}{(N_{i})_{j+1}}\bbE\left[(\nu_{i1})_{2}\nu_{i2}\cdots(\nu_{ij-1})_{2}\right]
\end{multline*}
and the result again follows by  \eqref{MOHLECONS2}.
\end{proof}

\begin{rem}
Iterating the previous lemma, we see that 
\[
	\frac{(N_{i})_{j}}{(N_{i})_{j+1}}\bbE\left[(\nu_{i1})_{2}\cdots\nu_{ij}\right]  = \cdots = \frac{\bbE\left[(\nu_{i1})_{2}\right]}{N_i-1} + o(c_{N_{i}}) = c_{N_{i}} + o(c_{N_{i}}). 
\]
\end{rem}

\subsection{Convergence to a Limit}\label{INTERMEDIATE}

We will be interested in weak limits of the random measures $\bG^{(N)}(n)$ in two time-scales determined by $N$, a ``slow-time'' process, $\bG^{(N)}(\lfloor c_{N}^{-1} t \rfloor)$, and an ``intermediate-time'' process $\bG^{(N)}(\lfloor a_{N}^{-1} t \rfloor)$, where $t > 0$ is a continuous time variable, and we will consider the limits as $N \to \infty$.

Our principal tool in doing this is the generator of $\bG^{(N)}(n)$, an operator on $C(\mathscr{P}([0,1])^{M+1})$ defined by
\[
	(\mathcal{G}_{N} F)(\bmu) = \mathbb{E}\left[ F(\bG^{(N)}(n+1)) \middle\vert \bG^{(N)}(n) = \bmu \right] - F(\bmu).
\]
Knowing $(\mathcal{G}_{N} F)(\bmu)$ for all $F \in C(\mathscr{P}([0,1])^{M+1})$ and all $\bmu \in \mathscr{P}([0,1])^{M+1}$ completely characterizes the transition probabilities of $\bG^{(N)}$, and thus, together with the initial value $\bG^{(N)}(0)$, allow us to characterize the process (although not necessarily the limit, see \eg \cite{Ethier+Kurtz86}).

Our limiting processes are continuous, rather than discrete time random variables, but also have associated generators; in general, if $\bH(t)$ is a continuous time process taking values in $\mathscr{P}([0,1])^{M+1}$ and $F \in C(\mathscr{P}([0,1])^{M+1})$, then $\bH(t)$ has generator $\mathcal{H}$:
\[
	(\mathcal{H}F)(\bmu) = \lim_{h \to 0^{+}} \frac{ \mathbb{E}\left[ F(\bH(t+h)) \middle\vert \bH(t) = \bmu \right] - F(\bmu)}{h},
\]
with domain $\mathscr{D}(\mathcal{H})$, consisting of all functions $F$ for which the limit exists.  

The notion of a generator simultaneously generalizes the transition matrix, master equation, and diffusion equations of classical probability.   The typical proof of convergence  proceeds by first showing that a limit exists, then characterizing the limit by first determining the limit of the generators, and finally showing that given the initial conditions (via a distribution from which they are drawn), there is a unique process with that generator (\eg \cite{Ethier+Kurtz86} is a standard reference).

\begin{rem}
Note that $(\mathcal{H}F)(\bmu)$ is the right-hand derivative of $\mathbb{E}\left[ F(\bH(t+h)) \middle\vert \bH(t) = \bmu \right]$ at $t=0$.  In particular, if the generator vanishes,
then $\mathbb{E}\left[ F(\bH(t)) \middle\vert \bH(0) = \bmu \right] = F(\bmu)$ for all $t > 0$, and all $F$, and the process $\bH(t) \equiv \bmu$ is constant.   This will be important when we come to consider the limit on the intermediate time scale.
\end{rem}


We will make use of the fact that the set of functions
\[
	\mathcal{C} \defn \left\{ F(\bmu) = \prod_{i = 0}^{M} \prod_{k = 1}^{K_{i}} \langle f_{ik}, \mu_{i} \rangle \middle\vert K_{i} \in \bbN_{0}, f_{ik} \in C([0,1]) \right\}
\]
 is separating, and convergence determining \cite{Ethier+Kurtz86}, so that for the purpose of characterizing our process and its limits, we need only compute the value the generator takes on functions $F \in \mathcal{C}$ and its limits.
 
We will evaluate the generator on this class of functions, but we first begin with a pair of lemmas.  We will use $\coprod$ to indicate the disjoint union of sets and, for all integers 
$M > 0$, we use the shorthand $[M] = \{1,\ldots,M\}$.

\begin{lem}\label{DECOMPOSE}
Let $\mu_i = \frac{1}{N_{i}} \sum_{j=1}^{N_{i}} \delta_{x_{ij}}$ for $x_{ij} \in [0,1]$.  Then,
\begin{multline*}
	\prod_{k = 1}^{K_{i}} \langle f_{ik}, \mu_i \rangle
	= \frac{1}{N_{i}^{K_{i}}} 
	\sum_{m=1}^{K_{i}} \sum_{{j_{1},\ldots,j_{m}} \atop{\text{distinct}}} \sum_{A_{1} \coprod \cdots \coprod A_{m} = [K_{i}]} \prod_{q = 1}^{m} \prod_{r \in A_{q}}  f_{ir}(x_{ij_{q}})\\
	= \frac{1}{N_{i}^{K_{i}}} \sum_{{j_{1},\ldots,j_{K_{i}}} \atop{\text{distinct}}} \prod_{k = 1}^{K_{i}} f_{ik}(x_{ij_{k}}) + \BigO{N^{-1}},
\end{multline*}
where the sum is over all partitions of $[K_{i}]$ into $m$ disjoint sets.
\end{lem}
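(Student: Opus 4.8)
The plan is entirely elementary: expand the product of empirical averages into a single sum over multi-indices and then regroup that sum according to which of the indices coincide. First I would use the definition $\langle f_{ik},\mu_i\rangle = N_i^{-1}\sum_{j=1}^{N_i} f_{ik}(x_{ij})$ to write
\[
	\prod_{k=1}^{K_i}\langle f_{ik},\mu_i\rangle
	= \frac{1}{N_i^{K_i}}\sum_{j_1=1}^{N_i}\cdots\sum_{j_{K_i}=1}^{N_i}\prod_{k=1}^{K_i} f_{ik}(x_{ij_k}).
\]
Every multi-index $(j_1,\dots,j_{K_i})\in[N_i]^{K_i}$ determines a partition of $[K_i]$ by placing $k$ and $k'$ in the same block iff $j_k=j_{k'}$; conversely, specifying a partition $A_1\coprod\cdots\coprod A_m=[K_i]$ into $m$ blocks together with a choice of $m$ \emph{distinct} values $j_1,\dots,j_m\in[N_i]$ (the value $j_q$ attached to the block $A_q$) recovers a unique such multi-index, namely the one with $j_k=j_q$ for every $k\in A_q$. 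Splitting the displayed sum over $[N_i]^{K_i}$ along these partitions, and noting that on the piece indexed by $(A_1,\dots,A_m;j_1,\dots,j_m)$ the summand equals $\prod_{q=1}^m\prod_{r\in A_q} f_{ir}(x_{ij_q})$, produces exactly the first asserted identity.

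For the second identity I would single out the contribution of $m=K_i$. There is a unique partition of $[K_i]$ into $K_i$ blocks, the partition into singletons, for which the double product degenerates to $\prod_{k=1}^{K_i} f_{ik}(x_{ij_k})$ subject to $j_1,\dots,j_{K_i}$ pairwise distinct; this is precisely the claimed leading term. It then remains to estimate the terms with $1\le m\le K_i-1$. For each such $m$ the number of partitions of $[K_i]$ into $m$ blocks is a finite constant independent of $N$, the number of admissible tuples of distinct indices is $(N_i)_m\le N_i^{K_i-1}$, and every factor $f_{ir}(x_{ij_q})$ is bounded by $\norm{f_{ir}}_{\infty}<\infty$ since each $f_{ir}$ is continuous on the compact interval $[0,1]$. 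Hence the sum of these terms is bounded in absolute value by a constant depending only on $K_i$ and the $\norm{f_{ir}}_{\infty}$ times $N_i^{K_i-1}/N_i^{K_i}=N_i^{-1}$, which is $\BigO{N^{-1}}$ under the standing convention that all deme sizes diverge together with the system size $N$.

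I do not expect any genuine obstacle here: the whole content is the combinatorial identification of multi-indices in $[N_i]^{K_i}$ with pairs (partition of $[K_i]$, injective labelling of its blocks by elements of $[N_i]$), followed by a one-line boundedness estimate. The only points warranting a little care are (i) being precise that a partition together with an assignment of distinct block-values corresponds to exactly one multi-index, so that no over- or under-counting creeps into the first identity (in particular one must be consistent about treating the blocks as labelled rather than as an unordered family), and (ii) the bookkeeping of the error term, which is really $\BigO{N_i^{-1}}$ and is recorded as $\BigO{N^{-1}}$ only because in the subsequent applications the sizes $N_i$ grow with $N$ and only the vanishing of the remainder as $N\to\infty$ is used.
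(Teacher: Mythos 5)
Your proposal is correct and follows essentially the same route as the paper: expand the product of empirical averages into a sum over multi-indices, regroup according to the coincidence pattern of the indices (equivalently, labelled partitions of $[K_{i}]$ with distinct block values) to get the first identity, and then identify the $m=K_{i}$ (all-singletons) term as the leading contribution. The only cosmetic difference is in the error estimate: the paper bounds the $m<K_{i}$ terms by inductively re-expressing them as products of empirical averages times powers of $N_{i}$ (with Stirling-number bookkeeping, a form it reuses ``in reverse'' later), whereas you bound them directly by counting tuples and using sup-norm bounds on the $f_{ik}$; both give the same $\BigO{N_{i}^{-1}}$ remainder, and your remark that the stated $\BigO{N^{-1}}$ really means a remainder vanishing as the deme sizes grow with $N$ is an accurate reading of the paper's convention.
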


\begin{proof}
The first statement is simply a matter of collecting terms according to the number of distinct values $j_{k}$:
\begin{multline*}
	N_{i}^{K_{i}} \prod_{k = 1}^{K_{i}} \langle  f_{ik}, \mu_i \rangle = N_{i}^{K_{i}} \prod_{k = 1}^{K_{i}} \left(\frac{1}{N_{i}} \sum_{j=1}^{N_{i}} f_{ik}(x_{ij})\right)
	= \sum_{j_{1}=1}^{N_{i}} \cdots \sum_{j_{K_{i}}=1}^{N_{i}} \prod_{k = 1}^{K_{i}} f_{ik}(x_{ij_{k}})\\
	= \sum_{m=1}^{K_{i}} \sum_{{j_{1},\ldots,j_{m}} \atop{\text{distinct}}} 
		\sum_{A_{1} \coprod \cdots \coprod A_{m} = [K_{i}]} \prod_{q = 1}^{m} \prod_{r \in A_{q}}  f_{ir}(x_{ij_{q}}).
\end{multline*}
Now, for the final term, we have $m = 1$, $A_{1} = [K_{i}]$, so it takes the form:
\[	
	 \sum_{j_{1}=1}^{N_{i}} \prod_{k = 1}^{K_{i}} f_{ik}(x_{ij_{1}})  = N_{i} \langle \prod_{k = 1}^{K_{i}}  f_{ik}, \mu_i \rangle,
\]
whilst for $m = 2$, we have:
\begin{multline*}
	= \sum_{j_{1}=1}^{N_{i}} \sum_{j_{2} \neq j_{1}}  \sum_{A_{1} \coprod A_{2} = [K_{i}]} \prod_{k \in A_{1}} f_{ik}(x_{ij_{1}}) \prod_{k \in A_{2}} f_{ik}(x_{ij_{2}}) \\
	= \sum_{A_{1} \coprod A_{2} = [K_{i}]} \sum_{j_{1}=1}^{N_{i}} \prod_{k \in A_{1}} f_{ik}(x_{ij_{1}}) \sum_{j_{2}=1}^{N_{i}} \prod_{k \in A_{2}} f_{ik}(x_{ij_{2}}) 
	 - \sum_{A_{1} \coprod A_{2} = [K_{i}]}  \sum_{j_{1}=1}^{N_{i}} \prod_{k \in A_{1}} f_{ik}(x_{ij_{1}}) \prod_{k \in A_{2}} f_{ik}(x_{ij_{1}}) \\	
	 = N_{i}^{2}  \sum_{A_{1} \coprod A_{2} = [K_{i}]} \langle \prod_{k \in A_{1}} f_{ik}, \mu_i \rangle \langle \prod_{k \in A_{2}} f_{ik}, \mu_i \rangle
	 - S(K_{i},2) N_{i} \langle \prod_{k \in A_{1}} f_{ik}, \mu_i \rangle, 
\end{multline*}
where $S(K_{i},2)$ is a Stirling number of the second kind \cite{Abramowitz+Stegun64} and gives the number of distinct partitions of $K_{i}$ elements into 2 sets.

Proceeding inductively in this manner completes the proof of the lemma.
\end{proof}

The previous lemma shows we will be interested in products over distinct indices $j_{1},\ldots,j_{m}$.  In particular, we have the result of Lemma~\ref{EXPPROD}.

\begin{lem}\label{EXPPROD}
For distinct values $j_{1},\dots,j_{K_{i}}$ in $\{1,\ldots,N_{i}\}$, 
\begin{multline*}
 	\bbE\left[ \prod_{k = 1}^{K_{i}} f_{ik}(X_{ij_{k}}(n+1)) \middle\vert \{X_{ij}(n) = x_{ij} \} \right]  = 
	\frac{\bbE\left[\nu_{i1}\cdots\nu_{iK_{i}}\right]}{(N_{i})_{K_{i}}} \sum_{{p_{1},\ldots,p_{K_{i}}} \atop{\text{distinct}}}  \prod_{k = 1}^{K_{i}} (Q^{(N)}_{i} f_{ik})(x_{ip_{k}})\\
	+ \frac{\bbE\left[(\nu_{i1})_{2}\nu_{i2}\cdots\nu_{iK_{i}-1}\right]}{(N_{i})_{K_{i}}} \sum_{q < r} \sum_{{p_{1},\ldots,p_{K_{i}}} \atop {p_{q}=p_{r}}}
		\prod_{{k = 1} \atop {k \neq q,r}}^{K_{i}} (Q^{(N)}_{i} f_{ik})(x_{ip_{k}}) (Q^{(N)}_{i} f_{iq} Q^{(N)}_{i} f_{ir})(x_{ip_{q}}) + o(c_{N_{i}}).
\end{multline*}
\end{lem}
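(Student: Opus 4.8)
The plan is to condition on the entire configuration $\{X_{ij}(n)=x_{ij}\}$ of generation $n$ --- which simultaneously fixes the local types $x_{ij}$ and the metapopulation measure $G^{(N)}(n)$ --- and then decompose the conditional expectation according to the genealogy of the sampled offspring. Write $P_{1},\ldots,P_{K_{i}}$ for the within-deme parents in generation $n$ of the individuals occupying niches $j_{1},\ldots,j_{K_{i}}$ in generation $n+1$. Since the offspring labels are arbitrary, exchangeability of $(\nu_{i1},\ldots,\nu_{iN_{i}})$ gives, conditional on the offspring counts,
\[
	\bbP\{P_{1}=p_{1},\ldots,P_{K_{i}}=p_{K_{i}} \mid \nu\} = \frac{(\nu_{q_{1}})_{l_{1}}\cdots(\nu_{q_{m}})_{l_{m}}}{(N_{i})_{K_{i}}},
\]
where $q_{1},\ldots,q_{m}$ are the distinct values among $p_{1},\ldots,p_{K_{i}}$ with multiplicities $l_{1},\ldots,l_{m}$ (this is Kingman's sampling identity recalled earlier). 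Averaging over the counts and invoking exchangeability once more replaces the right-hand side by $(N_{i})_{K_{i}}^{-1}\bbE[(\nu_{i1})_{l_{1}}\cdots(\nu_{im})_{l_{m}}]$, which depends only on the multiplicity pattern.

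Next I would use the construction of migration and mutation encoded in \eqref{QI}: conditional on generation $n$ and on the parents $P_{1},\ldots,P_{K_{i}}$, the offspring types $X_{ij_{1}}(n+1),\ldots,X_{ij_{K_{i}}}(n+1)$ are independent (each slot independently draws its resident/migrant status, its migrant ancestor from $G^{(N)}(n)$, and its mutation), with $\bbE[f_{ik}(X_{ij_{k}}(n+1)) \mid P_{k}=p] = (Q^{(N)}_{i}f_{ik})(x_{ip})$; in particular, if $P_{q}=P_{r}=p$ then $\bbE[f_{iq}(X_{ij_{q}})f_{ir}(X_{ij_{r}}) \mid P_{q}=P_{r}=p] = (Q^{(N)}_{i}f_{iq})(x_{ip})\,(Q^{(N)}_{i}f_{ir})(x_{ip})$. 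Summing $\prod_{k}f_{ik}(X_{ij_{k}}(n+1))$ against the parent-pattern probabilities from the previous paragraph and grouping the tuples $(p_{1},\ldots,p_{K_{i}})$ by their coincidence structure (in the spirit of Lemma~\ref{DECOMPOSE}) then isolates three contributions: the configurations with all parents distinct, carrying coefficient $(N_{i})_{K_{i}}^{-1}\bbE[\nu_{i1}\cdots\nu_{iK_{i}}]$ and the sum over distinct tuples; the configurations in which exactly one pair $\{j_{q},j_{r}\}$ shares a parent, carrying coefficient $(N_{i})_{K_{i}}^{-1}\bbE[(\nu_{i1})_{2}\nu_{i2}\cdots\nu_{iK_{i}-1}]$ and the displayed sum over $q<r$; and a remainder.

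The last step --- the one I expect to be the real work --- is to check that the remainder is $o(c_{N_{i}})$ and that the two leading terms may be written in the clean form above. For a coincidence pattern with $j$ distinct parents there are $\BigO{N_{i}^{j}}$ admissible tuples, each $Q$-product is bounded uniformly in $N$ (the $f_{ik}$ are bounded and $Q^{(N)}_{i}$ is a Markov operator on $C([0,1])$), and the attached weight $(N_{i})_{K_{i}}^{-1}\bbE[(\nu_{i1})_{k_{1}}\cdots(\nu_{ij})_{k_{j}}]$ equals $(N_{i})_{j}^{-1}\cdot o(c_{N_{i}})\asymp N_{i}^{-j}\,o(c_{N_{i}})$ by \eqref{MOHLECONS2} whenever some $k_{q}\geq 3$ or two of the $k$'s are $\geq 2$ (the cases of a triple merger or of two disjoint pairwise mergers); multiplying counts by weights gives $o(c_{N_{i}})$ for each such family, and there are only finitely many families. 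The same estimate handles the over-counting implicit in writing $\sum_{q<r}\sum_{p_{q}=p_{r}}$ without forcing the remaining $p_{k}$ to be distinct, since every extra term there carries an additional falling-factorial factor and is thus $o(c_{N_{i}})$. One should also note, as in the remark following \eqref{MOHLECONS2}, that all of these error contributions are in fact dominated by the triple-merger probability $\frac{(N_{i})_{1}}{(N_{i})_{3}}\bbE[(\nu_{i1})_{3}]$ or the two-disjoint-pairs probability $\frac{(N_{i})_{2}}{(N_{i})_{4}}\bbE[(\nu_{i1})_{2}(\nu_{i2})_{2}]$ --- a refinement not needed for the present lemma but important for the long-timescale analysis. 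Assembling the three pieces yields the stated identity.
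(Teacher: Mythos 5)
Your proposal is correct and follows essentially the same route as the paper: decompose the conditional expectation by the parents' coincidence pattern, identify the pattern probabilities with the exchangeable factorial moments divided by $(N_{i})_{K_{i}}$, use the conditional independence encoded in $Q^{(N)}_{i}$ for the offspring types, and kill every pattern other than ``all distinct'' and ``exactly one pair'' via \eqref{MOHLECONS2} together with the $\BigO{N_{i}^{j}}$ tuple count. Your closing remark that the over-counting in $\sum_{q<r}\sum_{p_{q}=p_{r}}$ (not forcing the remaining indices to be distinct) is absorbed into the $o(c_{N_{i}})$ term is a correct and welcome clarification of a point the paper leaves implicit.
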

 
\begin{proof}
We begin by recalling that conditional on an individual's parent having type $x$, its type is independently distributed according to the probability measure $P(x,\cdot)$, \ie
\[
	\bbE\left[ f(X_{ij}(n+1)) \middle\vert \bG^{(N)}(n), \text{parent of type $x$}\right] = (Q^{(N)}_{i}f)(x).
\]
We can thus, similar to the previous lemma, write:
\begin{multline*}
	\bbE\left[ \prod_{k = 1}^{K_{i}} f_{ik}(X_{ij_{k}}(n+1)) \middle\vert \{X_{ij}(n) = x_{ij} \} \right] \\
	= \sum_{m=1}^{K_{i}} \sum_{{p_{1},\ldots,p_{m}} \atop{\text{distinct}}} \sum_{A_{1} \coprod \cdots \coprod A_{m} = [K_{i}]} 
		\frac{\bbE\left[(\nu_{ip_{1}})_{\abs{A_{1}}}\cdots(\nu_{ip_{m}})_{\abs{A_{m}}}\right]}{(N_{i})_{K_{i}}} \prod_{q = 1}^{m} \prod_{r \in A_{q}}  (Q^{(N)}_{i} f_{ir})(x_{ip_{q}}),
\end{multline*}
where
\[
	\frac{\bbE\left[(\nu_{ip_{1}})_{\abs{A_{1}}}\cdots(\nu_{ip_{m}})_{\abs{A_{m}}}\right]}{(N_{i})_{K_{i}}} 
	= \frac{\bbE\left[(\nu_{i1})_{\abs{A_{1}}}\cdots(\nu_{im})_{\abs{A_{m}}}\right]}{(N_{i})_{K_{i}}}
\]
is the probability that the $K_{i}$ distinct individuals have $m$ ancestors $p_{1},\ldots,p_{m}$ (with types $x_{ip_{1}},\ldots,x_{p_{m}}$), and that the individuals in $A_{q}$ had parent $p_{q}$.

Next, we observe that since $\norm{Q^{(N)}_{i}} \leq 1$,
\begin{multline*}
	\abs{\sum_{{p_{1},\ldots,p_{m}} \atop{\text{distinct}}} \sum_{A_{1} \coprod \cdots \coprod A_{m} = [K_{i}]} 
		\frac{\bbE\left[(\nu_{ip_{1}})_{\abs{A_{1}}}\cdots(\nu_{ip_{m}})_{\abs{A_{m}}}\right]}{(N_{i})_{K_{i}}} \prod_{q = 1}^{m} \prod_{r \in A_{q}}  (Q^{(N)}_{i} f_{ir})(x_{ip_{q}})}\\	
	\leq \sum_{A_{1} \coprod \cdots \coprod A_{m} = [K_{i}]} \frac{(N_{i})_{m}}{(N_{i})_{K_{i}}} \bbE\left[(\nu_{i1})_{\abs{A_{1}}}\cdots(\nu_{im})_{\abs{A_{m}}}\right] 
		\prod_{k=1}^{K_{i}} \norm{f_{ik}}\\
\end{multline*}
and is thus $o(c_{N_{i}})$ whenever $\abs{A_{q}} \geq 3$ for some $q$ or $\abs{A_{q}}$ and $\abs{A_{r}}$ are both $\geq 2$ for distinct indices $q, r$ by \eqref{MOHLECONS2}.  The result follows.
\end{proof}

We now turn to the main result of this section:

\begin{prop}\label{INTERMEDIATEPROP}
Let $\mu^{(N)}_{i} = \frac{1}{N_{i}} \sum_{j=1}^{N_{i}} \delta_{x_{ij}}$, for $x_{ij} \in [0,1]$ and let $\bmu^{(N)} = \mu^{(N)}_{1} \otimes \cdots \otimes \mu^{(N)}_{M}$ converge weakly to a measure $\bmu \in \mathscr{P}([0,1])^{M+1}$.  

Let $F(\bmu) = \prod_{i = 0}^{M} \prod_{k = 1}^{K_{i}} \langle f_{ik}, \mu_{i} \rangle \in \mathcal{C}$ and, for $i=1,\ldots,M$, let
\begin{multline}\label{INTERGEN}
	(\mathcal{G}_{i}F)(\bmu) = \prod_{{j=0} \atop {j \neq i}}^{M} \prod_{k = 1}^{K_{j}} \langle f_{jk}, \mu_{i} \rangle
	\left(\sum_{q = 1}^{K_{i}} \frac{\varpi_{i}}{2} \langle f_{iq}, \mu_{i} - \mu_{0}\rangle \prod_{{k=1} \atop {k \neq q}}^{K_{i}} \langle f_{ik}, \mu_{i} \rangle\right.\\
	\left. + \frac{1}{2} \sum_{q \neq r} \prod_{{k = 1} \atop {k \neq q,r}}^{K_{i}} \langle  f_{ik}, \mu_{i} \rangle
		\left(\langle  f_{iq} f_{ir}, \mu_{i} \rangle - \langle  f_{iq}, \mu_{i} \rangle\langle  f_{ir}, \mu_{i} \rangle\right)\right)
\end{multline}
define an operator on $C(\mathscr{P}([0,1])^{M+1})$.

Then, 	
\[
	\lim_{N \to \infty} a_{N}^{-1} (\mathcal{G}_{N} F)(\bmu^{(N)}) = (\mathcal{G}F)(\bmu) \defn \sum_{i=1}^{M} \gamma_{i} (\mathcal{G}_{i} F)(\bmu).
\]
Moreover, given $\tilde{\mu}_{i} \in \mathscr{P}(\mathscr{P}([0,1]))$, there exist unique independent processes $G_{i}(t)$ with generators $\mathcal{G}_{i}$, such that $G_{i}(0)$ is distributed according to $\tilde{\mu}_{i}$ and such that 
\[
	\bG^{(N)}(\lfloor a_{N}^{-1} t \rfloor) \Rightarrow \bG(t) \defn G_{0}(0) \otimes G_{1}(\gamma_{1} t) \otimes \cdots \otimes G_{M}(\gamma_{M} t),
\]
for all $t > 0$, where convergence is in the space of c\`adl\`ag functions endowed with the Skorokhod topology, $\bbD_{\mathscr{P}([0,1])^{M+1}}[0,\infty)$ (see \eg \cite{Ethier+Kurtz86}).  
\end{prop}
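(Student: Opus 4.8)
The plan is to follow the standard route for weak convergence of Markov processes \cite{Ethier+Kurtz86}: compute the limit of the rescaled one-step generators $a_N^{-1}\mathcal{G}_N$ on the separating, convergence-determining class $\mathcal{C}$, identify that limit with the generator of a well-posed martingale problem, obtain relative compactness for free from compactness of the state space $\mathscr{P}([0,1])^{M+1}$, and conclude.

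I would fix $F(\bmu)=\prod_{i=0}^{M}\prod_{k=1}^{K_i}\langle f_{ik},\mu_i\rangle\in\mathcal{C}$ and use that in one time step the $M+1$ demes reproduce conditionally independently given $\bG^{(N)}(n)$, so that
\[
\bbE\left[ F(\bG^{(N)}(n+1)) \middle\vert \bG^{(N)}(n)=\bmu^{(N)} \right]=\prod_{i=0}^{M}\bbE\left[ \prod_{k=1}^{K_i}\langle f_{ik},G_i^{(N)}(n+1)\rangle \middle\vert \bmu^{(N)} \right].
\]
For each deme I would expand the $i$-th factor: Lemma~\ref{DECOMPOSE} reduces $\prod_k\langle f_{ik},G_i^{(N)}(n+1)\rangle$ to the sum over distinct sampling niches up to an $\BigO{N^{-1}}$ remainder, and Lemma~\ref{EXPPROD} evaluates the resulting conditional expectation, keeping only the no-merger and single-pair-merger contributions since all others are $o(c_{N_i})$ by \eqref{MOHLECONS2}. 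Inserting the expansion $Q^{(N)}_i=I+c_{N_i}B^{(N)}_i+c_NB+o(c_N)$ that follows from \eqref{QI}, \eqref{BI} and the weak-mutation hypothesis, the moment expansions $\bbE[\nu_{i1}\cdots\nu_{iK_i}]=1-\binom{K_i}{2}c_{N_i}+o(c_{N_i})$ and $\frac{(N_i)_{K_i}}{(N_i)_{K_i+1}}\bbE[(\nu_{i1})_2\nu_{i2}\cdots\nu_{iK_i}]=c_{N_i}+o(c_{N_i})$ from Lemmas~\ref{ONE} and \ref{TWO} and the remarks around them, and the fact that, by \eqref{METAPOP} and $N_i\ll N_0$, the metapopulation measure satisfies $G^{(N)}(n)=\mu_0^{(N)}+\BigO{N_i/N_0}$ so that in the limit migrants are drawn from the mainland $\mu_0$, the $i$-th factor becomes
\[
\prod_{k}\langle f_{ik},\mu_i^{(N)}\rangle+c_{N_i}\,\Phi_i(\bmu^{(N)})+c_N\,\Psi_i(\bmu^{(N)})+o(c_{N_i}),
\]
where $\Phi_i$ is exactly the migration-plus-resampling bracket of $(\mathcal{G}_i F)(\bmu)$ in \eqref{INTERGEN} and $\Psi_i$ the genuine-mutation contribution. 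Multiplying the $M+1$ factors, subtracting $F(\bmu^{(N)})$, dividing by $a_N$, and using \eqnref{GAMMADEF} --- so $c_{N_i}/a_N\to\gamma_i$ for $i\ge1$ while $c_N/a_N\to0$, which annihilates the $i=0$ factor and every $\Psi_i$ --- yields $a_N^{-1}(\mathcal{G}_N F)(\bmu^{(N)})\to\sum_{i=1}^{M}\gamma_i(\mathcal{G}_i F)(\bmu)$; convergence of the occurring integrals $\langle\cdot,\mu_i^{(N)}\rangle$ to $\langle\cdot,\mu_i\rangle$ uses the weak convergence $\bmu^{(N)}\to\bmu$ and continuity of the $f_{ik}$ and of their pairwise products.

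The delicate step --- and the point of the preliminary lemmas --- is the bookkeeping of error terms, so that precisely the stated limit survives the $a_N^{-1}$ rescaling. Since $c_{N_i}\asymp a_N$ but $a_N$ can be far smaller than $N_i^{-1}$ (\eg $c_{N_i}\asymp N_i^{-2}$ for a Moran-type island), the bare $\BigO{N^{-1}}$ remainder in Lemma~\ref{DECOMPOSE} is not a priori $o(a_N)$; the resolution is that, once one step of the dynamics is applied and $F(\bmu)$ subtracted, those terms multiply differences $\langle g,G_i^{(N)}(n+1)\rangle-\langle g,\mu_i\rangle$ whose conditional expectations are $\BigO{c_{N_i}}+\BigO{c_N}$, so their net contribution is $o(c_{N_i})$. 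Likewise every term discarded along the way is, by the remark following \eqref{MOHLECONS2}, bounded by one of the two genuinely negligible moments recorded there or by $c_{N_i}/N_i$. I would run this accounting deme by deme and block by block through the partition sums of Lemmas~\ref{DECOMPOSE} and \ref{EXPPROD}, checking that the estimates are uniform over the compact state space.

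Finally, to promote generator convergence to process convergence I would note that $\mathcal{G}_i$ is the generator of the measure-valued (Fleming--Viot) diffusion on deme $i$ with bounded, parent-independent ``mutation'' $B_i$, namely resampling from the fixed source $\mu_0$ at rate $\varpi_i/2$; since $B_i$ is bounded, that martingale problem is well posed (classical; see \cite{Ethier+Kurtz86}), which gives existence and uniqueness of the processes $G_i(t)$. The limit generator $\mathcal{G}=\sum_{i=1}^M\gamma_i\mathcal{G}_i$ acts on each coordinate $\mu_i$ separately, treating all other coordinates --- in particular the frozen mainland $\mu_0$ --- as parameters, so its martingale problem factorises and its unique solution is the product $\bG(t)=G_0(0)\otimes G_1(\gamma_1 t)\otimes\cdots\otimes G_M(\gamma_M t)$ of independent coordinates, with the mainland held constant because its coordinate generator $\gamma_0\mathcal{G}_0$ vanishes and with the $\gamma_i$-time-change because the $i$-th coordinate generator is $\gamma_i\mathcal{G}_i$. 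Compactness of $\mathscr{P}([0,1])^{M+1}$ makes the compact containment condition automatic, so the uniform generator convergence on the convergence-determining class $\mathcal{C}$, together with well-posedness of the limiting martingale problem, gives $\bG^{(N)}(\lfloor a_N^{-1}t\rfloor)\Rightarrow\bG(t)$ in $\bbD_{\mathscr{P}([0,1])^{M+1}}[0,\infty)$ by the standard generator-convergence theorem (\cite{Ethier+Kurtz86}); the case of random initial data drawn from $\tilde\mu_0\otimes\cdots\otimes\tilde\mu_M$ follows by mixing over $\bmu$.
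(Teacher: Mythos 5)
Your proposal is correct and follows essentially the same route as the paper's proof: the same expansion of the one-step generator on $\mathcal{C}$ via Lemmas \ref{DECOMPOSE} and \ref{EXPPROD}, the same use of \eqref{MOHLECONS2}, the metapopulation approximation $G^{(N)}(n)=\mu_{0}+\BigO{N_{i}/N_{0}}$ and the scale separation $c_{N_{i}}/a_{N}\to\gamma_{i}$, $c_{N}/a_{N}\to 0$, followed by the same Ethier--Kurtz machinery (well-posedness of the infinitely-many-alleles martingale problem, time change, product structure, and the discrete-time generator-convergence lemma) to pass from generators to processes. Your handling of the $\BigO{N_{i}^{-1}}$ remainder through one-step increment differences is just a repackaging of the paper's bookkeeping, which pairs those terms with the moment differences of Remark \ref{DIFFERENCE} and the matching expansion of $F(\bmu)$ from Lemma \ref{ONE} so that they contribute only $o(c_{N_{i}})$.
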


\begin{rem}
Because 
\[
	\lim_{N \to \infty} \frac{c_{N}}{a_{N}} =  0,
\]
the component of the generator acting on the mainland vanishes in the limit; if 
\[
	\mathcal{C}_{0} \defn \left\{F \in \mathcal{C} \middle\vert F(\bmu) = \prod_{k = 1}^{K_{0}} \langle f_{ik}, \mu_{0} \rangle\right\},
\]
then $\mathcal{G}_{i} F \equiv 0$ for all $F \mathcal{C}_{0}$ and thus the generator vanishes on this set.  Equivalently, the process $G_{0}(t) \equiv \mu_{0}$.
\end{rem}

\begin{rem} 
Recall from \eqnref{GAMMADEF} that the effective population size of the \Th{i} island is $c_{N_{i}} \sim \gamma_{i} a_{N}$; since we have rescaled time by $a_{N}$ rather than the individual effective population sizes, the factors $\gamma_{i}$ appear in the generator and in the components $G_{i}$.  These reflect the fact that the different effective population sizes on the different islands result in their population dynamics having different rates (\ie different expected inter-event times), which are given by the $\gamma_{i}$.
\end{rem}

\begin{rem}
This theorem tells us that on the intermediate time scale, the islands have essentially independent dynamics, coupled only by immigration from a mainland which remains unchanged on the intermediate timescale.  The generator of the dynamics on the island is identical to that in the infinite population limit for the infinitely many alleles model, with the rescaled migration rate, $\frac{\varpi_{i}}{2}$ taking the place of the rescaled mutation rate $\theta$, and the mainland density measure $\mu_{0}$ taking the place of Lebesgue measure.
\end{rem}

\begin{proof} 
Applying Lemmas \ref{DECOMPOSE} and \ref{EXPPROD}, we have
\begin{multline*}
 	\mathbb{E}\left[ F(\bG^{(N)}(n+1)) \middle\vert \bG^{(N)}(n) = \bmu \right]
		= \mathbb{E}\left[ \prod_{i = 0}^{M} \prod_{k = 1}^{K_{i}} \langle f_{ik}, G^{(N)}_{i}(n+1) \rangle \middle\vert \{X_{ij}(n) = x_{ij}\} \right] \\
		= \prod_{i = 0}^{M} \mathbb{E}\left[\prod_{k = 1}^{K_{i}} \langle f_{ik}, G^{(N)}_{i}(n+1) \rangle \middle\vert \{X_{ij}(n)= x_{ij}\} \right] \\
		= \prod_{i = 0}^{M} \frac{1}{N_{i}^{K_{i}}} \sum_{m=1}^{K_{i}} \sum_{{j_{1},\ldots,j_{m}} \atop{\text{distinct}}} \sum_{A_{1} \coprod \cdots \coprod A_{m} = [K_{i}]} 
			 \mathbb{E}\left[\prod_{q = 1}^{m} \prod_{r \in A_{q}}  f_{ir}(x_{ij_{q}}) \middle\vert \{X_{ij}(n)= x_{ij}\} \right]\\
		= \prod_{i = 0}^{M} \frac{1}{N_{i}^{K_{i}}} \sum_{m=1}^{K_{i}} \sum_{{j_{1},\ldots,j_{m}} \atop{\text{distinct}}} \sum_{A_{1} \coprod \cdots \coprod A_{m} = [K_{i}]} 
		\left(\frac{\bbE\left[\nu_{i1}\cdots\nu_{im}\right]}{(N_{i})_{m}} \sum_{{p_{1},\ldots,p_{m}} \atop{\text{distinct}}}  
			\prod_{k = 1}^{m} (Q^{(N)}_{i}\prod_{l \in A_{k}}  f_{il})(x_{ip_{k}}) \right.\\
		+\left. \frac{\bbE\left[(\nu_{i1})_{2}\nu_{i2}\cdots\nu_{im-1}\right]}{(N_{i})_{m}} \sum_{q < r} \sum_{{p_{1},\ldots,p_{m}} \atop {p_{q}=p_{r}}}
		\prod_{{k = 1} \atop {k \neq q,r}}^{m} (Q^{(N)}_{i}\prod_{l \in A_{k}}  f_{il})(x_{ip_{k}}) ((Q^{(N)}_{i} \prod_{l \in A_{q}}  f_{il})(Q^{(N)}_{i} \prod_{l \in A_{r}}  f_{il}))(x_{ip_{q}}) 
		+ o(c_{N_{i}})\right).
\end{multline*}
Now, observing that the term in brackets is independent of the values $j_{k}$, we note that $j_{1},\ldots,j_{m}$ can be chosen in $(N_{i})_{m}$ ways, and we are left with a product over sums of the form:
\begin{multline*}
	\frac{1}{N_{i}^{K_{i}}} \sum_{m=1}^{K_{i}} \bbE\left[\nu_{i1}\cdots\nu_{im}\right] \sum_{A_{1} \coprod \cdots \coprod A_{m} = [K_{i}]} 
		\sum_{{p_{1},\ldots,p_{m}} \atop{\text{distinct}}} \prod_{k = 1}^{m} (Q^{(N)}_{i}\prod_{l \in A_{k}}  f_{il})(x_{ip_{k}})\\
		+ \frac{1}{N_{i}^{K_{i}}} \sum_{m=1}^{K_{i}} \bbE\left[(\nu_{i1})_{2}\nu_{i2}\cdots\nu_{im-1}\right] \sum_{A_{1} \coprod \cdots \coprod A_{m} = [K_{i}]} 
		\sum_{q < r} \sum_{{p_{1},\ldots,p_{m}} \atop {p_{q}=p_{r}}}\\
		\prod_{{k = 1} \atop {k \neq q,r}}^{m} (Q^{(N)}_{i}\prod_{l \in A_{k}}  f_{il})(x_{ip_{k}}) ((Q^{(N)}_{i} \prod_{l \in A_{q}}  f_{il})(Q^{(N)}_{i} \prod_{l \in A_{r}}  f_{il}))(x_{ip_{q}}) 
		+ o(c_{N_{i}}).
\end{multline*}

We will focus our attention on the first sum in the first line.  Using Lemma \ref{DECOMPOSE} in reverse, we have 
\begin{multline*}
	\frac{1}{N_{i}^{K_{i}}} \sum_{{p_{1},\ldots,p_{K_{i}}} \atop{\text{distinct}}} \prod_{k = 1}^{K_{i}} (Q^{(N)}_{i} f_{ik})(x_{ip_{k}}) = 	
	\prod_{k = 1}^{K_{i}} \langle Q^{(N)}_{i} f_{ik}, \mu_{i} \rangle\\
	- \frac{1}{N_{i}^{K_{i}}} \sum_{m=1}^{K_{i}-1} 
		\sum_{A_{1} \coprod \cdots \coprod A_{m} = [K_{i}]} \sum_{{p_{1},\ldots,p_{m}} \atop{\text{distinct}}}  \prod_{k = 1}^{m} (Q^{(N)}_{i}\prod_{l \in A_{k}}  f_{il})(x_{ip_{k}}),
\end{multline*}
where the terms on the second line are $\BigO{\frac{1}{N_{i}}}$.  Thus, 
\begin{multline*}
	\frac{1}{N_{i}^{K_{i}}} \sum_{m=1}^{K_{i}} \bbE\left[\nu_{i1}\cdots\nu_{im}\right] \sum_{A_{1} \coprod \cdots \coprod A_{m} = [K_{i}]} 
		\sum_{{p_{1},\ldots,p_{m}} \atop{\text{distinct}}} \prod_{k = 1}^{m} (Q^{(N)}_{i}\prod_{l \in A_{k}}  f_{il})(x_{ip_{k}})\\
	= \bbE\left[\nu_{i1}\cdots\nu_{iK_{i}}\right] \prod_{k = 1}^{K_{i}} \langle Q^{(N)}_{i} f_{ik}, \mu_{i} \rangle\\
	+ \frac{1}{N_{i}^{K_{i}}} \sum_{m=1}^{K_{i}-1} \left(\bbE\left[\nu_{i1}\cdots\nu_{im}\right] -  \bbE\left[\nu_{i1}\cdots\nu_{iK_{i}}\right]\right)
		\sum_{A_{1} \coprod \cdots \coprod A_{m} = [K_{i}]} \sum_{{p_{1},\ldots,p_{m}} \atop{\text{distinct}}}  \prod_{k = 1}^{m} (Q^{(N)}_{i}\prod_{l \in A_{k}}  f_{il})(x_{ip_{k}}).
\end{multline*}
Further, we observed in Remark \ref{DIFFERENCE} that the differences $\bbE\left[\nu_{i1}\cdots\nu_{im}\right] -  \bbE\left[\nu_{i1}\cdots\nu_{iK_{i}}\right]$ are $\BigO{c_{N_{i}}}$, 
so that the first sum reduces to 
\[
	\bbE\left[\nu_{i1}\cdots\nu_{iK_{i}}\right] \prod_{k = 1}^{K_{i}} \langle Q^{(N)}_{i} f_{ik}, \mu_{i} \rangle + o(c_{N_{i}}).
\]
Proceeding similarly, applying Lemma \ref{DECOMPOSE} with the set of $K_{i}-1$ distinct functions $\{Q^{(N)}_{i} f_{ik}\}_{k \neq q, r} \cup \{(Q^{(N)}_{i} f_{iq})(Q^{(N)}_{i} f_{ir})\}$, we see
that 
\begin{multline*}
	 \frac{1}{N_{i}^{K_{i}}} \sum_{m=1}^{K_{i}} \bbE\left[(\nu_{i1})_{2}\nu_{i2}\cdots\nu_{im-1}\right] \sum_{A_{1} \coprod \cdots \coprod A_{m} = [K_{i}]} 
		\sum_{q < r} \sum_{{p_{1},\ldots,p_{m}} \atop {p_{q}=p_{r}}}\\
		\prod_{{k = 1} \atop {k \neq q,r}}^{m} (Q^{(N)}_{i}\prod_{l \in A_{k}}  f_{il})(x_{ip_{k}}) ((Q^{(N)}_{i} \prod_{l \in A_{q}}  f_{il})(Q^{(N)}_{i} \prod_{l \in A_{r}}  f_{il}))(x_{ip_{q}}) \\
		= \frac{1}{N_{i}} \bbE\left[(\nu_{i1})_{2}\nu_{i2}\cdots\nu_{iK_{i}-1}\right] 
			\sum_{q < r} \prod_{{k = 1} \atop {k \neq q,r}}^{K_{i}} \langle Q^{(N)}_{i} f_{ik}, \mu_{i} \rangle \langle (Q^{(N)}_{i} f_{iq})(Q^{(N)}_{i} f_{ir}), \mu_i \rangle + o(c_{N_{i}}),
\end{multline*}
where we have used the fact that $\frac{1}{N_{i}} \bbE\left[(\nu_{i1})_{2}\nu_{i2}\cdots\nu_{im-1}\right] = \BigO{c_{N_{i}}}$ in bounding the higher order terms.  Thus,
\begin{multline*}
	\mathbb{E}\left[ F(\bG^{(N)}(n+1)) \middle\vert \bG^{(N)}(n) = \bmu \right]
	= \prod_{i = 0}^{M} \left(\bbE\left[\nu_{i1}\cdots\nu_{iK_{i}}\right] \prod_{k = 1}^{K_{i}} \langle Q^{(N)}_{i} f_{ik}, \mu_{i} \rangle\right.\\
		+ \left.\frac{1}{N_{i}} \bbE\left[(\nu_{i1})_{2}\nu_{i2}\cdots\nu_{iK_{i}-1}\right] 
			\sum_{q < r} \prod_{{k = 1} \atop {k \neq q,r}}^{K_{i}} \langle Q^{(N)}_{i} f_{ik}, \mu_{i} \rangle \langle (Q^{(N)}_{i} f_{iq})(Q^{(N)}_{i} f_{ir}), \mu_i \rangle + o(c_{N_{i}})		\right)\\
	= \prod_{i = 0}^{M} \bbE\left[\nu_{i1}\cdots\nu_{iK_{i}}\right] \prod_{k = 1}^{K_{i}} \langle Q^{(N)}_{i} f_{ik}, \mu_{i} \rangle
		+ \sum_{i = 0}^{M} \prod_{{j=0} \atop {j \neq i}}^{M} \bbE\left[\nu_{j1}\cdots\nu_{jK_{j}}\right]  \prod_{k = 1}^{K_{j}} \langle Q^{(N)}_{i} f_{jk}, \mu_{i} \rangle \\
			\times \frac{1}{N_{i}} \bbE\left[(\nu_{i1})_{2}\nu_{i2}\cdots\nu_{iK_{i}-1}\right]  
			\sum_{q < r} \prod_{{k = 1} \atop {k \neq q,r}}^{K_{i}} \langle Q^{(N)}_{i} f_{ik}, \mu_{i} \rangle \langle (Q^{(N)}_{i} f_{iq})(Q^{(N)}_{i} f_{ir}), \mu_{i} \rangle
			 + o(c_{N_{i}}).
\end{multline*}
Further, recalling \eqref{QI}, by assumption
\[
	Q^{(N)}_{i} = I + c_{N_{i}} B^{(N)}_{i} + \BigO{c_{N_{i}}},
\]
we have
\begin{multline}\label{FIRST}
	\mathbb{E}\left[ F(\bG^{(N)}(n+1)) \middle\vert \bG^{(N)}(n) = \bmu \right]\\
	=  \prod_{i = 0}^{M} \bbE\left[\nu_{i1}\cdots\nu_{iK_{i}}\right] \prod_{k = 1}^{K_{i}} \langle f_{ik}, \mu_i \rangle\\
	+ \sum_{i = 0}^{M} c_{N_{i}} \bbE\left[\nu_{i1}\cdots\nu_{iK_{i}}\right] \prod_{{j=0} \atop {j \neq i}}^{M} \bbE\left[\nu_{j1}\cdots\nu_{jK_{j}}\right] 
		\prod_{k = 1}^{K_{j}} \langle f_{jk}, \mu_i \rangle
		 \sum_{q = 1}^{K_{i}} \langle B^{(N)}_{i} f_{iq}, \mu_i \rangle \prod_{{k=1} \atop {k \neq q}}^{K_{i}} \langle f_{ik}, \mu_i \rangle \\
		+ \sum_{i = 0}^{M} \prod_{{j=0} \atop {j \neq i}}^{M} \bbE\left[\nu_{j1}\cdots\nu_{jK_{j}}\right]  \prod_{k = 1}^{K_{j}} \langle f_{jk}, \mu_i \rangle
			 \frac{1}{N_{i}} \bbE\left[(\nu_{i1})_{2}\nu_{i2}\cdots\nu_{iK_{i}-1}\right]  
			\sum_{q < r} \prod_{{k = 1} \atop {k \neq q,r}}^{K_{i}} \langle  f_{ik}, \mu_i \rangle \langle  f_{iq} f_{ir}, \mu_i \rangle \\
			+ o(c_{N_{i}}).
\end{multline}
Now recall,  
\[
	(B^{(N)}_{i}f)(x) \defn \frac{\varpi_{i}}{2} \left(\int f(y) G^{(N)}(n)(dy) - f(x)\right),
\]
and, from \eqref{METAPOP}, we have
\[
	G^{(N)}(n) = \frac{1}{\sum_{k=0}^{M} N_{k}} \sum_{i=0}^{M} N_{i} G^{(N)}_{i}(n) = \frac{1}{\sum_{k=0}^{M} N_{k}} \sum_{i=0}^{M} N_{i} \mu_{i} 
	= \mu_{0} + \BigO{\frac{N_{i}}{N_{0}}},
\]
so that
\[	
	(B^{(N)}_{i}f)(x) = \frac{\varpi_{i}}{2} \left(\int f(y) \mu_{0}(dy) - f(x)\right) + o(1).
\]

Now, recalling Lemma \ref{ONE}, we have
\begin{multline*}
	\bbE\left[\nu_{i1}\cdots\nu_{iK_{i}}\right] = 1- {K_{i} \choose 2} \frac{(N_{i})_{K_{i}-1}}{(N_{i})_{K_{i}}} \bbE\left[(\nu_{i1})_{2}\nu_{i2}\cdots\nu_{iK_{i}-1}\right] - o(c_{N_{i}})\\
		= 1- {K_{i} \choose 2} \frac{1}{N_{i}-K_{i}+1} \bbE\left[(\nu_{i1})_{2}\nu_{i2}\cdots\nu_{iK_{i}-1}\right] - o(c_{N_{i}})\\
		= 1- {K_{i} \choose 2} \left(\frac{1}{N_{i}}+\frac{K_{i}-1}{N_{i}(N_{i}-K_{i}+1)}\right) \bbE\left[(\nu_{i1})_{2}\nu_{i2}\cdots\nu_{iK_{i}-1}\right] - o(c_{N_{i}})\\
		= 1- {K_{i} \choose 2} \frac{1}{N_{i}} \bbE\left[(\nu_{i1})_{2}\nu_{i2}\cdots\nu_{iK_{i}-1}\right] - o(c_{N_{i}}),
\end{multline*}
so that 
\begin{multline}\label{SECOND}
	F(\bmu) = \prod_{i = 0}^{M} \prod_{k = 1}^{K_{i}} \langle f_{ik}, \mu_{i} \rangle\\
	= \prod_{i = 0}^{M} \left(\bbE\left[\nu_{i1}\cdots\nu_{iK_{i}}\right] \prod_{k = 1}^{K_{i}} \langle f_{ik}, \mu_{i} \rangle
	 + {K_{i} \choose 2} \frac{1}{N_{i}} \bbE\left[(\nu_{i1})_{2}\nu_{i2}\cdots\nu_{iK_{i}-1}\right] \prod_{k = 1}^{K_{i}} \langle f_{ik}, \mu_{i} \rangle + o(c_{N_{i}})\right)\\
	 = \prod_{i = 0}^{M} \bbE\left[\nu_{i1}\cdots\nu_{iK_{i}}\right] \prod_{k = 1}^{K_{i}} \langle f_{ik}, \mu_{i} \rangle\\
	 	+ \sum_{i = 0}^{M} \prod_{{j=0} \atop {j \neq i}}^{M} \bbE\left[\nu_{j1}\cdots\nu_{jK_{j}}\right]  \prod_{k = 1}^{K_{j}} \langle f_{jk}, \mu_j \rangle
			\frac{1}{N_{i}} \bbE\left[(\nu_{i1})_{2}\nu_{i2}\cdots\nu_{iK_{i}-1}\right] 
	 			\sum_{q < r}  \prod_{k = 1}^{K_{i}} \langle f_{ik}, \mu_{i} \rangle +  o(c_{N_{i}}).
\end{multline}

Thus, taking the difference of \eqref{FIRST} and \eqref{SECOND} and using Lemmas \ref{ONE} and \ref{TWO} respectively to replace $\bbE\left[\nu_{i1}\cdots\nu_{iK_{i}}\right]$
and $\frac{1}{N_{i}} \bbE\left[(\nu_{i1})_{2}\nu_{i2}\cdots\nu_{iK_{i}-1}\right]$ by $1-\BigO{c_{N_{i}}}$ and $c_{N_{i}} + o(c_{N_{i}})$, we see that 
\[
	\mathbb{E}\left[ F(\bG^{(N)}(n+1)) \middle\vert \bG^{(N)}(n) = \bmu \right] - F(\bmu) = 
		c_{N_{i}} \sum_{i = 0}^{M} (\mathcal{G}_{i} F)(\bmu) + o(c_{N_{i}}).
\]
The first assertion follows directly.

We now observe that, restricted to the space of functions
\[
	\mathcal{C}_{i} \defn \left\{ F(\bmu) = \prod_{k = 1}^{K_{i}} \langle f_{ij}, \mu_{i} \rangle \middle\vert K_{i} \in \bbN_{0}, f_{ij} \in C([0,1]) \right\} \subseteq 
		C(\mathscr{P}([0,1])),
\]
$\mathcal{G}_{i}$ is exactly the generator (4.4) of the infinitely many  alleles model of Chapter 10 of \cite{Ethier+Kurtz86}.  In particular, Theorem 4.1 of the same chapter tells us that given a fixed initial measure $\tilde{\mu}_{i} \in \mathscr{P}(\mathscr{P}([0,1]))$, the martingale problem for $(\mathcal{G}_{i},\tilde{\mu}_{i})$ is well posed, \ie there exists a unique in distribution process $G_{i}(t)$ with initial value $G_{i}(0)$ distributed according to $\tilde{\mu}_{i}$ with generator $\mathcal{G}_{i}$.   Moreover, using Theorem 1.1 of Chapter 6 of \cite{Ethier+Kurtz86}, we see that $G_{i}(\gamma_{i} t)$ is the unique process with generator $\gamma_{i} \mathcal{G}_{i}$.  We can thus appeal to Theorem 10.1 in  \cite{Ethier+Kurtz86} to conclude that given an initial measure $\tilde{\bmu} = \tilde{\mu}_{0} \otimes \cdots \otimes \tilde{\mu}_{M}$, then the martingale problem for
\[
	\sum_{i=1}^{M} \gamma_{i} \mathcal{G}_{i} 
\]
is well posed and has solution $G_{0}(0) \otimes G_{1}(\gamma_{1} t) \otimes \cdots \otimes G_{M}(\gamma_{M} t)$.

Given convergence of the generators, and well-posedness of the limiting generator, the second assertion then follows by Lemma 5.1 in Chapter 4 of \cite{Ethier+Kurtz86}.
\end{proof}

Finally, we conclude this section by observing that our characterization of the limiting generator in terms of the generator of the infinitely many alleles diffusion model also allows us to characterize the stationary distribution:

\begin{cor}\label{DPCOR}
The stationary process for the islands is the joint law of $M$ independent Dirichlet processes with scaling parameters $\varpi_{i}$ and base probability measure $\mu_{0}$,  
$\text{DP}(\varpi_{i},\mu_{0})$.
\end{cor}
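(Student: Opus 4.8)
The plan is to deduce this directly from Proposition~\ref{INTERMEDIATEPROP} together with the classical theory of the infinitely-many-alleles Fleming--Viot diffusion, with only two small additional observations (a deterministic time change and the product structure of the limit).

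First I would recall precisely what Proposition~\ref{INTERMEDIATEPROP} delivers for a single island: the limiting component $G_{i}(t)$ is the unique-in-law measure-valued process on $\mathscr{P}([0,1])$ whose generator, on the convergence-determining class $\mathcal{C}_{i}$, is the operator $\mathcal{G}_{i}$ of \eqref{INTERGEN}. As was already noted inside the proof of that proposition, $\mathcal{G}_{i}$ coincides with the generator of Chapter~10 of \cite{Ethier+Kurtz86} for the infinitely-many-alleles model with parent-independent mutation, under the identification: the mutation (here, immigration) rate is $\varpi_{i}$, and the mutant-type distribution is the mainland measure $\mu_{0}$, which on the intermediate time scale is frozen at its initial value and is therefore a genuine fixed element of $\mathscr{P}([0,1])$.

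Second, I would invoke the standard characterization of the stationary distribution of this diffusion: it is ergodic and its unique stationary law is the Dirichlet process $\mathrm{DP}(\varpi_{i},\mu_{0})$ (Chapter~10 of \cite{Ethier+Kurtz86}). One can also see this self-containedly by checking that $\bbE_{\mathrm{DP}(\varpi_{i},\mu_{0})}[(\mathcal{G}_{i}F)]=0$ for every $F\in\mathcal{C}_{i}$, using the moment identities for the Dirichlet process that underlie the Ewens sampling formula, and then appealing to uniqueness via ergodicity. Since a deterministic linear time change $t\mapsto\gamma_{i}t$ leaves the set of stationary measures of a Markov process unchanged, $G_{i}(\gamma_{i}t)$ also has stationary law $\mathrm{DP}(\varpi_{i},\mu_{0})$.

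Finally, independence across islands is immediate from the product form of the limit in Proposition~\ref{INTERMEDIATEPROP}: $\bG(t)=G_{0}(0)\otimes G_{1}(\gamma_{1}t)\otimes\cdots\otimes G_{M}(\gamma_{M}t)$ with the $G_{i}$ mutually independent, so the stationary law of the island process $(G_{1}(t),\ldots,G_{M}(t))$ is the product $\bigotimes_{i=1}^{M}\mathrm{DP}(\varpi_{i},\mu_{0})$ (the mainland coordinate remaining fixed at $\mu_{0}$). The only step demanding genuine care --- the \emph{main obstacle} --- is the identification of the stationary law of $\mathcal{G}_{i}$ with the Dirichlet process rather than merely citing it; this rests on the Dirichlet-integral / Ewens-sampling-formula moment computations, i.e.\ exactly the same algebra that makes the Antoniak equation available and on which the rest of the fitting procedure depends.
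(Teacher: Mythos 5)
Your argument is correct and follows essentially the same route as the paper, which disposes of the corollary in one line by citing the stationary-distribution result for a single copy of the infinitely many alleles model in Ethier and Kurtz (with independence and the irrelevance of the deterministic time change $t\mapsto\gamma_i t$ taken as immediate from the product form in Proposition~\ref{INTERMEDIATEPROP}). Your extra observations --- the explicit identification of $\varpi_i$ and $\mu_0$ with the mutation rate and mutant-type measure, and the optional moment-identity verification --- are just a fuller writing-out of the same citation-based proof.
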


\begin{proof}
This is immediate from the result for a single copy of the infinitely many alleles model.  See \eg Theorem 4.1, Chapter 9 in \cite{Ethier+Kurtz86}.
\end{proof}

\subsection{Long-Term Behaviour}\label{SLOW}

In the previous section, we simply assumed that the mainlands were asymptotically smaller in size (as measured by the coalescence probability of two randomly selected individuals) than the mainland, in order to show that the Cannings' UNTB converged to a sum of independent copies of the infinitely many alleles diffusion process, with migration from the mainland playing the role of mutation.  In this section, we will show that in a slow timescale, the dynamics on the mainland converge to the standard infinitely many alleles model as well, from which we can conclude, as before, that the stationary distribution of the mainland is that of the Dirichlet process $\text{DP}(\theta,\lambda)$, where $\lambda$ is the Lebesgue measure on $[0,1]$.  Thus, after a transient period, the mainland will approach a measure $\mu_{0} \sim \text{DP}(\theta,\lambda)$, whereas the islands will converge on Hierarchical Dirichlet Processes $\text{DP}(\varpi_{i},\mu_{0})$ \cite{Teh2006}.


Let $\tilde{\nu}_{i}$, $i=1,..,n$, be the law of the stationary process $\text{DP}(\varpi_{i},\mu_{0})$ from Corollary \ref{DPCOR} above, and let 
$\tilde{\nu} = \tilde{\nu}_{1} \otimes \cdots \otimes \tilde{\nu}_{M}$, \ie given a function $F \in C(\mathscr{P}([0,1])^{M})$, 
\[
	\int F(\bmu)\, \tilde{\nu}(d\bmu) = \int \cdots \int  F(\mu_{1},\ldots,\mu_{M})\, \tilde{\nu}_{1}(d\mu_{1}) \cdots \tilde{\nu}_{M}(d\mu_{M}),
\]
then $\tilde{\nu}$ is a stationary distribution for $\bG(t)$: we have 
\[
	\int (\cG F)(\bmu) \tilde{\nu}(d\bmu) = 0
\]
for all $F \in C(\mathscr{P}([0,1])^{M+1})$, or equivalently, writing $\cT(t)$ for the semi-group generated by $\cG$, (\ie
\[
	(\cT(t)F)(\bmu) = \bbE\left[F(\bG(t)) \middle\vert \bG(0)=\bmu\right],
\]
where $\bG(t)$ is the process with generator $\cG$ of Proposition \ref{INTERMEDIATEPROP}) we have 
\[
	\int (\cT(t) F)(\bmu) \tilde{\nu}(d\bmu) =  \int F(\bmu)\, \tilde{\nu}(d\bmu)
\]
for all $F \in C(\mathscr{P}([0,1])^{M})$.

We start by showing that as $t \to \infty$, $\bG(t)$ converges to a stationary process $\bG^{\star}$ distributed according to $\tilde{\nu}$, (\ie
\[
	\bbP\left\{\bG^{\star}(t) \in A \middle\vert \bG^{\star}(0) \sim \tilde{\nu} \right\} = \tilde{\nu}(A)
\]
for all subsets $A \subseteq \mathscr{P}([0,1])^{M+1}$).  To this end, we begin with a series of lemmas, which are essentially the same as results appearing in \cite{Ethier1981}:

\begin{lem}
Let $\bmu = \mu_{0} \otimes \cdots \otimes \mu_{M} \in \mathscr{P}([0,1])^{M+1}$ and let $F(\bmu) = \prod_{i = 0}^{M} \prod_{k=1}^{K_{i}} \langle f_{ik}, \mu_{i}\rangle \in \mathcal{C}$.  Let $K = \sum_{i = 0}^{M} K_{i}$ be the \textit{degree} of $F$.  If $K \geq 1$, there exists a scalar $\lambda > 0$ and a function $\psi$, which is a sum of functions of the same form as $F$, but of degree $K-1$, such that 
\[
	\mathcal{G}F = -\lambda F + \psi.
\]
Thus, 
\[
	(\mathcal{T}(t)F)(\bmu) = e^{-\lambda t} F + \int_{0}^{t} e^{-\lambda (t-s)} \mathcal{T}(s)\psi\, ds.
\]
\end{lem}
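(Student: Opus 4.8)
The plan is to prove $\mathcal{G}F=-\lambda F+\psi$ by a direct computation with the explicit form of $\mathcal{G}=\sum_{i=1}^{M}\gamma_{i}\mathcal{G}_{i}$ in \eqref{INTERGEN}, read off $\lambda$, and then deduce the integral identity from the variation-of-constants formula for the semigroup $\mathcal{T}(t)$ generated by $\mathcal{G}$.

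First I would record that for the process $\bG(t)$ the mainland coordinate is frozen, $G_{0}(t)\equiv\mu_{0}$ (the remark following Proposition \ref{INTERMEDIATEPROP}), so any factor $\langle g,\mu_{0}\rangle$ behaves as a scalar along trajectories; hence I may assume $K_{0}=0$ (divide $F$ by the constant $\prod_{k}\langle f_{0k},\mu_{0}\rangle$, or note $F\equiv0$ if it vanishes), and then $K=\sum_{i=1}^{M}K_{i}$. Applying $\mathcal{G}_{i}$ to $F$, I split the two inner sums in \eqref{INTERGEN}: in the migration sum I isolate the $\langle f_{iq},\mu_{i}\rangle$ contribution from the $\langle f_{iq},\mu_{0}\rangle$ contribution (reading the migration term with the mutation-operator sign $\langle f_{iq},\mu_{0}-\mu_{i}\rangle$), and in the resampling sum I isolate $-\langle f_{iq},\mu_{i}\rangle\langle f_{ir},\mu_{i}\rangle$ from $\langle f_{iq}f_{ir},\mu_{i}\rangle$. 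The two ``diagonal'' contributions re-form $\prod_{k=1}^{K_{i}}\langle f_{ik},\mu_{i}\rangle$: summing $-\langle f_{iq},\mu_{i}\rangle\prod_{k\neq q}\langle f_{ik},\mu_{i}\rangle$ over $q$ gives $-K_{i}\prod_{k}\langle f_{ik},\mu_{i}\rangle$, and summing $-\langle f_{iq},\mu_{i}\rangle\langle f_{ir},\mu_{i}\rangle\prod_{k\neq q,r}\langle f_{ik},\mu_{i}\rangle$ over the $K_{i}(K_{i}-1)$ ordered pairs $q\neq r$ gives $-K_{i}(K_{i}-1)\prod_{k}\langle f_{ik},\mu_{i}\rangle$. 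Restoring the prefactor $\prod_{j\neq i}\prod_{k}\langle f_{jk},\mu_{j}\rangle$, this portion of $\mathcal{G}_{i}F$ equals $-\frac{1}{2}K_{i}(\varpi_{i}+K_{i}-1)F$, exactly as for the infinitely-many-alleles diffusion. Summing over $i$ gives $\mathcal{G}F=-\lambda F+\psi$ with
\[
\lambda=\sum_{i=1}^{M}\frac{\gamma_{i}K_{i}(\varpi_{i}+K_{i}-1)}{2}>0,
\]
the positivity following because $K\geq1$ forces some $K_{i}\geq1$ while all $\gamma_{i},\varpi_{i}>0$.

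The remaining terms make up $\psi$: from migration, $\frac{\varpi_{i}}{2}\langle f_{iq},\mu_{0}\rangle\prod_{k\neq q}\langle f_{ik},\mu_{i}\rangle$ times the prefactor, which is a product of linear functionals with the scalar $\langle f_{iq},\mu_{0}\rangle$ in front and hence of total degree $K-1$; and from resampling, $\frac{1}{2}\langle f_{iq}f_{ir},\mu_{i}\rangle\prod_{k\neq q,r}\langle f_{ik},\mu_{i}\rangle$ times the prefactor, where $f_{iq}f_{ir}\in C([0,1])$ merges two factors into one, again of total degree $K-1$. Thus $\psi$ is a finite sum of functions of the same product form as $F$, each of degree $K-1$, which is the first assertion.

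For the integral identity, $F\in\mathcal{C}$ lies in the domain of $\mathcal{G}$ ($\mathcal{C}$ being a core, as used in the proof of Proposition \ref{INTERMEDIATEPROP}), so $t\mapsto\mathcal{T}(t)F$ is differentiable with $\frac{d}{dt}\mathcal{T}(t)F=\mathcal{T}(t)\mathcal{G}F=-\lambda\mathcal{T}(t)F+\mathcal{T}(t)\psi$; multiplying by $e^{\lambda t}$, integrating over $[0,t]$, and rearranging yields $\mathcal{T}(t)F=e^{-\lambda t}F+\int_{0}^{t}e^{-\lambda(t-s)}\mathcal{T}(s)\psi\,ds$. I expect the only genuine subtlety to be the degree bookkeeping: the mainland factors $\langle f_{iq},\mu_{0}\rangle$ produced by the migration term must be treated as scalars (legitimate because $G_{0}$ is frozen) so that $\psi$ has degree $K-1$ rather than $K$; once that is set up, the remainder is routine algebra together with a standard semigroup manipulation.
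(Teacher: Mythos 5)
Your proof is correct and takes essentially the same route as the paper's: apply the generator \eqref{INTERGEN} to the monomial $F$, collect the ``diagonal'' terms to reconstitute $-\lambda F$ (with $\mu_{0}$ treated as frozen, exactly as in the paper's use of $x_{0}$), let the remaining degree-$(K-1)$ terms form $\psi$, and conclude with the identical $e^{\lambda t}$ variation-of-constants computation. Your explicit value $\lambda=\sum_{i}\gamma_{i}K_{i}(\varpi_{i}+K_{i}-1)/2$ and your reading of the migration term with the sign $\langle f_{iq},\mu_{0}-\mu_{i}\rangle$ are both consistent with the paper's proof (indeed slightly more careful than its displayed coefficient, which omits the factor $K_{i}$ on $\varpi_{i}$ and the $\gamma_{i}$), so no gap remains.
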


\begin{proof}
Recalling Equation \eqref{INTERGEN}, we have
\begin{multline*}
	(\mathcal{G}_{i}F)(\bmu) 
		= \sum_{i = 0}^{M} \sum_{1 \leq j \neq k \leq K_{i}} \left(\langle f_{ij}f_{ik}, \mu_{i} \rangle -  \langle f_{ij}, \mu_{i} \rangle  \langle f_{ik}, \mu_{i} \rangle\right)
		 \prod_{l \neq j,k} \langle f_{il}, \mu_{i}\rangle \\
		 + \sum_{i = 0}^{M} \sum_{j=1}^{K_{i}} \frac{\varpi_{i}}{2} \langle f_{ij}, x_{0} - \mu_{i} \rangle \prod_{k \neq j}  \langle f_{ik}, \mu_{i}\rangle\\
	= -\left(\sum_{i = 0}^{M} \frac{K_{i}(K_{i}-1)}{2} + \frac{\varpi_{i}}{2}\right) F
	+ \sum_{i = 0}^{M} \langle \sum_{1 \leq j \neq k \leq K_{i}}  f_{ij}f_{ik}, \mu_{i} \rangle \prod_{l \neq j,k} \langle f_{il}, \mu_{i}\rangle\\
	+ \sum_{i = 0}^{M}  \langle \sum_{j=1}^{K_{i}} \frac{\varpi_{i}}{2} f_{ij}, x_{0} \rangle  \prod_{k \neq j}  \langle f_{ik}, \mu_{i}\rangle,
\end{multline*}
giving the first statement.  In particular, if $K=1$, say $K_{i}=1$, we have 
\[
	\mathcal{G}F = -\frac{\gamma_{i}\varpi_{i}}{2} F + \langle \frac{\gamma_{i}\varpi_{i}}{2} f_{i1}, x_{0} \rangle.
\]

For the second statement, we observe that
\[
	\frac{d}{dt} e^{\lambda t} \mathcal{T}(t)F = e^{\lambda t}\left(\lambda \mathcal{T}(t)F + \mathcal{T}(t)\mathcal{G}F\right) = e^{\lambda t} \mathcal{T}(t)\psi.
\]
The result follows by integrating both sides over $(0,t)$.
\end{proof}

With this lemma, we can show that the process $\bG(t)$ is ergodic, \ie the distribution of $\bG(t)$ converges on $\tilde{\nu}$, independently of the initial condition.

\begin{prop}
Let $F \in C(\mathscr{P}([0,1])^{M})$.  As $t \to \infty$, 
\[
	\lim_{t \to \infty} \norm{\mathcal{T}(t)F -   \int F(\bmu)\, \tilde{\nu}(d\bmu)} = 0.
\]
\end{prop}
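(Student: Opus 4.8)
The plan is to combine the variation-of-constants identity supplied by the preceding lemma with an induction on degree, and then pass to general continuous $F$ by density. First I would establish the statement for $F$ in the linear span $\mathcal{A}$ of the product functions $\prod_{i=1}^{M}\prod_{k=1}^{K_{i}}\langle f_{ik},\mu_{i}\rangle$ depending only on the island coordinates, arguing by induction on the degree $K=\sum_{i=1}^{M}K_{i}$. Two structural facts underpin everything: (i) $\mathcal{G}$ generates a Feller semigroup on $C(\mathscr{P}([0,1])^{M+1})$ (part of Proposition~\ref{INTERMEDIATEPROP}, via the results of \cite{Ethier+Kurtz86} cited there), so each $\mathcal{T}(t)$ is a contraction, $\norm{\mathcal{T}(t)G}\le\norm{G}$; and (ii) $\tilde\nu$ is stationary, i.e. $\int(\mathcal{G}F)\,d\tilde\nu=0$, as recorded above.

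The base case $K=0$ is trivial, $F$ being constant. For the inductive step with $K\ge 1$, the preceding lemma gives $\mathcal{G}F=-\lambda F+\psi$ with $\lambda>0$ and $\psi$ a finite linear combination of elements of $\mathcal{A}$ of degree at most $K-1$ (the coefficients may involve the frozen mainland measure $\mu_{0}$ through constants $\langle f_{ij},\mu_{0}\rangle$, which plays no role). Integrating this identity against $\tilde\nu$ and using stationarity identifies the target value, $c\defn\int F\,d\tilde\nu=\tfrac{1}{\lambda}\int\psi\,d\tilde\nu$. Since the claim is linear in $F$ and $\psi$ has degree $\le K-1$, the induction hypothesis gives $\mathcal{T}(s)\psi\to\int\psi\,d\tilde\nu=\lambda c$ uniformly; writing $\mathcal{T}(s)\psi=\lambda c+r(s)$ with $\norm{r(s)}\to 0$ and substituting into $\mathcal{T}(t)F=e^{-\lambda t}F+\int_{0}^{t}e^{-\lambda(t-s)}\mathcal{T}(s)\psi\,ds$ yields
\[
\mathcal{T}(t)F-c=e^{-\lambda t}(F-c)+\int_{0}^{t}e^{-\lambda(t-s)}r(s)\,ds.
\]
The first term tends to $0$; for the integral, given $\varepsilon>0$ choose $T$ with $\norm{r(s)}\le\varepsilon$ for $s\ge T$, so that the part over $[T,t]$ contributes at most $\varepsilon/\lambda$ while the part over $[0,T]$, whose integrand is bounded by $e^{-\lambda(t-s)}(\norm{\psi}+\abs{\lambda c})$ (using $\norm{\mathcal{T}(s)\psi}\le\norm{\psi}$), is $\BigO{e^{-\lambda(t-T)}}\to 0$. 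Letting $t\to\infty$ and then $\varepsilon\to 0$ closes the induction.

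Finally I would lift the restriction $F\in\mathcal{A}$: the space $\mathscr{P}([0,1])^{M}$ is weak-$*$ compact, and $\mathcal{A}$ is a subalgebra of real continuous functions containing the constants and separating points, hence dense by Stone--Weierstrass. Given $F$ and $\varepsilon>0$, pick $G\in\mathcal{A}$ with $\norm{F-G}<\varepsilon$; then
\[
\Bigl\|\mathcal{T}(t)F-\int F\,d\tilde\nu\Bigr\|\le\norm{\mathcal{T}(t)(F-G)}+\Bigl\|\mathcal{T}(t)G-\int G\,d\tilde\nu\Bigr\|+\Bigl|\int(G-F)\,d\tilde\nu\Bigr|,
\]
where the outer terms are each at most $\varepsilon$ by the contraction property and the middle term vanishes as $t\to\infty$ by the case already handled; since $\varepsilon$ is arbitrary, the proposition follows. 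The only genuinely delicate points I anticipate are the bookkeeping in the inductive step --- confirming that $\psi$ indeed drops in degree and that $\lambda>0$ --- and the (routine) appeal to \cite{Ethier+Kurtz86} for the Feller/contraction property of $\mathcal{T}(t)$; the remaining estimates form an elementary Gronwall-type argument.
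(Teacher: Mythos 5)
Your argument is correct and follows essentially the same route as the paper: the variation-of-constants identity $\mathcal{T}(t)F = e^{-\lambda t}F + \int_0^t e^{-\lambda(t-s)}\mathcal{T}(s)\psi\,ds$ from the preceding lemma, stationarity of $\tilde\nu$ to identify the limiting constant, and an induction (the paper iterates the resulting inequality down to degree $1$, you induct up from degree $0$ --- the same mechanism). Your explicit Stone--Weierstrass/contraction step to pass from the product class to general continuous $F$ is a slightly more careful rendering of the paper's one-line appeal to the class being convergence determining, but it does not change the substance of the proof.
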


\begin{proof}
Since they are convergence-determining, it suffices to show the result for functions of the form  $F \in \mathcal{C}$.  We then have
\[
	(\mathcal{T}(t)F) = e^{-\lambda t} F + \int_{0}^{t} e^{-\lambda (t-s)} \mathcal{T}(s)\psi\, ds
\]
for $\lambda > 0$ and $\psi$ of degree $K-1$.  Integrating both sides, and recalling that 
\[
	\int (\cT(t) F)(\bmu) \tilde{\nu}(d\bmu) =  \int F(\bmu)\, \tilde{\nu}(d\bmu)
\]
we have
\[
	 \int F(\bmu)\, \tilde{\nu}(d\bmu) = e^{-\lambda t} \int F(\bmu)\, \tilde{\nu}(d\bmu) + \int_{0}^{t} e^{-\lambda (t-s)} \mathcal{T}(s)\int \psi(\bx)\, \tilde{\nu}(d\bmu)\, ds,
\]
so that
\[
	\norm{\mathcal{T}(t)F - \int F(\bmu)\, \tilde{\nu}(d\bmu)} 
	\leq e^{-\lambda t} \int F(\bmu)\, \tilde{\nu}(d\bmu) + \int_{0}^{t} e^{-\lambda (t-s)} \norm{\mathcal{T}(s)\psi - \int \psi(\bmu)\, \tilde{\nu}(d\bmu)} \, ds.
\]
The first term on the right hand side clearly vanishes as $t \to \infty$; for the latter, we can iterate the above inequality, relying on the fact that the process will eventually terminate when the degree reaches 1; when $K = 1$, say $\psi(\bmu) = \langle f_{i1},\mu_{i}\rangle$, we have
\[
	(\mathcal{T}(t)\psi) = e^{-\frac{\omega_{i}}{2} t} \psi + \int_{0}^{t} e^{-\frac{\omega_{i}}{2} (t-s)} \mathcal{T}(s) \langle \frac{\omega_{i}}{2} f_{i1}, x_{0} \rangle\, ds
	= e^{-\frac{\omega_{i}}{2} t} \psi + \int_{0}^{t} e^{-\lambda (t-s)} \langle \frac{\omega_{i}}{2} f_{i1}, x_{0} \rangle\, ds
\]	
whereas
\[
	 \int \psi(\bmu)\, \tilde{\nu}(d\bmu) = e^{-\frac{\omega_{i}}{2} t} \int \psi(\bmu)\, \tilde{\nu}(d\bmu) + \int_{0}^{t} e^{-\lambda (t-s)} \langle \frac{\omega_{i}}{2} f_{i1}, x_{0} \rangle\, ds,
\]
so that
\[
	\norm{\mathcal{T}(t)\psi - \int \psi(\bmu)\, \tilde{\nu}(d\bmu)}  = e^{-\frac{\omega_{i}}{2} t} \norm{\psi - \int \psi(\bmu)\, \tilde{\nu}(d\bmu)} \to 0
\]
as $t \to \infty$.
\end{proof}

Define a linear map $\mathcal{P}$ on $C(\mathscr{P}([0,1])^{M})$ by 
\[
	\mathcal{P}F = \int F(\bmu)\, \tilde{\nu}(d\bmu), 
\]
\ie $\mathcal{P}$ sends $F \in C(\mathscr{P}([0,1])^{M})$ to a constant function; more generally, if $F \in C(\mathscr{P}([0,1])^{M+1})$,  $\mathcal{P}F$ is a function of $\mu_{0}$ alone.  In particular, 
\[
	(\mathcal{P}F)(\mu_{0}) = \bbE\left[F(\mu_{0},G_{1}(t),\ldots,G_{M}(t)) \middle\vert G_{i}(0) \sim \tilde{\nu}_{i}\right],
\]
so that applying the operator $\mathcal{P}$ is equivalent to conditioning on the islands being at their stationary state. 

Note that $\mathcal{P}^{2} = \mathcal{P}$, so that $\mathcal{P}$ is a projection.  Moreover,
\[
	\mathcal{P}(\mathcal{G}F) = \int (\mathcal{G}F)\, \tilde{\nu}(d\bmu) = 0,
\]
so the range of $\mathcal{G}$ is contained in the null space of $\mathcal{P}$, $\mathscr{R}(\mathcal{G}) \subseteq \mathscr{N}(\mathcal{P})$, whereas $\mathcal{G} 1 =  0$, so that $\mathscr{R}(\mathcal{P}) \subseteq \mathscr{N}(\mathcal{G})$.  In fact, we have:

\begin{lem}
$\mathcal{P}$ is the spectral projection onto $\mathscr{N}(\mathcal{G})$.
\end{lem}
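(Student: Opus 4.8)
The plan is to exhibit $\mathcal{P}$ as the Riesz projection attached to the isolated spectral value $0$ of $\mathcal{G}$, by combining the ergodic convergence $\mathcal{T}(t)\to\mathcal{P}$ established in the preceding proposition with the triangular, degree-lowering action of $\mathcal{G}$ on the core $\mathcal{C}$ from the lemma above. I would first pin down the relevant subspaces. We have already recorded $\mathscr{R}(\mathcal{P})\subseteq\mathscr{N}(\mathcal{G})$ and $\mathscr{R}(\mathcal{G})\subseteq\mathscr{N}(\mathcal{P})$. For the reverse of the first inclusion, recall that under $\mathcal{G}$ the mainland coordinate $\mu_{0}$ is frozen (Proposition \ref{INTERMEDIATEPROP}), so for each fixed $\mu_{0}$ the ergodicity proposition applies to $F(\mu_{0},\cdot)\in C(\mathscr{P}([0,1])^{M})$ and gives $\mathcal{T}(t)F\to\mathcal{P}F$ in $\|\cdot\|_{\infty}$, uniformly in $\mu_{0}$ since the decay rate there is exponential and independent of the initial state. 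Hence $\mathcal{G}F=0$ forces $F=\mathcal{T}(t)F\to\mathcal{P}F$, i.e.\ $F\in\mathscr{R}(\mathcal{P})$, so $\mathscr{N}(\mathcal{G})=\mathscr{R}(\mathcal{P})$; the same limit shows $\mathscr{N}(\mathcal{P})=\{F:\mathcal{T}(t)F\to0\}$, and $C(\mathscr{P}([0,1])^{M+1})=\mathscr{R}(\mathcal{P})\oplus\mathscr{N}(\mathcal{P})$ with $\mathcal{P}$ commuting with $\mathcal{T}(t)$.

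Next I would establish a spectral gap at $0$. By the lemma above, on the dense $\mathcal{G}$-invariant algebra $\mathcal{C}$, filtered by degree $K=\sum_{i}K_{i}$, one has $\mathcal{G}F=-\lambda_{F}F+\psi$ with $\psi$ of strictly lower degree and $\lambda_{F}=\sum_{i=1}^{M}\gamma_{i}\bigl(\tfrac{K_{i}(K_{i}-1)}{2}+\tfrac{\varpi_{i}K_{i}}{2}\bigr)$. Since $\lambda_{F}\ge\tfrac12\min_{1\le i\le M}\gamma_{i}\varpi_{i}=:\delta>0$ whenever $F$ genuinely depends on an island coordinate and $\lambda_{F}=0$ otherwise, one can solve $(zI-\mathcal{G})F=g$ on $\mathcal{C}$ degree by degree for $z$ in a punctured neighbourhood of $0$ and extend this to a bounded resolvent, so that $\mathrm{spec}(\mathcal{G})\cap\{|z|<\delta\}=\{0\}$ while the remaining eigenvalues $-\lambda_{F}$ accumulate only at $-\infty$. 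I would also rule out a nilpotent part at $0$: if $\mathcal{G}^{2}F=0$ then $\mathcal{G}F\in\mathscr{N}(\mathcal{G})=\mathscr{R}(\mathcal{P})$ while $\mathcal{G}F\in\mathscr{R}(\mathcal{G})\subseteq\mathscr{N}(\mathcal{P})$, forcing $\mathcal{G}F=\mathcal{P}(\mathcal{G}F)=0$, so the generalised eigenspace of $0$ is exactly $\mathscr{N}(\mathcal{G})$.

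With the gap in hand, the Riesz projection $E_{0}=\frac{1}{2\pi i}\oint_{|z|=\varepsilon}(zI-\mathcal{G})^{-1}\,dz$ for the bounded $C_{0}$-semigroup $\mathcal{T}(t)$ satisfies $\|\mathcal{T}(t)(I-E_{0})\|\le Ce^{-\delta t}$, hence $\mathcal{T}(t)\to E_{0}$; comparing with $\mathcal{T}(t)\to\mathcal{P}$ gives $E_{0}=\mathcal{P}$, and since the generalised eigenspace at $0$ equals $\mathscr{N}(\mathcal{G})$ this says precisely that $\mathcal{P}$ is the spectral projection onto $\mathscr{N}(\mathcal{G})$.

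The main obstacle I anticipate is the spectral-gap step: one must argue carefully that the \emph{closed} generator $\mathcal{G}$ on the Banach space $C(\mathscr{P}([0,1])^{M+1})$ — not merely its restriction to the algebraic core $\mathcal{C}$ — really has $0$ as an isolated point of the spectrum with the remainder confined to $\{\mathrm{Re}\,z\le-\delta\}$, so that no spectrum accumulates at $0$. This is exactly where the degree-lowering triangular structure of the lemma above has to be used, together with the exponential ergodic estimate; everything else is bookkeeping with the projection identities already on record.
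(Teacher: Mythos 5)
Your route is genuinely different from the paper's, and as written it contains two real gaps, one of which you flagged and one you did not. First, the spectral gap: the identity $\mathcal{G}F=-\lambda F+\psi$ gives a triangular, degree-lowering action only on the algebraic core $\mathcal{C}$. Solving $(z-\mathcal{G})F=g$ degree by degree produces a formal inverse, but to conclude that $0$ is an isolated point of the spectrum of the \emph{closed} generator you need this inverse to be bounded uniformly on the closure of $\mathcal{C}$, and nothing in the lemma controls $\norm{\psi}$ relative to $\norm{F}$ uniformly in the degree; as the degree grows, so can the off-diagonal corrections, and the degree-by-degree solution need not extend to a bounded resolvent on a punctured neighbourhood of $0$. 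You identified this as the main obstacle, but it is the heart of your argument and is not supplied. Second, and not flagged: even granting $\mathrm{spec}(\mathcal{G})\cap\{|z|<\delta\}=\{0\}$ with the rest of the spectrum in $\{\mathrm{Re}\,z\le-\delta\}$, the step $\norm{\mathcal{T}(t)(I-E_{0})}\le Ce^{-\delta t}$ invokes a spectral mapping/growth-bound theorem that is false for general $C_{0}$-semigroups on Banach spaces: the growth bound of the semigroup restricted to the complementary invariant subspace can strictly exceed the spectral bound of the generator there, and $\mathcal{T}(t)$ here is only a strongly continuous Markov semigroup on $C(\mathscr{P}([0,1])^{M+1})$, with no analyticity or eventual norm continuity established (positivity does not help, since the restriction to $\mathscr{N}(\mathcal{P})$ is not a positive semigroup on a sublattice). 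Without that estimate, $\mathcal{T}(t)\to E_{0}$ does not follow from the gap, and the identification $E_{0}=\mathcal{P}$ collapses.

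By contrast, the paper's proof needs neither ingredient. It characterizes the spectral projection onto $\mathscr{N}(\mathcal{G})$ as the Abel limit $\lim_{\lambda\to0^{+}}\lambda(\lambda-\mathcal{G})^{-1}$, writes the resolvent as the Laplace transform $\lambda\int_{0}^{\infty}e^{-\lambda t}\mathcal{T}(t)\,dt$, and then uses only the already-established convergence $\mathcal{T}(t)\to\mathcal{P}$: splitting the integral at a large $t_{0}$ and estimating the two pieces shows the Abel limit equals $\mathcal{P}$. No isolation of $0$ in the spectrum, no contour integral, and no semigroup spectral mapping are required. Your preliminary identities ($\mathscr{R}(\mathcal{P})\subseteq\mathscr{N}(\mathcal{G})$, $\mathscr{R}(\mathcal{G})\subseteq\mathscr{N}(\mathcal{P})$, $\mathscr{N}(\mathcal{G})=\mathscr{R}(\mathcal{P})$ via $\mathcal{T}(t)F=F\to\mathcal{P}F$, and the exclusion of a nilpotent block at $0$) are correct and worthwhile observations, but they cannot substitute for the two missing analytic inputs; if you want to keep a spectral-theoretic flavour, the cleanest repair is to adopt the resolvent-limit characterization and run the paper's Abelian estimate.
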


\begin{proof}
By definition, the spectral projection onto $\mathscr{N}(\mathcal{G})$, $\mathcal{Q}$,  is the residue of the resolvent of $\mathcal{G}$ at $\lambda = 0$:
\[
	\mathcal{Q} = \lim_{\lambda \to 0^{+}} \lambda (\lambda - \mathcal{G})^{-1} = \lim_{\lambda \to 0^{+}} \lambda \int_{0}^{\infty} e^{-\lambda t} \mathcal{T}(t)\, dt.
\]
Now, fix $\varepsilon > 0$ and choose $t_{0} > 0$ so that  $\norm{\mathcal{T}(t) - \mathcal{P}} < \varepsilon$ for $t > t_0$.  Then, for $\lambda > 0$,
 \begin{align*}
 	\norm{\lambda \int_{0}^{\infty} e^{-\lambda t} \mathcal{T}(t) \, dt - \mathcal{P}} 
	&= \norm{\lambda \int_{0}^{\infty} e^{-\lambda t} \left(\mathcal{T}(t) - \mathcal{P}\right) \, dt}\\
	&\leq \lambda \int_{0}^{\infty} e^{-\lambda t} \norm{\mathcal{T}(t) - \mathcal{P}} \, dt\\
	&= \lambda \int_{0}^{t_{0}} e^{-\lambda t} \norm{\mathcal{T}(t) - \mathcal{P}} \, dt + \lambda \int_{t_{0}}^{\infty} e^{-\lambda t} \norm{\mathcal{T}(t) - \mathcal{P}} \, dt\\
	& \leq \lambda t_{0} \sup_{t \leq t_{0}} \norm{\mathcal{T}(t) - \mathcal{P}} + \varepsilon.
\end{align*}
$\norm{\mathcal{T}(t) - \mathcal{P}}$ is a continuous function, and thus bounded on $[0,t_{0}]$.   Thus the first term vanishes as $\lambda \to 0^{+}$, whereas $\varepsilon$ can be chosen arbitrarily small.  We conclude $\mathcal{Q} = \mathcal{P}$.  
\end{proof}

With this, we are able to obtain our final result.

\begin{prop}\label{SLOWPROP}
Assume, as before, that
\[
	\lim_{N \to \infty} \frac{c_{N}}{a_{N}} = 0.
\]	
Let $\mathcal{P}$ be the projection defined above.  Define an operator $\mathcal{G}_{0}$ on $\mathcal{C}_{0}$ by 
\begin{multline}\label{SLOWGEN}
	(\mathcal{G}_{0}F)(\bmu) = 
	\left(\sum_{q = 1}^{K_{0}} \frac{\theta}{2} \langle f_{0q}, \lambda - \mu_{0}\rangle \prod_{{k=1} \atop {k \neq q}}^{K_{0}} \langle f_{0k}, \mu_{0} \rangle\right.\\
	\left. + \frac{1}{2} \sum_{q \neq r} \prod_{{k = 1} \atop {k \neq q,r}}^{K_{0}} \langle  f_{0k}, \mu_{0} \rangle
		\left(\langle  f_{0q} f_{0r}, \mu_{0} \rangle - \langle  f_{0q}, \mu_{0} \rangle\langle  f_{0r}, \mu_{0} \rangle\right)\right),
\end{multline}
and let $\mathcal{T}_{0}(t)$ be the semigroup generated by $\mathcal{P}\mathcal{G}_{0}$.  Then, for all $F \in C(\mathscr{P}([0,1])^{M})$, and all $\delta \in (0,1)$ we have 
\[
	\left(I+\mathcal{G}^{(N)}\right)^{\lfloor c_{N}^{-1} t \rfloor}F \to \mathcal{T}_{0}(t)\mathcal{P}F
\]
uniformly in $\delta \leq t \leq \delta^{-1}$.  If in addition, we assume that $G_{i}(0) \sim \tilde{\nu}_{i}$ for all $i=1,\ldots,M$, and $G_{0}(t)$ is a stochastic process with generator $\mathcal{G}_{0}$, then 
\[
	\bG^{(N)}(\lfloor c_{N}^{-1} t \rfloor) \Rightarrow \bG(t) = G_{0}(t) \otimes G_{1}(t) \cdots \otimes G_{M}(t),
\]
where the processes $G_{i}(t)$ are stationary for all $i=1,\ldots,M$.
\end{prop}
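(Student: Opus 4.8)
The plan is to recognise this as a singular-perturbation (averaging) statement: on the slow timescale $\lfloor c_{N}^{-1}t\rfloor$ the islands relax essentially instantaneously to the stationary laws $\mathrm{DP}(\varpi_{i},\mu_{0})$ of Corollary~\ref{DPCOR}, so that only the $\mathcal{P}$-averaged (island-integrated) part of the state persists and the surviving dynamics are those of the mainland running the infinitely-many-alleles model at mutation rate $\theta$. First I would refine the generator expansion of Proposition~\ref{INTERMEDIATEPROP}: retracing that computation but retaining the mainland ($i=0$) contribution, and using the refined bookkeeping flagged after \eqref{MOHLECONS2} (that every $o(c_{N_{i}})$ term is in fact of strictly smaller order --- at most $\BigO{c_{N_{i}}/N_{i}}$ or comparable to the probability of a triple merger or of two simultaneous binary mergers) together with the sharper comparison between $c_{N}$ and $a_{N}$ imposed in this section, one should obtain on the core $\mathcal{C}$ that
\[
	c_{N}^{-1}\mathcal{G}^{(N)}F = \varepsilon_{N}^{-1}(\mathcal{G}F) + (\mathcal{G}_{0}F) + o(1), \qquad \varepsilon_{N} \defn \frac{c_{N}}{a_{N}} \to 0,
\]
uniformly in $\bmu$, where $\mathcal{G}=\sum_{i=1}^{M}\gamma_{i}\mathcal{G}_{i}$ is the fast island generator of Proposition~\ref{INTERMEDIATEPROP} and $\mathcal{G}_{0}$ is the operator \eqref{SLOWGEN}; here one uses that an immigrant to the mainland has a parent drawn from $G^{(N)}(n)=\mu_{0}+\BigO{N_{i}/N_{0}}$, which on the mainland is indistinguishable from ordinary resampling, so the mainland feels only resampling at rate $c_{N}$ together with mutation $c_{N}B$, $Bf=\theta(\langle f,\lambda\rangle-f)$.

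Next I would assemble the hypotheses of the averaging theorem. The fast generator $\mathcal{G}$ generates a Feller semigroup $\{e^{s\mathcal{G}}\}$ on $C(\mathscr{P}([0,1])^{M+1})$: extend the well-posed martingale problem of Proposition~\ref{INTERMEDIATEPROP} by tensoring with the identity in the mainland coordinate, which is passive for the fast flow ($G_{0}(s)\equiv\mu_{0}$). The ergodicity result proved above gives $e^{s\mathcal{G}}F\to\mathcal{P}F$ as $s\to\infty$; the degree-lowering lemma in fact yields a geometric rate on each finite-degree subspace, and the lemma identifying $\mathcal{P}$ with the spectral projection onto $\mathscr{N}(\mathcal{G})$ shows that $\mathcal{P}F$ is a function of $\mu_{0}$ alone, obtained by integrating the islands against $\tilde{\nu}_{i}=\mathrm{DP}(\varpi_{i},\mu_{0})$. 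Finally $\mathcal{G}_{0}$ is the infinitely-many-alleles generator for the mainland with mutation rate $\theta$, so its martingale problem is well posed (Theorem~4.1 of Chapter~10 of \cite{Ethier+Kurtz86}, with $\theta$ in place of $\varpi$) with stationary law $\mathrm{DP}(\theta,\lambda)$ (Theorem~4.1 of Chapter~9 of \cite{Ethier+Kurtz86}); since $\mathcal{P}F$ depends only on $\mu_{0}$ and $\mathcal{G}_{0}$ preserves such functions, $\mathcal{P}\mathcal{G}_{0}$ acts on $\mathscr{R}(\mathcal{P})$ as $\mathcal{G}_{0}$ itself, and $\mathcal{T}_{0}(t)$ is its Feller semigroup.

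Then the singular-perturbation (averaging) theorem for operator semigroups of \cite{Ethier+Kurtz86} applies to $c_{N}^{-1}\mathcal{G}^{(N)}=\varepsilon_{N}^{-1}\mathcal{G}+\mathcal{G}_{0}+o(1)$: since $\|\mathcal{G}^{(N)}F\|=\BigO{a_{N}}\to0$, the $\lfloor c_{N}^{-1}t\rfloor$-step operator $(I+\mathcal{G}^{(N)})^{\lfloor c_{N}^{-1}t\rfloor}$ is asymptotically the Euler approximant of $e^{tc_{N}^{-1}\mathcal{G}^{(N)}}$, and averaging gives $(I+\mathcal{G}^{(N)})^{\lfloor c_{N}^{-1}t\rfloor}F\to\mathcal{T}_{0}(t)\mathcal{P}F$ uniformly for $t\in[\delta,\delta^{-1}]$, the exclusion of a neighbourhood of $0$ being the unavoidable initial boundary layer during which the islands relax from their arbitrary configuration to stationarity. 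Convergence of the processes in $\bbD_{\mathscr{P}([0,1])^{M+1}}[0,\infty)$ then follows from convergence of the semigroups on the core $\mathcal{C}$ by the standard functional limit theorem (Lemma~5.1 of Chapter~4 of \cite{Ethier+Kurtz86}), tightness being automatic since $\mathscr{P}([0,1])^{M+1}$ is compact; the limit is $\bG(t)=G_{0}(t)\otimes G_{1}(t)\otimes\cdots\otimes G_{M}(t)$, with $G_{0}$ of generator $\mathcal{G}_{0}$ and, because $G_{i}(0)\sim\tilde{\nu}_{i}$ and $\tilde{\nu}$ is invariant for the fast flow, each $G_{i}$ ($i\geq1$) remaining in its stationary law $\mathrm{DP}(\varpi_{i},\mu_{0}(t))$ --- so the islands carry the Hierarchical Dirichlet Process.

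I expect the main obstacle to be the first step. The naive expansion of $\mathcal{G}^{(N)}$ carries error $o(c_{N_{i}})$, and since $c_{N_{i}}\asymp_{N}\gamma_{i}a_{N}$ this is only $o(a_{N})$, which dwarfs the slow rate $c_{N}$; one must therefore establish that these error terms are genuinely of the smaller orders flagged after \eqref{MOHLECONS2}, and then impose quantitative control on $c_{N}$ relative to $a_{N}$ and $N_{i}$ sharp enough that, after multiplication by $c_{N}^{-1}$, both those contributions and the Euler-step error $\BigO{a_{N}^{2}/c_{N}}$ vanish in the limit. Securing the convergence of the rescaled generators with this uniformity, and checking that the boundary-layer estimate is strong enough to produce $\mathcal{P}F$ rather than $F$ at $t=\delta$, are the delicate points; once they are in hand the cited semigroup-averaging and functional-limit machinery closes the argument.
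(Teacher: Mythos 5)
Your proposal is correct and follows essentially the same route as the paper: the same slow-scale expansion $c_{N}^{-1}\mathcal{G}^{(N)} = (a_{N}/c_{N})\,\mathcal{G} + \mathcal{G}_{0} + o(1)$ (with the migration term $\langle B^{(N)}_{0}f,\mu_{0}\rangle$ vanishing on the mainland, leaving the infinitely-many-alleles generator with rate $\theta$), followed by the Ethier--Kurtz singular-perturbation result with the spectral projection $\mathcal{P}$, exactly the paper's appeal to Corollary 7.7 of Chapter 1 of \cite{Ethier+Kurtz86}. The only small divergence is at the process level: since semigroup convergence holds only for $t$ bounded away from $0$, the paper invokes Corollary 8.9 of Chapter 4 of \cite{Ethier+Kurtz86} rather than Lemma 5.1, with the stationary island initial laws $\tilde{\nu}_{i}$ supplying the continuity of $\mathcal{T}_{0}(t)\mathcal{P}$ at $t=0$ that you correctly single out as the delicate point.
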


\begin{rem}
The heuristic understanding of Proposition \ref{SLOWPROP} is that 
\[
	\cG^{(N)} = c_{N}^{-1} a_{N} \cG + c_{N}^{-1} \cG_{0} + \text{lower order terms}
\]
where $\mathcal{H}\mathcal{P} \equiv 0$.  Now $c_{N}^{-1} a_{N} \to \infty$ as $N \to \infty$, so the first term dominates.  $c_{N}^{-1} a_{N}$ is essentially the rate at which the first term shapes the dynamics of the process, and so as $N$ grows large, the first term, which acts only on the islands, causes them to rapidly approach their equilibrium state (which, as we have already seen, corresponds to projection by $\mathcal{P}$).  The first term, however, has no effect on the mainland.  Moreover, the mainland only changes at the slower rate $c_{N}^{-1}$.  Thus, the first term has already forced the faster terms to equilibrium, and we can assume that they are at equilibrium when we consider the mainland.  Finally, the first two terms completely specify the limit, so what remains can only contribute a higher order correction.  This is essentially the infinite dimensional analogue of the following simple dynamical system:
\begin{gather*}
	\dot{x} = -N a x + f(x,y),\\
	\dot{y} = -\sqrt{N} b y + g(x,y),
\end{gather*} 
for $a,b > 0$.  Using variation of constants, we have 
\begin{gather*}	
	x(t) = e^{-N a t} x(0) + \int_{0}^{t} e^{-N a(t-s)} f(x(s),y(s))\, ds,\\ 
	y(t) = e^{-\sqrt{N} b t} y(0) + \int_{0}^{t} e^{-\sqrt{N} b(t-s)} g(x(s),y(s))\, ds.
\end{gather*}
Thus, provided $f$ and $g$ are bounded, 
\[	
	\int_{0}^{t} e^{-N a(t-s)} f(x(s),y(s))\, ds \leq \frac{1}{N a} \norm{f},
\]
and
\[	
	\int_{0}^{t} e^{-\sqrt{N} b(t-s)} g(x(s),y(s))\, ds \leq \frac{1}{\sqrt{N} b} \norm{g},
\]
so that as $N \to \infty$, we have $x(t) = 0 + \BigO{\frac{1}{N}}$.  We can thus substitute this back into the equation for $y(t)$ to conclude that
\[
	y(t) = e^{-\sqrt{N} b t} y(0) + \int_{0}^{t} e^{-\sqrt{N} b(t-s)} g(0,y(s))\, ds + \BigO{\frac{1}{N}},
\]
(setting $x(t) \equiv 0$ is equivalent to the action of the projection $\mathcal{P}$).  Thus, similarly, $y(t) = 0 + \BigO{\frac{1}{\sqrt{N}}}.$
\end{rem}

\begin{rem}
It is necessary to assume $G_{i}(0) \sim \tilde{\nu}_{i}$ to obtain continuity of  $\mathcal{T}_{0}(t)\mathcal{P}$ at $t = 0$, which in turn is required to ensure weak convergence.  More generally, Proposition \ref{SLOWPROP} tell us that in the slow timescale, the island demes instantaneously jump to their stationary states, and henceforth evolve as stationary processes; see \cite{Katzenberger1991} and \cite{Parsons2012} for more detailed discussions of processes with this behaviour.
\end{rem}

\begin{proof}
Calculations essentially identical to those in Proposition \ref{INTERMEDIATEPROP} show that, when restricted to $\mathcal{C}_{0}$, $c_{N}^{-1} \cG^{(N)} = \cG_{0} + o(c_{N})$, with the primary difference being with the operator $Q^{(N)}_{0}$. Here, 
\[
	Q^{(N)}_{0} =  I + c_{N} B^{(N)}_{0} + c_{N} B + o(c_{N}),
\]
where, as before
\[
	(B^{(N)}_{0}f)(x) = \frac{\varpi_{i}}{2} \left(\langle f, \mu_{0}\rangle - f(x)\right) + o(1),
\]
but now
\[
	(B f)(x) = \frac{\theta}{2} \int_{0}^{1} f(y)\, dy - f(x) = \theta (\langle f, \lambda\rangle - f(x))
\]
(recall that $\lambda$ is Lebesgue measure, $\lambda(dx) = dx$) is of the same asymptotic order.  Moreover, we now only consider terms of the form $\langle Q^{(N)}_{0} f_{0k}, \mu_{0} \rangle$, and
\[	
	\langle B^{(N)}_{0} f_{0k}, \mu_{0} \rangle	 = \frac{\varpi_{i}}{2}  \left(\langle f, \mu_{0}\rangle - \langle f, \mu_{0}\rangle\right) + o(1),
\]
which vanishes in the limit.  Thus, 
\[
	c_{N} \langle Q^{(N)}_{0} f_{0k}, \mu_{0} \rangle -  \langle f_{0k}, \mu_{0} \rangle =  \frac{\theta}{2} \int_{0}^{1} f(y)\, dy - f(x) + o(1) = \theta (\langle f, \lambda\rangle - f(x)) + o(1),
\]
giving the corresponding terms in the generator \eqref{SLOWGEN}.

The first statement is then a restatement of Corollary 7.7, Chapter 1 of \cite{Ethier+Kurtz86}; translating our notation into theirs, we have
\begin{gather*}
	\varepsilon_{N} = c_{N},\\
	\alpha_{N} = c_{N}^{-1} a_{N},\\
	A_{N} = c_{N}^{-1} \cG^{N},
\end{gather*}
$B = \cG$, and $A = \cG_{0}.$  That $\cG_{0}$ generates a strongly continuous semigroup is Theorem 4.1, Chapter 10 of \cite{Ethier+Kurtz86}, which we used previously. 

The second statement is a consequence of Corollary 8.9, Chapter 4, \cite{Ethier+Kurtz86}, where our initial condition ensures continuity of the semigroup $\cT_{0}(t)$ at $t = 0$. 
\end{proof}

\section{Gibbs Sampling for the UNTB-HDP}

\subsection{Observed abundances}

The observed data takes the form of an $N \times S$ matrix of counts $\mathbf{X}$ whose elements $x_{ij}$ are the observed frequency of species $j$ in community sample $i$.  Here, $N$ denotes the total number of communities and $S$ the total number of different species found in those communities.  We will also denote the row vectors of $\mathbf{X}$, which give the observed frequency distribution of species in each individual sample, by $\bar{X}_i$, $i = 1, \dots, N$. The size of each sample is simply $J_i = \sum_{j=1}^S x_{ij}$.

\subsection{Neutral-HDP model}

\begin{align}
\bar{\beta} | \theta &\sim \mbox{Stick} (\theta),\\
\bar{\pi}_i | I_i,\bar{\beta} &\sim \mbox{DP} (I_i,\bar{\beta}),\\
\bar{X}_i | \bar{\pi}_i, J_i & \sim \mbox{MN} (J_i,\bar{\pi}_i).
\end{align}
This model for the observed frequencies can be interpreted as the generation of an infinite dimensional metacommunity distribution $\bar{\beta}$ which is obtained from a stick-breaking or GEM distribution with concentration parameter $\theta$. From this, for each community $i$ we sample using the Dirichlet process a vector of taxa probabilities $\bar{\pi}_i$ which has concentration $I_i$, the immigration rate for that site, and base distribution $\bar{\beta}$. Finally, we sample the observed frequencies for each community $\bar{X}_i$ from $\bar{\pi}_i$ using the multinomial distribution. We also include gamma hyper-priors for $\theta$ and the $I_i$: 
\begin{align}
\theta | \alpha,\zeta &\sim \mbox{Gamma} (\alpha,\zeta), \eqnlabel{theta_priorA} \\
I_i | \eta,\  &\sim \mbox{Gamma} (\eta,\kappa),
\end{align}
where $\alpha,\zeta, \eta \text{ and } \kappa$ are all constants. This completes the definition of our model.

\subsection{Finite dimensional representation}

In any given sample although the potential number of species is infinite we only observe $S$ different types. It is convenient therefore to represent the model in terms of these finite dimensional number of types and one further class corresponding to all unobserved species. We will derive this as the limit of $L$ total types as $L \rightarrow \infty$. We will represent the proportions of the $S$ observed species explicitly as $\beta_k$ with $k = 1,\ldots,S$ and the unrepresented component as $\beta_u = \sum_{k = S + 1}^L \beta_k$. Let $\theta_r = \theta/L$ and $\theta_u = \theta(L - S)/L$, then we will have a Dirichlet prior on $\bar{\beta} ~ \sim \mbox{Dir}(\theta_r,\ldots,\theta_r,\theta_u)$.
In this finite dimensional representation we can also determine the distributions in the local communities:
\begin{equation}
\bar{\pi}_i ~ \sim \mbox{Dir}(I_i\beta_1,\ldots,I_i\beta_S,I_i\beta_u).
\end{equation}
We can then marginalise the local community distributions and derive the probability of the observed frequencies given $\bar{\beta}$:
\begin{equation}
P(\mathbf{X}|\bar{\beta},I_1,\ldots,I_N) = \prod_{i = 1}^N \frac{J_i!}{X_{i1}! \cdots X_{iS}!}\frac{\Gamma(I_i)}{\Gamma(J_i + I_i)} \prod_{j = 1}^S \frac{\Gamma(x_{ij} + I_i\beta_j)}{\Gamma(I_i\beta_j)}.
\eqnlabel{marginallikelihoodA}
\end{equation} 

\subsection{Gibbs sambling}

To devise a Gibbs sampling strategy we need to determine the full conditional distributions of the parameters we wish to sample, $\theta$ and $I_i$, for $i = 1, \dots, N$. Our starting point will be the joint distribution of these parameters and the data, that is, \eqnref{marginallikelihoodA} multiplied by the prior distributions for $\bar{\beta}$, $\theta$ and $I_i$, marginalised over $\bar{\beta}$:
\begin{equation}
P(\theta,I_1,\ldots,I_N, \mathbf{X}) = \int_{\bar{\beta}} P(\mathbf{X}|\bar{\beta},I_1,\ldots,I_N) P(\bar{\beta}|\theta) d\bar{\beta} \mbox{Gamma}(\theta|\alpha,\zeta)\prod_{i=1}^N\mbox{Gamma}(I_i|\eta,\nu).
\eqnlabel{posterior}
\end{equation}
The key to simplifying this expression is to expand the terms $\Gamma(x_{ij} + I_i\beta_j)/\Gamma(I_i \beta_j)$ in \eqnref{marginallikelihood} as polynomials \cite{Teh2006}:
\begin{equation}
\frac{\Gamma(x_{ij} + I_i\beta_j)}{\Gamma(I_i\beta_j)} = \sum_{T_{ij} = 0}^{T_{ij} = x_{ij}} s(x_{ij},T_{ij})(I_i\beta_j)^{T_{ij}},
\eqnlabel{Stirling_polynomials}
\end{equation}
where the coefficients $s(x_{ij},T_{ij})$ are unsigned Stirling numbers of the first kind. We subsitute these sums into \eqnref{posterior} and then introduce the $T_{ij}$ and $\bar{\beta}$ as auxilliary variables to give:
\begin{multline}
Q(\theta,\bar{\beta},I_1,\ldots,I_N,T_{ij}) \propto \left( \prod_{i = 1}^N \frac{J_i!}{X_{i1}! \cdots X_{iS}!}\frac{\Gamma(I_i)}{\Gamma(J_i + I_i)} \prod_{j=1}^S s(x_{ij},T_{ij})(I_i\beta_j)^{T_{ij}} \right) \\
P(\bar{\beta}|\theta)\mbox{Gamma}(\theta|\alpha,\zeta)\prod_{i=1}^N\mbox{Gamma}(I_i|\eta,\nu).
\eqnlabel{augmented}
\end{multline}

\subsubsection{Full conditional for the ancestral states}
 
From~\eqnref{augmented}, we see that the full conditional distribution for the number of ancestors (tables in the Chinese restaurant franchise analogy) of species $j$ in sample $i$ is given by:
\begin{equation}
\eqnlabel{eqn:Tables1}
P(T_{ij} | x_{ij}, I_i, \beta_j) \propto s(x_{ij}, T_{ij}) (I_i \beta_j)^{T_{ij}}.
\end{equation}
The reciprocal of~\eqnref{Stirling_polynomials} is the normalising constant of this probability distribution and thus:
\begin{equation}
\eqnlabel{eqn:Tables2A}
P(T_{ij} | x_{ij}, I_i, \beta_j) = \frac{\Gamma(I_i\beta_j)}{\Gamma(x_{ij} + I_i\beta_j)} s(x_{ij}, T_{ij}) (I_i \beta_j)^{T_{ij}}.
\end{equation}

\subsubsection{Full conditional for the metapopulation}

In their derivation of a posterior sampling scheme for the hierarchical Dirichlet process mixture model using an augmented Chinese restaurant franchise representation, \cite{Teh2006} showed that the full conditional distribution for the metapopulation vector $\bar{\beta}$ was:
\begin{equation}
\bar{\beta} = (\beta_1, \beta_2, \dots, \beta_S, \beta_u) \sim \mbox{Dir} (T_{\cdot 1}, T_{\cdot 2}, \dots, T_{\cdot S}, \theta),
\eqnlabel{eqn:DirichletDistBeta}
\end{equation}
where $T_{\cdot j} = \sum_{i=1}^N T_{ij}$.

\subsubsection{Full conditional for the immigration rates}

To derive the full conditional distribution of each $I_i$ given the other parameters we simply pull out all terms that depend on $I_i$ from \eqnref{augmented}. This gives:
\begin{equation}
P(I_i | T_{ij}) \propto \frac{\Gamma(I_i)}{\Gamma(J_i + I_i)}I_i^{T_{i \cdot}}\mbox{Gamma}(I_i|\eta,\nu),
\end{equation}
where $T_{i \cdot} = \sum_{j=1}^S T_{ij}$.
We can use the auxiliary variable approach of~\cite{EscobarWest95} to develop a Gibbs sampling update for $I_i$, $i = 1, \dots, N$.
Here, for each $i$, we can write:
\begin{equation}
\frac{\Gamma(I_i)}{\Gamma(I_i + J_i)} = \frac{1}{\Gamma(J_i)} \int_0^1 w_i^{I_i} (1 - w_i)^{J_i - 1} \left( 1 + \frac{J_i}{I_i} \right) \mathrm{d} w_i
\end{equation}
(cf. with equation (A.2) of~\cite{Teh2006}).  We now define auxiliary variables $\bar{w} = (w_i)_{i=1}^N$ and $\bar{s} = (s_i)_{i=1}^N$, where each $w_i$ is a variable taking on values in $[0,1]$ and each $s_i$ is a binary $\{0,1\}$ variable, and define the following distribution:
\begin{equation}
q(I_i, \bar{w}, \bar{s}) \propto \prod_{i=1}^N I_i^{\eta - 1 + T_{i \cdot}} e^{- \nu I_i} w_i^{I_i} (1 - w_i)^{J_i - 1} \left( \frac{J_i}{I_i} \right) ^{s_i}
\end{equation}
(cf. with equation (A.3) of~\cite{Teh2006}).  Now marginalising $q$ to $I_i$ gives the desired conditional distribution for $I_i$.  Hence $q$ defines an auxiliary variable sampling scheme for $I_i$.  Given $\bar{w}$ and $\bar{s}$, we have:
\begin{equation}
q(I_i | \bar{w}, \bar{s}) \propto I_i^{\eta - 1 + T_{i \cdot} - s_i} e^{- I_i ( \nu - \log w_i)},
\end{equation}
which is a Gamma distribution with parameters $\eta + T_{i \cdot} - s_i$ and $\nu - \log w_i$ (cf. with equation (A.4) of~\cite{Teh2006}).  Given $I_i$, the $w_i$ and $s_i$ are conditionally independent, with distributions:
\begin{equation}
q(w_i | I_i) \propto w_i^{I_i} (1 - w_i)^{J_i - 1}
\end{equation}
and
\begin{equation}
q(s_i | I_i) \propto \left( \frac{J_i}{I_i} \right) ^{s_i},
\end{equation}
which are $\mbox{Beta} (I_i + 1, J_i)$ and $\mbox{Bernoulli} \left( \frac{J_i}{J_i + I_i} \right)$, respectively (cf. with equations (A.5) and (A.6) of~\cite{Teh2006}).

\subsubsection{Full conditional for the biodiversity parameter}
\label{sec:Gibbs_theta}

A direct consequence of the stick-breaking prior for $\bar{\beta}$ is that the probability of observing $S$ species from a total number of $T = \sum_{i=1}^N \sum_{j=1}^S T_{ij}$ ancestors is given by:
\begin{equation}
P(S | \theta, T) = s(T,S) \theta^S \frac{\Gamma(\theta)}{\Gamma(\theta + T)}
\eqnlabel{eqn:species_distA}
\end{equation}
(cf. with equation (A.7) of~\cite{Teh2006}). The biodiversity parameter $\theta$ does not govern any other aspects of the joint distribution in~\eqnref{augmented}, hence~\eqnref{eqn:species_dist}, along with the prior for $\theta$ in~\eqnref{theta_priorA}, is all that is needed to derive a Gibbs sampling update for $\theta$. The auxiliary variable approach of~\cite{EscobarWest95} can also be applied here, which leads to the following auxiliary variable sampling scheme for $\theta$:
\begin{align}
\theta | \rho, \phi, S &\sim \mathrm{Gamma}( \alpha + S - \rho, \zeta - \log \phi), \\
\rho | \theta, T &\sim \mathrm{Bernoulli} \left( \frac{T}{T + \theta} \right), \\
\phi | \theta, T &\sim \mathrm{Beta} (\theta + 1, T).
\end{align}

\subsection{Results}

In order to examine how well our HDP estimation approach performed in comparison with existing methods \cite{etienne07,Etienne09a,Etienne09b}, we used a combination of simulated data and real data that had been analysed before. Firstly, we generated 1,000 simulated data sets of three local samples with 1,000 individuals each for the eight parameter combinations given in Table~\ref{Synthetic_data_sets1}. Note that the migration probability is simply $m_i = I_i/(I_i + J_i - 1)$. These data sets were generated using the PARI/GP code provided in \cite{etienne07}, which is an urn algorithm based on coalescence theory. We then estimated the parameters using the Gibbs sampling approach based on the HDP approximation and the approximate two stage approach of \cite{Etienne09a}. Tables~\ref{HDP_Approx} and~\ref{Etienne_Approx} gives the means, coefficients of variation and mean absolute deviations from the true values of our approach and Etienne's two stage approximate method, respectively, across the 1,000 data sets for each parameter combination.

\begin{table}[!p]
\begin{center}
\begin{tabular}{|c|c|c|c|c|c|c|c|c|}
\hline
Data set & $J_i$ & $\theta$ & $I_1$ & $I_2$ & $I_3$ & $m_1$ & $m_2$ & $m_3$ \\
\hline
1 & 1000 & 5 & 111	 & 249.75	& 666 & 0.1 & 0.2 & 0.4 \\
\hline
2 & 1000	 & 50 & 111 & 249.75 & 666 & 0.1 & 0.2 & 0.4 \\
\hline
3 & 1000	& 500	& 111	& 249.75	& 666 & 0.1 & 0.2 & 0.4  \\
\hline
4 & 1000	 & 5	& 10.0909	 & 52.5789 & 333 & 0.01 & 0.05 & 0.25 \\
\hline
5 & 1000	 & 50 	& 10.0909	 & 52.5789 & 333 & 0.01 & 0.05  & 0.25 \\
\hline
6 & 1000	 & 500 	& 10.0909	 & 52.5789 & 333 & 0.01 & 0.05  & 0.25    \\
\hline
7 &	1000 &	5 &	1 &	2.002 &	4.012 & 0.001 & 0.002  & 0.004   \\
\hline
8 &	1000 &	50 &	1 &	2.002 &	4.012 & 0.001 & 0.002  & 0.004   \\
\hline
\end{tabular}
\end{center}
\caption{The parameter values chosen for the synthetic neutral model data sets that composed our simulation study.}
\label{Synthetic_data_sets1}
\end{table}

\begin{table}[!p]
\begin{center}
\tabcolsep=0.11cm
\footnotesize
\begin{tabular}{|c|c|c|c|c|c|c|c|c|c|c|c|c|}
\hline
Data set & $\hat{\theta}$ & CV & MAD & $\hat{m_1}$ & CV & MAD & $\hat{m_2}$ & CV & MAD & $\hat{m_3}$ & CV & MAD \\
\hline
1 & 5.4092 & 0.20	& 0.8950 & 0.0934 & 0.29 &	0.0232 & 0.1508 & 0.23 &	0.0522 & 0.2002 & 0.19 &	0.1998 \\
\hline
2 & 51.5476 & 0.09	& 3.9993 & 0.0990 & 0.14 &	0.0114 & 0.1923 & 0.15 &	0.0242 & 0.3262 & 0.12 &	0.0749 \\
\hline
3 & 498.8622 & 0.07	& 25.8993 & 0.0999 & 0.08 & 0.0067 & 0.1982 & 0.07 &	0.0119 & 0.3836 & 0.07 &	0.0252 \\
\hline
4 & 5.4477 & 0.22	& 1.0088 & 0.0110 & 0.42 &	0.0032 & 0.0526 & 0.36 &	0.0144 & 0.1417 & 0.26 &	0.1083 \\
\hline
5 & 51.7504 & 0.12	& 4.8836 & 0.0101 & 0.21 &	0.0017 & 0.0504 & 0.17 &	0.0065 & 0.2211 & 0.16 &	0.0387 \\
\hline
6 & 488.8805 & 0.10	& 40.7537 & 0.0100 & 0.17 & 0.0014 & 0.0503 & 0.10 &	0.0040 & 0.2495 & 0.09 &	0.0171 \\
\hline
7 & 5.3388 & 0.46	& 1.8189 & 0.0014 & 0.96 &	0.0007 & 0.0030 & 0.98 &	0.0015 & 0.0066 & 0.95 &	0.0035 \\
\hline
8 & 55.0994 & 0.43	& 17.2483 & 0.0010 & 0.44 & 0.0004 & 0.0022 & 0.34 &	0.0006 & 0.0043 & 0.29 &	0.0009 \\
\hline
\end{tabular}
\end{center}
\caption{Estimates of $\theta$ and $m_i$ from the various scenarios of simulated data sets of Table~\ref{Synthetic_data_sets1} using the hierarchical Dirichlet process approximation. The values reported are the means, coefficients of variation and mean absolute deviations from the true value of the parameter estimates over 1000 such data sets.}
\label{HDP_Approx}
\end{table}

\begin{table}[!p]
\begin{center}
\tabcolsep=0.11cm
\footnotesize
\begin{tabular}{|c|c|c|c|c|c|c|c|c|c|c|c|c|}
\hline
Data set & $\hat{\theta}$ & CV & MAD & $\hat{m_1}$ & CV & MAD & $\hat{m_2}$ & CV & MAD & $\hat{m_3}$ & CV & MAD \\
\hline
1 &  5.9130 & 0.40 & 1.9880 & 0.1899 & 1.45 & 0.1621 & 0.2763 & 1.14 & 0.2300 & 0.4057 & 1.10 & 0.3260 \\
\hline
2 & 51.9033 & 0.20	& 8.2626 & 0.1071 & 0.44 &	0.0274 & 0.2239 & 0.56 & 0.0776 & 0.4231 & 0.48 & 0.1556 \\
\hline
3 & 507.2382 & 0.12 	& 50.4488 & 0.1006 & 0.09 & 0.0070 & 0.2010 & 0.09 & 0.0138 & 0.4032 & 0.12 & 0.0356 \\
\hline
4 & 6.0710 & 0.45	& 2.1911  & 0.0410 & 3.62 &	0.0356 & 0.1177 & 1.88 & 0.1042 & 0.3086 & 1.11 & 0.2666 \\
\hline
5 & 54.2026 & 0.29	& 12.6540  & 0.0102 & 0.55 & 0.0020 & 0.0580 & 0.83 &	0.0190 & 0.2897 & 0.72 & 0.1440 \\
\hline
6 & 578.4131 & 0.36	& 166.5742  & 0.0100 & 0.18 & 0.0014 & 0.0503 & 0.13 & 0.0048 & 0.2601 & 0.34 & 0.0503 \\
\hline
7 & 9.9517 & 1.41	& 6.5506 & 0.0164 & 7.03 &	0.0158 & 0.0348 & 4.69 & 0.0338 & 0.0473 & 3.88 & 0.0450 \\
\hline
8 & 860.1590  & 7.00	& 824.9333 & 0.0011 & 1.61 & 0.0004 & 0.0022 & 0.73 &	0.0007 & 0.0075 & 6.32 & 0.0045 \\
\hline
\end{tabular}
\end{center}
\caption{Estimates of $\theta$ and $m_i$ from the various scenarios of simulated data sets of Table~\ref{Synthetic_data_sets1} using Etienne's approximate method. The values reported are the means, coefficients of variation and mean absolute deviations from the true value of the parameter estimates over 1000 such data sets.}
\label{Etienne_Approx}
\end{table}

For all parameter combinations considered the HDP approximation outperforms Etienne's approximation as an estimator of $\theta$, as in each case the overall means are closer to the true values and the coefficients of variations and mean absolute deviations from the true values are considerably smaller. The HDP approximation provides a less biased and more reliable estimator of $\theta$ than Etienne's approximation.

A similar pattern is observed with the estimates of the immigration probabilities $m_i$, as for the parameter combinations considered our approach gives lower coefficients of variation and mean absolute deviations from the true value than Etienne's approximate method. Both approximations break down when the immigration rate $I$ is significantly larger than the fundamental biodiversity parameter $\theta$ (for example, see the estimates of $m_3$ for synthetic data sets 1-5 in Tables~\ref{HDP_Approx} and~\ref{Etienne_Approx}), but in different ways. Our method underestimates the immigration probability $m$ in such cases, but the standard deviation around that estimate remains low, and thus our estimator for $m$ is biased when $I > \theta$, but as this bias is consistent it would be possible to correct for it. On the other hand, Etienne's approximate approach gives an overall mean over the 1,000 simulated data sets that is much closer to the true value in such a case than our method does. However, the variability around Etienne's approximate estimate of $m$ is much higher because the algorithm often converges to an immigration probability of 1, even when the true value is much lower. It is also worth noting that Etienne's approximate method also breaks down badly for data sets 7 and 8 where the immigration probabilities are very low, whereas the HDP approximation copes much better in such scenarios. Thus, we conclude that the HDP approximation is a better estimator of the neutral model's parameters than Etienne's approximation unless $I >> \theta$ and the immigration probabilities are close to 1. 

In Table~\ref{Computation_time}, we present the average times in seconds of Etienne's approximate method using the code given in \cite{Etienne09a} and PARI/GP's default settings, and our Gibbs sampling approach coded in C++ when it was run for 50,000 iterations with half of these being conservatively discarded as burn-in. Under these settings, for all but one of the simulated data set scenarios of Table~\ref{Synthetic_data_sets1}, Etienne's approximate method is two to three times faster than our approach. However, we are being very conservatie with sample number and equivalent results could be achieved with as little at 10,000 iterations when the two methods would be of comparable speed.

\begin{table}[!p]
\begin{center}
\begin{tabular}{|c|c|c|c|}
\hline
Data set & Etienne's approximation & HDP approximation \\
\hline
1 & 13.8583 & 40.6223  \\
\hline
2 & 21.5615 & 41.1254  \\
\hline
3 & 208.6595 & 41.5881 \\
\hline
4 & 14.9588 & 41.8532  \\
\hline
5 & 14.9767 & 40.6765  \\
\hline
6 & 27.3442 & 42.4084  \\
\hline
7 & 20.0091 & 56.1613  \\
\hline
8 & 17.8649 & 57.5658 \\
\hline
\end{tabular}
\end{center}
\caption{Average time in seconds that Etienne's approximate method and the HDP approximation took to run on the various scenarios of simulated data sets of Table~\ref{Synthetic_data_sets1}. Note that the HDP approximation was run for 50,000 iterations and half of these were conservatively discarded as burn-in.}
\label{Computation_time}
\end{table}

We were unable to replicate these results using Etienne's `exact' maximum likelihood method, so instead we quote those that he gave in a similar simulation study \cite{Etienne09b} in Table~\ref{Etienne_Exact}. We see that Etienne's `exact' method slightly outperforms the HDP approximation as an estimator of $\theta$, as although the coefficients of variation are broadly similar, the overall means are generally closer to their true values and thus Etienne's `exact' method is less biased for this parameter. Regarding the estimation of immigration probabilities, the results are comparable when $\theta <= I$. When $\theta > I$, there is a tendency for Etienne's `exact' method to overestimate the immigration probability, but not as badly as the HDP approximation underestimates it. The advantage of the HDP approximation is that our code is easier to implement than Etienne's `exact' method's PARI/GP algorithm, it is much faster, and our approach can handle the large data sets often encountered in microbiomics.

\begin{table}[!p]
\begin{center}
\begin{tabular}{|c|c|c|c|c|c|c|c|c|c|c|c|c|}
\hline
Data set & $\hat{\theta}$ & CV  & $\hat{m_1}$ & CV  & $\hat{m_2}$ & CV & $\hat{m_3}$ & CV  \\
\hline
1 & 4.9689 & 0.21 & 0.1119 & 0.44 & 0.2353 & 0.49	& 0.4727 & 0.50 \\
\hline
2 & 49.9838 & 0.10	& 0.1022 & 0.16 & 0.2041	& 0.16 & 0.4105 & 0.18 \\ 
\hline
3 & 501.5142 & 0.07 & 0.1005 & 0.08 & 0.2009 & 0.08	& 0.4007 & 0.08 \\ 
\hline
4 & 4.8982 & 0.25 & 0.0108 & 0.43 & 0.0572 & 0.46	& 0.3658 & 0.70 \\ 
\hline
5 & 49.9892 & 0.12 & 0.0103 & 0.21	& 0.0513 & 0.16 & 0.2643	& 0.25 \\ 
\hline
6 &  504.0792 & 0.11 & 0.0101	& 0.17 & 0.0504 & 0.11 & 0.2521 & 0.09 \\ 
\hline
7 & 5.0388 & 0.45 & 0.0012 & 0.67	& 0.0027 & 1.27 & 0.0066 & 4.85 \\
\hline
8 & 56.0378 & 0.55	& 0.0010 & 0.42 & 0.0020	& 0.35 & 0.0042 & 0.30 \\
\hline
\end{tabular}
\end{center}
\caption{Estimates of $\theta$ and $m_i$ from the various scenarios of simulated data sets of Table~\ref{Synthetic_data_sets1} using Etienne's `exact' maximum likelihood method. The values reported are the means and coefficients of variation over 1000 such data sets, and were obtained from \cite{Etienne09b}.}
\label{Etienne_Exact}
\end{table}

As an example of how the methods compare on real data, we reanalysed the tropical tree data set used as an example in \cite{etienne07,Etienne09a,Etienne09b}. The data consists of three forest plots in Panama called Barro Colorado Island (50 ha), Cocoli (4 ha) and Sherman (5.96 ha), which lie along a precipitation gradient \cite{Condit}. Table~\ref{Panama_tree_results} shows the results of the parameter estimation for Etienne's three methods and our HDP approach. We see that in this case the results from the HDP approximation closely match Etienne's `exact' method, while his approximate method overestimates $\theta$ and underestimates the immigration rates. The matching results of our approach and Etienne's `exact' method is unsurprising as in this case $\theta >> I_i$. 

\begin{table}[!p]
\centering
\begin{tabular}{|c|c|c|c|c|}
\hline
Method & $\theta$ & $I_{BCI}$ & $I_C$ & $I_S$ \\
\hline
Etienne fixed $I$ & 259 & 44.2 & 44.2 & 44.2 \\ 
\hline
Etienne approx & 342 & 53.7 & 30.8 & 33.9 \\ 
\hline
Etienne `exact' & $235 \pm 23$ & $65.3 \pm 5.9$ & $31.5 \pm 3.9$ & $35.7 \pm 3.9$ \\
\hline
HDP approx & $231 \pm 22$ & $65.5 \pm 5.9$ & $31.6 \pm 3.8$ & $35.8 \pm 3.9$ \\
\hline
\end{tabular}
\caption{Neutral parameter estimates for samples from three local tree communities (Sherman, BCI and Cocoli) in the Panama Canal Zone using Etienne's approaches and the hierarchical Dirichlet process approximation. Standard errors are given for the methods where they are available.}
\label{Panama_tree_results}
\end{table}

\bibliography{UNTB-HDP}

\end{document}